\newcommand{\TwoOrOneColumn}{one}
\newtheorem{theorem}{Theorem}[section]
\newtheorem{lemma}[theorem]{Lemma}
\newtheorem{meta-theorem}[theorem]{Meta-Theorem}
\newtheorem{remark}[theorem]{Remark}
\newtheorem{corollary}[theorem]{Corollary}
\newtheorem{proposition}[theorem]{Proposition}
\newtheorem{observation}[theorem]{Observation}
\newtheorem{definition}[theorem]{Definition}
\definecolor{darkgreen}{rgb}{0,0.5,0}
\Crefname{remark}{Remark}{Remarks}
\Crefname{observation}{Observation}{Observations}
\algnewcommand\algorithmicswitch{\textbf{switch}}
\algnewcommand\algorithmiccase{\textbf{case}}
\newcommand{\eps}{\varepsilon}
\newcommand{\congest}{$\mathsf{CONGEST}$\xspace}
\newcommand{\local}{$\mathsf{LOCAL}$\xspace}
\newcommand{\poly}{\operatorname{\text{{\rm poly}}}}
\newcommand{\aglp}{$2^{O\left(\sqrt{\log n \log \log n}\right)}$}
\newcommand{\ps}{$2^{O\left(\sqrt{\log n}\right)}$}
\renewcommand{\paragraph}[1]{\vspace{0.15cm}\noindent {\bf #1}:}
\newcommand\blfootnote[1]{%
  \begingroup
  \renewcommand\thefootnote{}\footnote{#1}%
  \addtocounter{footnote}{-1}%
  \endgroup
}
\newcommand{\FullOrShort}{full}
  \newcommand{\fullOnly}[1]{#1}
  \newcommand{\shortOnly}[1]{}
    \newcommand{\shortOnly}[1]{#1}
    \newcommand{\fullOnly}[1]{}
\begin{document}
\date{}
\title{Polylogarithmic-Time Deterministic Network Decomposition \\ and Distributed Derandomization }
\author{V\'{a}clav Rozho\v{n} \\
\small ETH Zurich \\
\small rozhonv@student.ethz.ch
\and
Mohsen Ghaffari\\
\small ETH Zurich \\
\small ghaffari@inf.ethz.ch
}
\maketitle

\begin{abstract}
\fullOnly{\blfootnote{This project has received funding from the European Research Council (ERC) under the European Union's Horizon 2020 research and innovation programme (grant agreement No. 853109).}}
We present a simple polylogarithmic-time deterministic distributed algorithm for network decomposition. This improves on a celebrated $2^{O(\sqrt{\log n})}$-time algorithm of Panconesi and Srinivasan [STOC'92] and settles a central and long-standing question in distributed graph algorithms. It also leads to the first polylogarithmic-time deterministic distributed algorithms for numerous other problems, hence resolving several well-known and decades-old open problems, including Linial's question about the deterministic complexity of maximal independent set [FOCS'87; SICOMP'92]---which had been called the most outstanding problem in the area. 

\bigskip
The main implication is a more general distributed derandomization theorem: Put together with the results of Ghaffari, Kuhn, and Maus [STOC'17] and Ghaffari, Harris, and Kuhn [FOCS'18], our network decomposition implies that 
\begin{center}
$\mathsf{P}$-$\mathsf{RLOCAL}$ = $\mathsf{P}$-$\mathsf{LOCAL}$.  \end{center}

 \noindent That is, for any problem whose solution can be checked deterministically in polylogarithmic-time, any polylogarithmic-time randomized algorithm  can be derandomized to a polylogarithmic-time deterministic algorithm. Informally, for the standard first-order interpretation of efficiency as polylogarithmic-time, distributed algorithms do not need randomness for efficiency.

\bigskip
By known connections, our result leads also to substantially faster \emph{randomized} distributed algorithms for a number of well-studied problems including $(\Delta+1)$-coloring, maximal independent set, and Lov\'{a}sz Local Lemma, as well as   \emph{massively parallel} algorithms for $(\Delta+1)$-coloring. 
\end{abstract}
\setcounter{page}{0}
\thispagestyle{empty}
% \listoftodos
\newpage

\section{Introduction}
We present a polylogarithmic-time deterministic distributed algorithm for network decomposition. This leads to substantially faster \emph{deterministic} and \emph{randomized} algorithms for many well-studied problems in distributed graph algorithms, as well as a general and efficient distributed derandomization theorem. These resolve several central open problems in the area.

\subsection{Background and State of the Art}
\paragraph{Model} We work with the standard model of distributed computing called \local~\cite{linial1987LOCAL, linial92}: The communication network is abstracted as an $n$-node graph $G=(V, E)$, with one processor on each node $v\in V$ which has a unique $\Theta(\log n)$-bit identifier. Communication happens in synchronous rounds, where per round each node can send one message, of potentially unbounded size, to each neighbor. In the \congest variant of the model\cite{peleg00}, each message can have $O(\log n)$ bits. 
At the beginning, each processor knows only its neighbors, and some estimates of global parameters, e.g., a polynomial upper bound on $n$. At the end, each processor should know its own part of the output, e.g., its color in the vertex coloring problem.

\medskip

\paragraph{State of the art} Prior to this work, the state of the art in distributed graph algorithms exhibited a significant (often nearly-exponential) gap between randomized and deterministic distributed algorithms. This gap constituted one of the foundational and long-standing questions in distributed algorithms. A well-known special case is an open question of Linial\cite{linial1987LOCAL, linial92} about the maximal independent set (MIS) problem: 
\begin{center}
\ifthenelse{\equal{\TwoOrOneColumn}{one}}{
    \begin{minipage}{0.8\textwidth}
}{
    \begin{minipage}{0.4\textwidth}
}

``\emph{can it} [MIS] \emph{always be found} [deterministically] \emph{in polylogarithmic time?}"
\end{minipage}
\end{center}
\medskip

This has been described as ``probably the most outstanding open problem in the area''\cite[Open Problem 11.2]{barenboimelkin_book}. Prior to our work, the best known deterministic algorithm had a round complexity of $2^{O(\sqrt{\log n})}$, by Panconesi and Srinivasan\cite{panconesi-srinivasan}. This should be contrasted with the beautiful $O(\log n)$-time randomized algorithms of Luby\cite{luby86} and Alon, Babai, and Itai\cite{alon86}. 

There is an abundance of similar open questions about obtaining polylogarithmic-time deterministic algorithms for other graph problems that admit polylogarithmic-time randomized algorithms; this includes $(\Delta+1)$-coloring, Lov\'{a}sz Local Lemma, defective colorings, hypergraph matching, sparse neighborhood covers, etc. Indeed, in the Conclusion and Open Problems chapter of their 2013 book, Barenboim and Elkin\cite[Chapter 11]{barenboimelkin_book} write:
\begin{center}

\ifthenelse{\equal{\TwoOrOneColumn}{one}}{
    \begin{minipage}{0.8\textwidth}
}{
    \begin{minipage}{0.4\textwidth}
}

\medskip
``\emph{Perhaps the most fundamental open problem in this field is to understand the power and limitations of randomization.}"
\medskip
\end{minipage}
\end{center}

\noindent They then continue to ask for a general derandomization technique:

\begin{center}
\ifthenelse{\equal{\TwoOrOneColumn}{one}}{
    \begin{minipage}{0.8\textwidth}
}{
    \begin{minipage}{0.4\textwidth}
}
\medskip
\emph{\textbf{Open Problem 11.1} Develop a general derandomization technique for the distributed message-passing model.}
\medskip
\end{minipage}
\end{center}

\noindent This generic open problem is followed by 16 concrete open problems, 7 of which ask for polylogarithmic-time (sometimes just called efficient) deterministic algorithms for various graphs problems that are known to admit efficient randomized algorithms. We note that a few of these concrete open problems were well-known, and they had been mentioned throughout the literature since the 1990s.

\subsection{Our Contribution}
In this paper, we answer all the concrete questions mentioned above by providing the first polylogarithmic-time deterministic algorithms for them. In fact, we show a more general distributed derandomization theorem, which proves the following:
\begin{theorem}[\textbf{LOCAL Derandomization Theorem}]\label{thm:derandomization} 
We have $$\mathsf{P}\textit{-}\mathsf{LOCAL} = \mathsf{P}\textit{-}\mathsf{RLOCAL}.$$ Here, $\mathsf{P}$-$\mathsf{LOCAL}$ denotes the family of {locally checkable problems}\footnote{To make our derandomization theorem stronger and more widely applicable, we use a relaxed version of local checkability: we call a problem \emph{locally checkable} if its solution can be checked deterministically in $poly(\log n)$ rounds, such that if the solution is incorrect, at least one node knows. Thus, each constraint of the problem spans a neighborhood of at most $poly(\log n)$ rounds. Notice that this readily includes problems such as MIS, coloring, etc. For a precise definition of locally checkable problems (but bounded to constant radius), we refer to \cite{naor95}.} that can be solved by deterministic algorithms in $\poly(\log n)$ rounds of the \local model in $n$-node graphs and $\mathsf{P}$-$\mathsf{RLOCAL}$ denotes the family of \emph{locally checkable problems} that can be solved by randomized algorithms in $\poly(\log n)$ rounds of the \local model, with success probability $1-1/n$.
\end{theorem}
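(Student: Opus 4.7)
The plan is to combine the new polylogarithmic-time deterministic network decomposition algorithm (the main technical contribution proved later in the paper) with the distributed derandomization framework developed in the cited works of Ghaffari--Kuhn--Maus and Ghaffari--Harris--Kuhn. The inclusion $\mathsf{P}\textit{-}\mathsf{LOCAL} \subseteq \mathsf{P}\textit{-}\mathsf{RLOCAL}$ is immediate, since any deterministic algorithm is a randomized one. All work is in proving the reverse inclusion.

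For the reverse inclusion, I would fix an arbitrary locally checkable problem $\Pi$ and a randomized $\poly(\log n)$-round algorithm $A$ solving $\Pi$ with success probability $1-1/n$. First I would standardly boost the success probability so that each local constraint is violated with probability at most $1/n^2$, costing only an extra $\poly(\log n)$ factor. Let $T = \poly(\log n)$ denote the combined radius of $A$ and of the local checker, and for each node $v$ let $\phi_v \in \{0,1\}$ indicate whether the constraint that $v$ can check is violated; by linearity and union bound, $\E\!\left[\sum_v \phi_v\right] \leq 1/n$, and the output is correct iff $\sum_v \phi_v = 0$.

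The core of the proof is to deterministically fix the random bits used by $A$ so that the sum $\sum_v \phi_v$ does not exceed its (tiny) expectation. For this, I would invoke our new polylogarithmic-time deterministic network decomposition to compute, in $\poly(\log n)$ rounds, a partition of $G$ into $O(\log n)$ color classes of clusters, each cluster of diameter $\poly(\log n)$ and with separation strictly greater than $T$ from other clusters of the same color. I then process the color classes one by one, in each class processing all clusters in parallel. Within a single cluster $C$, I apply the method of conditional expectations over the random bits assigned to the nodes of $C$, using the pessimistic estimator from the cited prior work, which can be aggregated over $C$ in $\poly(\log n)$ rounds because each $\phi_v$ is a function of bits within a $T$-ball and clusters of the same color have disjoint $T$-neighborhoods. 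This step ensures the conditional expectation of $\sum_v \phi_v$ does not grow after the bits of $C$ are fixed. Because every color class costs $\poly(\log n)$ rounds and there are $O(\log n)$ classes, the total deterministic round complexity is $\poly(\log n)$.

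Once all bits are fixed the conditional expectation equals the actual value, which therefore satisfies $\sum_v \phi_v \leq 1/n < 1$, forcing $\sum_v \phi_v = 0$ and hence a correct solution to $\Pi$. The main obstacle, and the reason prior work left this theorem open, is the existence of a locally computable pessimistic estimator that allows the within-cluster conditional-expectation step to run in $\poly(\log n)$ rounds; this is precisely what the derandomization machinery of \cite{ghaffari2017deterministic} and \cite{ghaffari2018derandomizing} provides for arbitrary $\poly(\log n)$-round randomized \local algorithms, provided one has a $\poly(\log n)$-round network decomposition to orchestrate the process. Producing such a decomposition deterministically is exactly the obstruction the rest of this paper resolves.
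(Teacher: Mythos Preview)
Your proposal is correct and follows essentially the same route as the paper: compute a network decomposition of a suitable power of $G$, process the color classes one by one, and within each cluster fix the random bits via the method of conditional expectations applied to $\sum_v \phi_v$, concluding that an integer-valued sum with expectation below $1$ must vanish. One small clarification: in the \local model no pessimistic estimator is needed---each cluster center simply gathers the full $O(T)$-hop topology and computes \emph{exact} conditional expectations---so the sole obstruction (as you yourself note in your final sentence) was the deterministic $\poly(\log n)$-round network decomposition, not the estimator machinery; also, the separation between same-color clusters should be more than $2T$ rather than $T$, so that no single flag $\phi_v$ depends on the randomness of two clusters processed in parallel.
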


Informally, if we follow the standard of viewing a $\poly(\log n)$-round algorithm as \emph{efficient}\footnote{This is similar to viewing a centralized algorithm with $\poly(n)$ time complexity or a parallel (PRAM model) algorithm with $\poly(\log n)$ time complexity as efficient.} (see e.g.\cite{barenboimelkin_book, linial92, panconesi-srinivasan}), \Cref{thm:derandomization} tells us that distributed algorithms in the \local model do not need randomness for efficiency. This holds for any \emph{locally checkable problem}, i.e., any problem for which the solution can be checked efficiently deterministically\footnote{We remark that the vast majority of the problems studied in the \local model throughout the literature are locally checkable. Moreover, such a restriction to locally checkable problems is necessary and the statement cannot hold for arbitrary problems, for trivial reasons: e.g., marking arbitrary $\Theta(\sqrt{n})$ nodes can be done in zero rounds by randomized algorithms but can be shown to require $\Omega(\sqrt{n})$ rounds for any deterministic algorithm\cite{ghaffari2018derandomizing}.}.
\medskip

At the heart of our derandomization result, and as the main novelty of this paper, we provide the first $\poly(\log n)$-round deterministic algorithm for network decomposition:

\begin{theorem} [\textbf{Network Decomposition Algorithm}]\label{thm:decomp-informal}
There is a deterministic distributed algorithm that in any $n$-node network $G=(V, E)$, in $\poly(\log n)$ rounds of the \local model, partitions the vertices into $O(\log n)$ disjoint color classes $V_1$, \dots, $V_{O(\log n)}$, such that in the subgraph $G[V_i]$ induced by the vertices of each color $i$, each connected component has diameter $O(\log n)$.
\end{theorem}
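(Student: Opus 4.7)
My plan is to follow the classical peel-off framework: I iteratively construct the color classes $V_1, V_2, \ldots$ so that in each iteration the current class absorbs at least a constant fraction of the still-uncolored vertices. Since $O(\log n)$ halvings exhaust $V$, this automatically yields $O(\log n)$ classes. All the work lies in a \emph{single-class subroutine}: given $U \subseteq V$ of uncolored vertices, deterministically and in $\poly(\log n)$ rounds, partition a set $V_{\mathrm{cur}} \subseteq U$ with $|V_{\mathrm{cur}}| \geq |U|/2$ into vertex-disjoint clusters such that (i) each cluster has strong diameter $O(\log n)$ in $G$, and (ii) clusters are pairwise non-adjacent in $G$, so the whole family can legitimately share one color.

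The subroutine itself will run for $b = \Theta(\log n)$ \emph{bit-phases}, one per bit of the $\Theta(\log n)$-bit identifiers. Initially every vertex of $U$ is a singleton cluster whose leader is the vertex itself. In phase $k$, I look only at bit $k$ of each cluster's leader identifier and partition the currently alive clusters into "$0$-clusters" and "$1$-clusters"; WLOG phase $k$ processes $0$-clusters. Each $0$-cluster $C$ checks its $G$-neighborhood: if it touches a $1$-cluster whose leader agrees with $C$'s leader on bits $1,\ldots,k-1$, then $C$ either merges into one such neighboring $1$-cluster (extending that cluster's radius by an additive $O(1)$) or else all vertices of $C$ are \emph{killed}, that is, removed from the current class and deferred to the next color class. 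The choice between merging and killing will be made by a local rule discussed below. The invariant I maintain is that after phase $k$, any two surviving clusters whose leaders agree on the first $k-1$ bits but differ on bit $k$ are non-adjacent in $G$. After all $b$ bit-phases the leaders of any two distinct surviving clusters differ in some bit, so all surviving clusters are pairwise non-adjacent, giving (ii). Since each phase enlarges a surviving cluster's radius by $O(1)$ and there are $O(\log n)$ phases, we obtain (i); and since intra-cluster communication in phase $k$ needs only $O(k)$ rounds, the total round complexity is $\poly(\log n)$.

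The heart of the matter, and what I expect to be the main obstacle, is to choose the merge-or-kill rule so that in each phase at most a $1/(2b)$ fraction of currently alive vertices is killed; then the total loss over $b$ phases is at most $|U|/2$, as needed. The natural rule is: a $0$-cluster $C$ gets killed only when every adjacent $1$-cluster is already "overloaded", meaning the set of $0$-clusters that plan to merge into it is already so large that absorbing $C$ would jeopardize the $1$-cluster's own survival (because, in a later phase, another ancestor with a conflicting bit could force it out). Making this precise requires simultaneously choosing, for each $1$-cluster, a bounded-size set of neighbors it absorbs, and then setting up a charging argument that assigns each killed vertex to an exclusive "witness region" of an overloaded $1$-cluster of size at least $\Omega(b)$ times larger. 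Summed over all $1$-clusters, this gives the desired per-phase loss bound of $|U_{\text{alive}}|/(2b)$.

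Assembling the pieces, the outer loop calls the subroutine $O(\log n)$ times, each of the $O(\log n)$ inner bit-phases runs in $\poly(\log n)$ rounds of flooding inside clusters of current diameter $O(\log n)$, giving an overall $\poly(\log n)$-round deterministic algorithm that satisfies the statement of \Cref{thm:decomp-informal}.
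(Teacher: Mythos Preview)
Your outer peel-off framework and the idea of running $b=\Theta(\log n)$ bit-phases keyed on the leader identifiers are exactly what the paper does. But the single-class subroutine as you describe it has a genuine gap at the point you yourself flag as ``the heart of the matter,'' and the gap is more structural than just tuning a merge-or-kill rule.

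You propose that in phase $k$ a whole $0$-cluster $C$ either merges into a neighbouring $1$-cluster or dies, and you assert this ``extend[s] that cluster's radius by an additive $O(1)$.'' That claim is false once the phases start stacking: after $k-1$ phases a $0$-cluster already has radius $\Theta(k)$ (by your own accounting), so swallowing it whole increases the $1$-cluster's radius by $\Theta(k)$, not $O(1)$. Iterated over $b$ phases this gives radius $2^{\Theta(b)}=\poly(n)$, not $O(\log n)$. Worse, a single $1$-cluster may be adjacent to many $0$-clusters; absorbing all of them blows up the radius, while absorbing only a bounded number forces you to kill the rest, and there is no charging argument that keeps the killed mass below $|U|/(2b)$ per phase when the objects being killed are whole clusters of uncontrolled size. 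The ``overloaded'' heuristic you sketch does not close this: you would need each killed $0$-cluster to be charged to $\Omega(b)$ times its own size inside some $1$-cluster, and nothing forces $1$-clusters to be large compared to their $0$-neighbours.

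The paper's fix is to abandon cluster-level merging altogether on the red ($0$-) side: red clusters are \emph{dissolved} and each red \emph{vertex} acts independently, proposing to an adjacent blue ($1$-) cluster. A blue cluster then grows hop by hop for up to $R=O(b\log n)=O(\log^2 n)$ steps within a single phase, accepting its proposing boundary whenever that boundary is at least a $1/(2b)$ fraction of its current size, and killing the boundary (and freezing) the first time it is not. This is ball carving run from each blue cluster simultaneously: the $(1+1/(2b))$ growth factor caps the number of growth steps at $O(b\log n)$, and the stopping rule guarantees that each blue cluster kills at most a $1/(2b)$ fraction of its own final size, which sums to the required per-phase bound. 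The price is that one phase costs $O(\log^2 n)$ hops of growth, so the resulting clusters have radius $O(\log^3 n)$, and only \emph{weak} diameter at that (the Steiner trees pass through dead vertices). Getting down to strong diameter $O(\log n)$, as the theorem states, is a separate post-processing step: apply the weak-diameter decomposition to $G^{\Theta(\log n)}$ and then run sequential ball carving inside each cluster. Your plan skips both the vertex-level migration and this conversion, and neither can be omitted.
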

We prove \Cref{thm:decomp-informal} in \Cref{sec:algorithm}.
We note that prior to our work, the best known deterministic network decomposition had a round complexity of \ps, due to a celebrated work of Panconesi and Srinivasan\cite{panconesi-srinivasan}. 
This itself was an improvement on a \aglp-round distributed algorithm, presented by Awerbuch et al.\cite{awerbuch89}, in their pioneering work that defined network decomposition and showed its applications for distributed algorithms.

%In \Cref{sec:appendix_relation} \marginpar{!}we sketch the proof of their results, as well as an alternative new way of getting an algorithm of round complexity \aglp. 
%Then we compare these approaches with the approach we use to prove \Cref{thm:decomp-informal}. 

Our derandomization result, stated in \Cref{thm:derandomization}, follows by putting our new network decomposition, as stated in \Cref{thm:decomp-informal}, together with the derandomization framework developed by Ghaffari, Harris, and Kuhn~\cite{ghaffari2018derandomizing} and Ghaffari, Kuhn, and Maus~\cite{ghaffari2017complexity}. 

\paragraph{Implications} Through known connections, this derandomization leads to better \emph{deterministic} and \emph{randomized} distributed algorithms for a long list of well-studied problems. 
A sampling of the end-results includes (I) $\poly(\log n)$-round deterministic algorithms for maximal independent set, $\Delta+1$ coloring, the Lov\'{a}sz Local Lemma, and defective coloring, as well as (II) a $\poly(\log\log n)$-time randomized $\Delta+1$ coloring \cite{chang2018optimal}, a $\poly(\log\log n)$-time randomized algorithm for Lov\'{a}sz Local Lemma in constant degree graphs \cite{ghaffari2018derandomizing}, and an automatic complexity speed-up theorem from $o(\log n)$ to $\poly(\log\log n)$ in constant-degree graphs, for \emph{any} problem whose solution can be checked in $O(1)$ rounds\cite{chang2017time}. We discuss the implications in \Cref{sec:implications}. 

\subsection{An Overview of Our Network Decomposition Method}
Our network decomposition algorithm is surprisingly simple. Next, we briefly recall the previous methods\cite{panconesi-srinivasan, awerbuch89} and then give a quick outline of our construction:

\paragraph{A recap on the previous constructions} In the paper that introduced the concept of network decomposition, Awerbuch et al.~\cite{awerbuch89} provide a deterministic algorithm that computes a network decomposition with clusters of diameter $2^{O(\sqrt{\log n \log\log n})}$, which are colored with $2^{O(\sqrt{\log n \log\log n})}$ colors, in $2^{O(\sqrt{\log n \log\log n})}$ rounds. In a nutshell, their algorithm is based on a hierarchical clustering. We start with each node being its own cluster. Over time, iteratively, we merge clusters together, in a manner that each final cluster has $2^{O(\sqrt{\log n \log\log n})}$ neighboring clusters, and thus the clusters can be easily colored with $2^{O(\sqrt{\log n \log\log n})}$ colors. Per iteration, we locally \emph{group} clusters that have a ``high" degree --- more than $2^{O(\sqrt{\log n \log\log n})}$ neighboring clusters --- around some selected clusters (chosen using a ruling set algorithm). Then, in each group, we merge all the clusters into one cluster. The center clusters are chosen using a \emph{ruling set} procedure that ensures that the center clusters are somewhat far apart (concretely, at least 3 hops, in the \emph{cluster graph} that connects any two clusters that have adjacent nodes), while any high-degree cluster has a center within a small distance (concretely, $O(\log n)$ hops, in the cluster graph). Due the separation and the high degrees, each merge is formed by grouping together at least $2^{\Theta(\sqrt{\log n \log\log n})}$ clusters. Hence, we finish in $O(\sqrt{\log n/\log\log n})$ iterations. Per iteration, each cluster has diameter at most $O(\log n)$ times the diameter of the previous clusters, and thus within $O(\sqrt{\log n/\log\log n})$ iteration, each cluster diameter grows to be at most $2^{O(\sqrt{\log n \log\log n})}$. The algorithm of Panconesi and Srinivasan \cite{panconesi-srinivasan} follows the same outline but replaces the ruling set procedure with a maximal independent set procedure (of a constant power of the cluster graph), computed by a clever and careful recursive idea. This replaces the $O(\log n)$ growth factor in the diameter per iteration with $O(1)$. Then, re-optimizing the parameters to take advantage of this change improves the bounds to give a network decomposition with clusters of diameter $2^{O(\sqrt{\log n})}$, which are colored with $2^{O(\sqrt{\log n})}$ colors, in $2^{O(\sqrt{\log n})}$ rounds. 

\fullOnly{In \Cref{{sec:appendix_relation}}, we provide a new method for constructing a network decomposition, which also achieves such a round complexity of $2^{O(\sqrt{\log n \log\log n})}$. We then also discuss how both of these twp approaches, which appear to be fundamentally different, cannot go below the complexity of $2^{O(\sqrt{\log n})}$, even on a particular simple and well-structured graph.} 

\paragraph{Our construction, in a nutshell} The main part of our result is to obtain 
a network decomposition with clusters of diameter $poly(\log n)$, which are colored with $poly(\log n)$ colors, in $poly(\log n)$ rounds. We provide a surprisingly simple algorithm for this. We can later transform this construction to improve the first two parameters to $O(\log n)$. Similar to the previously outlined methods, our algorithm also forms the clusters iteratively. However, unlike the hierarchical clusterings of \cite{awerbuch89,panconesi-srinivasan}---where per iteration each new cluster is formed by merging a few of the nearby clusters of the previous iterations---during our construction, we \emph{release} some clusters and allow each of their individual nodes to make an independent decision on which adjacent cluster to join; some of these nodes can also remain in their initial cluster, or \emph{die}. Throughout the process, we ensure that at most a constant fraction of vertices die. Thus, via $O(\log n)$ repetition, each time by resurrecting the dead vertices and repeating the process on them, we can cluster all vertices. The decision of joining a neighboring cluster or dying is done in a manner that balances a few desirable properties, as we outline next.

The clustering process has $b$ phases, where $b=O(\log n)$ denotes the number of bits in the identifiers. We start with a trivial clustering where each (remaining) vertex is one cluster, on its own. Each cluster is identified with the node identifier of its center vertex. We ensure that by the end of the $i^{th}$ phase, each two neighboring clusters have identifiers that agree in the $i$ least significant bits. In the $(i+1)^{th}$ phase, clusters are categorized into \emph{red} or \emph{blue} clusters, based on the $(i+1)^{th}$ least significant bit (while all clusters of each connected component agree on the $i$ least significant bits, by the construction's induction). Then, we \emph{release} red clusters: their vertices might join one of the neighboring blue clusters, die, or remain in this red cluster if they have no neighboring blue cluster. On the other hand, each blue cluster retains all of its vertices and can also grow by accepting some of neighboring red vertices. This growth happens step by step, and hop by hop. 
Per step, each red node arbitrarily chooses a neighboring blue cluster to join, and each blue cluster checks the number of directly neighboring red vertices that want to join it. If they are at least a $1/(2b)$ fraction of the size of this blue cluster, they are accepted to join and they become blue. In this case, the cluster grows considerably in size, but also at most one hop in radius. But we cannot have more than $O(b\log n)$ such growth steps; beyond that the cluster would have more than $n$ vertices. On the other hand, if the fraction is less than a $1/(2b)$ fraction of the size of the blue cluster, all those red vertices die, and this blue cluster stops its growth for this iteration. This way, at the end of the steps of this phase, no edge remains between a blue and a red cluster, and at most a $1/(2b)$ fraction of all vertices die during the phase. 

At the end of $b$ phases, one for each bit in the identifiers, at most a $b/(2b)=1/2$ fraction of the vertices died, while each connected component of living vertices agrees on all the $b$ bits of the cluster identifier, i.e., is just one cluster. Since each cluster grows by at most one hop per each step of each phase, the cluster radii remain in $\poly(\log n)$. 

\subsection{Other Related Work}
We obtained the results in this paper after the second-named author learned about the statement of the main result of Kowalski and Krysta~\cite{Kowlaski-Krysta}, which claims to provide a $\poly(\log n)$ round algorithm for the splitting problem\footnote{Concretely, the second-named author received a request from the Program Committee of SODA 2020 to write a review on ~\cite{Kowlaski-Krysta}. That review request was declined due to the conflict of interest.}. Due to results of Ghaffari et al.\cite{ghaffari2017complexity}, this statement would imply an alternative proof for $\mathsf{P}\textit{-}\mathsf{LOCAL} = \mathsf{P}\textit{-}\mathsf{RLOCAL}$. Hence, that would be effectively equivalent to the main result in our paper (modulo aspects such as the exact polynomial in the round complexity, message size, local computational complexity, simplicity, etc). However, two remarks are in order: (1) The proof in \cite{Kowlaski-Krysta} has a fundamental flaw\footnote{Here is a brief explanation: In page 11 of ~\cite{Kowlaski-Krysta} (the version from 31 July 2019), the inequality $Pr[A_2|A_1] \leq 2^{-c'\alpha \delta}$ is incorrect. The provided argument says that this is derived by using the same arguments as done for $Pr[A_1]$ in Lemma 2.  However, that argument cannot be applied in the second phase and further on. For the second phase, we have hyperedges left each with potentially only $\alpha \delta$ vertices that are left uncolored. Notice that this is much lower than the $\delta$ vertices that was assumed when proving Lemma 2. Hence, recalling that an edge is called biased if it has at most $\alpha\delta$ red edges or at most $\alpha\delta$ blue edges, the probability of an edge being biased in the new coloring (i.e., second phase) is effectively $1$. If we change the definition of biased edges and allow the bias to go down by a constant factor per phase, this issue would be resolved superficially but then we can continue the argument for only $O(\log\log n)$ phases, as after that the hyperedges that initially had $\poly(\log n)$ vertices might have no uncolored vertex left.}. As of the time of preparing this version of our paper (i.e., \today), we are not aware of any fix to that proof. (2) The methods in the two papers are completely different, in terms of both the general approach and the proof ingredients.

%\newpage
\section{The Network Decomposition Algorithm}
\label{sec:algorithm}
In this section, we present a network decomposition algorithm that proves \Cref{thm:decomp-informal}. We first describe in \Cref{subsec:weak} an $O(\log^7 n)$-round deterministic distributed algorithm in the \local model that computes a weak-diameter network decomposition for $n$-node graphs, with cluster weak-diameter $O(\log^3 n)$ and $O(\log n)$ colors. This algorithm can also be adapted to work in $O(\log^8 n)$ rounds of the \congest model. Then, in \Cref{subsec:strong}, we explain how the former can be transformed to an $O(\log^8 n)$-time deterministic algorithm in the \local model for strong-diameter network decomposition, with cluster strong-diameter $O(\log n)$ and $O(\log n)$ colors. The distinction between weak-diameter and strong diameter is clarified in \Cref{subsec:weak}.

As a side remark, we note that all these constructions assume that nodes have unique $O(\log n)$-bit identifiers. As we will explain later in \Cref{rem:big_labels}), in the \local model, these constructions can be turned into $\poly(\log n)$-round algorithms for the more general setting with identifiers from $[1, S]$, as long as $\log^* S= O(\log n)$.

\subsection{Weak-Diameter Network Decomposition}
\label{subsec:weak}

Recall that for \Cref{thm:decomp-informal}, we wish to construct a decomposition of the underlying graph in $O(\log n)$ color classes such that for each color class, each of its connected components has $O(\log n)$ diameter. 
Our initial algorithm will, however, provide only a weaker property, as we describe next.
%that all of the vertices of a given connected component are close in the original graph $G$. 
We will work with \textit{clusters} of vertices, defined simply as a subset of vertices, such that any two vertices of a cluster are ``close" in $G$, although the subgraph induced by the vertices of the cluster may have large diameter and may be even disconnected. 
This motivates the notion of weak-diameter and the corresponding relaxation of network decomposition: 

\begin{definition}
\label[definition]{def:weak_diam}
Given a graph $G$ and its subgraph $H$, we say that the weak-diameter of $H$ is at most $d$ if $G$ contains a path of length at most $d$ between any pair of vertices in $H$. 
\end{definition}

\begin{definition}Given a graph $G$, we define a weak-diameter network decomposition of $G$ with $c$ colors and weak-diameter $d$ to be a coloring of the vertices with $c$ colors such that for each color $i\in [1, c]$, the subgraph $G_i$ induced by the vertices of color $i$ is partitioned into non-adjacent disjoint clusters, each of weak-diameter at most $d$ in graph $G$. 
\end{definition}

Next we state the main technical contribution of this paper, which is a deterministic distributed algorithm that constructs a weak-diameter decomposition in $\poly(\log(n))$ rounds in the \local model. With the known connection that transforms it to a strong-diameter decomposition algorithm, as we will later describe in \Cref{subsec:strong}, this implies \Cref{thm:decomp-informal}.

Before stating the result, we recall another useful notion of Steiner trees. A Steiner trees is a tree with nodes labelled as \textit{terminal} and \textit{nonterminal}; the aim is to connect terminal nodes possibly via some nonterminal nodes.
Here we use this notion to control the weak-diameter of each cluster.%, which is up to a constant factor bounded by the radius of its Steiner tree, hence each cluster has weak-diameter $O(\log ^3 n)$. 

\medskip

\begin{theorem}\label{thm:main} Consider an arbitrary $n$-node network graph $G$ where each node has a unique $b=O(\log n)$-bit identifier. 
There is a deterministic distributed algorithm that computes a network decomposition $G$ with $O(\log n)$ colors and weak-diameter $O(\log^3 n)$, in $O(\log^7 n)$ rounds of the \local model.

Moreover, for each color and each cluster $\mathcal{C}$ of vertices with this color, we have a Steiner tree $T_\mathcal{C}$ with radius $O(\log^3 n)$ in $G$, for which the set of terminal nodes is equal to $\mathcal{C}$. 
Furthermore, each edge in $G$ is in $O(\log^2 n)$ of these Steiner trees.
\end{theorem}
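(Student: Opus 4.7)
The plan is to follow the outline sketched in the introduction, organizing the construction as three nested loops. The outer loop produces the $O(\log n)$ colors one at a time: in iteration $c$ we run a single "color-construction" subroutine on the vertex set $V_c$ of vertices that are still unclustered, and we guarantee that at least half of $V_c$ gets clustered (with the rest "resurrected" for the next iteration). Since $|V_1|=n$, after $O(\log n)$ iterations every vertex has received a color. Two clusters of the same color will, by construction, have vertex-identifiers that agree in all $b=O(\log n)$ bits and hence will not be adjacent, so each color class is a disjoint union of non-adjacent clusters.

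The color-construction subroutine itself runs in $b=\Theta(\log n)$ phases indexed by bit position. The invariant is that at the end of phase $i$, every two touching surviving clusters agree in the $i$ least significant bits of their identifier. In phase $i+1$, clusters are labeled \emph{red} or \emph{blue} by bit $i+1$; blue clusters then run a sequence of \emph{growth steps} while red clusters are released. In each growth step, every red vertex adjacent to some blue cluster proposes to join one; each blue cluster $\mathcal C$ either accepts all its proposing neighbors (if they number at least $|\mathcal C|/(2b)$, in which case $\mathcal C$ grows by one hop and its size increases by a multiplicative factor $1+1/(2b)$), or rejects them and kills them. Because a cluster's size never exceeds $n$ and grows by $(1+1/(2b))$ whenever it grows, the number of growth steps in a single phase is $O(b\log n)=O(\log^2 n)$; at the end of the phase no edge remains between a red and a blue cluster, so the invariant advances by one bit. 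The total fraction of vertices killed during one color-construction subroutine is at most $b\cdot 1/(2b)=1/2$, giving the outer-loop contract. Because every growth step adds at most one hop to the weak-diameter of a blue cluster (through edges of $G$, some of whose endpoints may already be dead), the final weak-diameter of a cluster is bounded by (growth steps per phase) $\times$ (phases) $=O(\log^2 n)\cdot O(\log n)=O(\log^3 n)$. One growth step can be implemented in the \local model in $O(\log^3 n)$ rounds by letting each blue cluster collect the relevant information along its current Steiner tree, so the total round complexity is $O(\log n)\cdot O(\log n)\cdot O(\log^2 n)\cdot O(\log^3 n)=O(\log^7 n)$.

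To maintain the Steiner trees, I would grow one tree per blue cluster in lockstep with the cluster. Initially each singleton cluster has a trivial one-node tree. Whenever a blue cluster $\mathcal C$ absorbs a set $R$ of red vertices in a growth step, for each vertex $u\in R$ I attach to $T_{\mathcal C}$ a shortest path in $G$ from $u$ to a chosen adjacent blue vertex of $\mathcal C$; inside $R$ the absorbed red vertices also bring with them their current Steiner (sub)trees, whose terminals are merged into $T_{\mathcal C}$ (the vertices that later die become non-terminals of the surviving tree they sit in, which is exactly what the weak-diameter bound needs). The radius of $T_{\mathcal C}$ grows by at most one hop per growth step, matching the $O(\log^3 n)$ weak-diameter bound.

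The part I expect to require the most care is the edge-congestion bound: showing that every edge of $G$ lies in at most $O(\log^2 n)$ of the Steiner trees across all color classes. Across the $O(\log n)$ outer iterations each edge can be used at most once per iteration per phase-step in which it is "recruited" as a connecting edge of some cluster, and a single edge can serve as the connecting edge for at most one absorption event per growth step (once its red endpoint is absorbed, it is no longer red). Summing over at most $O(\log^2 n)$ growth steps per color class and using that an edge, once buried inside a cluster, is never used again in later iterations (since its endpoints are clustered or dead), one should obtain the $O(\log^2 n)$ bound; the bookkeeping must be done carefully to avoid double counting edges that are inherited when a red cluster's Steiner subtree is merged into a blue one. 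The remaining steps — verifying that all local decisions (proposing, counting neighbors, comparing to $|\mathcal C|/(2b)$, selecting a growth edge, forwarding Steiner-tree updates) can be executed deterministically within the claimed round budget along the current Steiner tree — are routine once the Steiner tree invariant is maintained.
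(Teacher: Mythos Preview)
Your outline matches the paper's approach almost exactly: the three nested loops, the bit-by-bit phases with the red/blue split, the $(1+1/(2b))$-growth versus kill rule, the $O(\log^2 n)$ steps per phase, and the resulting $O(\log^3 n)$ weak-diameter and $O(\log^7 n)$ round count are all as in the paper. Two places in your write-up are off and would cause trouble if carried out literally.

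\textbf{Steiner-tree maintenance.} When a blue cluster $\mathcal C$ absorbs a set $R$ of red vertices, you do \emph{not} merge any red Steiner subtrees into $T_{\mathcal C}$. Each absorbed vertex $u\in R$ is adjacent to some $v\in\mathcal C$, and you add only the single edge $(u,v)$ and the vertex $u$ to $T_{\mathcal C}$; this is what keeps the radius of $T_{\mathcal C}$ growing by at most one per step. The red cluster that $u$ came from keeps its own Steiner tree unchanged, with $u$ demoted from terminal to nonterminal (the same happens if $u$ dies). Your sentence about absorbed vertices ``bringing their current Steiner (sub)trees'' into $T_{\mathcal C}$, taken literally, would destroy the radius bound.

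\textbf{Edge-congestion.} Your sketch bounds recruitments by the number of growth steps, $O(\log^2 n)$ per color, and then appeals to a ``once buried, never used again'' argument across colors; as written this yields $O(\log^2 n)$ per color and does not obviously sum to $O(\log^2 n)$ overall. The paper's argument is much simpler and avoids any cross-color bookkeeping: an edge $e=(u,v)$ enters a Steiner tree only at the moment one of its endpoints is accepted into a new cluster, i.e., changes its label. Within one color-construction each node changes its label at most once per phase (a red node can turn blue once, a blue node never changes), hence at most $b=O(\log n)$ times. Thus $e$ lies in at most $O(\log n)$ Steiner trees per color, and $O(\log^2 n)$ across the $O(\log n)$ colors. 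Replace your per-step counting with this per-label-change counting and the bound falls out immediately.
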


\medskip

The last part of the statement ensures that our algorithm can also be implemented and used in the more restrictive \congest model, as we will later discuss in \Cref{rem:main_congest}. 

In the following lemma, we describe the process for constructing the clusters of one color of the network decomposition (e.g., the first color), in a way that it clusters at least half of the vertices. This last weakening of the guarantee is similar to the randomized network decomposition algorithm of \cite{linial93}. Since after each application of this lemma only half of the vertices remain, by $\log n$ repetitions, we get a decomposition of all vertices, with $\log n$ colors.

%In the end, we will also observe that thanks to the fact that each edge of $G$ is in only $O(\log^2 n)$ Steiner trees, and because the construction only uses communication along Steiner trees, the algorithm can be run in $O(\log^8 n)$ rounds of the \congest model.

\begin{lemma}\label[lemma]{lem:main} Consider an arbitrary $n$-node network graph $G = (V, E)$ where each node has a unique $b=O(\log n)$-bit identifier, as well as a set $S\subseteq V$ of living vertices. There is a deterministic distributed algorithm that, in $O(\log^6 n)$ rounds in the \local model, finds a subset $S' \subseteq S$ of living vertices, where $|S'|\geq |S|/2$, such that the subgraph $G[S']$ induced by set $S'$ is partitioned into non-adjacent disjoint clusters, each of weak-diameter $O(\log^3 n)$ in graph $G$.

Moreover, for each such cluster $\mathcal{C}$, we have a Steiner tree $T_\mathcal{C}$ with radius $O(\log^3 n)$ in $G$ where all nodes of $\mathcal{C}$ are exactly the terminal nodes of $T_\mathcal{C}$. Furthermore, each edge in $G$ is in $O(\log n)$ of these Steiner trees.
\end{lemma}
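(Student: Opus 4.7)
The plan is to implement the bit-by-bit, identifier-driven clustering sketched in the introduction. We initialize every living vertex $v\in S$ as its own cluster with identifier $\mathrm{id}(v)$ and trivial Steiner tree $\{v\}$, and then run $b=O(\log n)$ phases while maintaining the inductive invariant that after phase $i$, every pair of adjacent clusters agrees on the $i$ least-significant bits of their identifiers. In phase $i+1$, we split the current clusters into \emph{red} (bit $i+1$ equals $1$) and \emph{blue} (bit equals $0$), and then run $K=\Theta(b\log n)=O(\log^2 n)$ synchronous growth steps. In each step, every red vertex adjacent to some blue cluster proposes to one arbitrarily chosen such cluster $C$; the cluster $C$ then compares the total count of its current proposers to the threshold $|C|/(2b)$ and either accepts them all (each becomes a new terminal of $T_C$ and its proposal edge is appended to $T_C$) or rejects them all (they die and $C$ freezes for the rest of the phase).

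\textbf{Key claims to verify.} \emph{Bit-invariant}: each accept step multiplies $|C|$ by at least $1+\frac{1}{2b}$, so any blue cluster can accept at most $O(b\log n)$ times before its size would exceed $n$; hence within $K$ steps every blue cluster is either frozen or has no remaining red neighbor, and every red-blue adjacency has been resolved by absorption or death. \emph{Death bound}: each cluster rejects at most once per phase and kills at most $|C|/(2b)$ vertices at the moment of its rejection; since clusters are disjoint at that moment and each dying vertex is charged to the unique rejecting cluster that killed it, total deaths per phase are at most $|S|/(2b)$, and thus at most $|S|/2$ over the $b$ phases, giving $|S'|\ge |S|/2$. \emph{Weak-diameter}: each accept step appends only a single new edge and enlarges the cluster's Steiner-tree radius by at most one hop, so after $Kb$ steps the radius is $O(\log^3 n)$. \emph{Congestion}: an edge $e=(u,v)$ enters some $T_C$ only when one of its endpoints joins $C$ through $e$, and per phase $e$ is used in at most one such join, so $e$ appears in $O(b)=O(\log n)$ Steiner trees in total. \emph{Rounds}: each step aggregates the proposer count and $|C|$ over $T_C$ of radius $O(\log^3 n)$ in $O(\log^3 n)$ rounds, for a total of $O(K\cdot b\cdot \log^3 n)=O(\log^6 n)$ rounds in the \local model.

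\textbf{Main obstacle.} The subtle point I expect to be the main combinatorial check is the Steiner-tree bookkeeping for released red clusters: when a red vertex $v$ leaves its old cluster to join a blue cluster, $v$ must be retained as a non-terminal in the old tree so that its remaining terminals stay connected, while $v$'s new join-edge enters the new tree. Verifying that this simple local rule simultaneously preserves every cluster's tree structure, keeps its radius at $O(\log^3 n)$, and caps per-edge congestion at $O(\log n)$ is the core check. Complementary to this, one has to argue carefully that $K=\Theta(b\log n)$ steps really suffice to resolve every red-blue adjacency in one phase; this is precisely where the threshold $\frac{1}{2b}$ is essential—a larger threshold would violate the death bound, while a smaller one would fail to bound the number of accept steps that a single blue cluster can perform before its size exceeds $n$.
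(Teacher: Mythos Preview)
Your proposal is correct and follows essentially the same approach as the paper: the same bit-by-bit phase structure, the same $\Theta(b\log n)$ growth steps per phase with the $|C|/(2b)$ accept/reject threshold, and the same invariants and observations (growth bound, death bound, radius bound, and edge-congestion bound) that the paper proves as Observations~2.5--2.9. Even the ``main obstacle'' you flag---retaining departed red vertices as nonterminals in their old Steiner trees---is exactly the bookkeeping step the paper singles out.
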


We obtain \Cref{thm:main} by $c=\log n$ iterations of applying \Cref{lem:main}, starting from $S=V$. For each iteration $j\in [1, \log n]$, the set $S'$ are exactly nodes of color $j$ in the network decomposition, and we then continue to the next iteration by setting $S\gets S\setminus S'$.

\paragraph{Construction outline for one color of the decomposition}
We now describe the construction outline of \Cref{lem:main}.
The construction has $b=O(\log n)$ phases, corresponding to the number of bits in the identifiers. Initially, we think of all nodes of $S$ as \emph{living}. During this construction, some living nodes \emph{die}. We use $S'_i$ to denote the set of living vertices at the beginning of phase $i\in [0, b-1]$. Slightly abusing the notation, we let $S'_b$ denote the set of living vertices at the end of phase $b-1$ and define $S'$ to be the final set of living nodes, i.e., $S' := S'_{b}$. 

Moreover, we \emph{label} each living node $v$ with a $b$-bit string $\ell(v)$, and we use these labels to define the clusters. At the beginning of the first phase, $\ell(v)$ is simply the unique identifier of node $v$. This label can change over time. For each $b$-bit label $L\in \{0,1\}^b$, we define the corresponding \emph{cluster} $S'_i(L)\subseteq S'_i$ in phase $i$ to be the set of all living vertices $v\in S'_i$ such that $\ell(v)=L$. 
We will maintain one Steiner tree $T_L$ for each cluster $S'_i(L)$ where all nodes $S'_i(L)$ are the terminal nodes of $T_L$. 
Initially, each cluster consists of only one vertex and this is also the only (terminal) node of its respective Steiner tree. 
%each Steiner tree consists of only one node which is also terminal -- it is the only node in the respective cluster. 

\medskip
\paragraph{Construction invariants} The construction is such that, for each phase $i\in [0, b-1]$, we maintain the following invariants: 
\begin{enumerate}
\item[(I)] For each $i$-bit string $Y$, the set $S'_i(Y)\subseteq S'_i$ of all living nodes whose label ends in suffix $Y$ has no edge to other living nodes $S'_i\setminus S'_i(Y)$. In other words,
the set $S'_i(Y)$ is a union of some connected components of the subgraph $G[S'_i]$ induced by living nodes $S'_i$.
\item[(II)] For each label $L$ and the corresponding cluster $S'_i(L)$, the related Steiner tree $T_L$ has radius at most $iR$, where $R=O(\log^2 n)$. We emphasize that in the subgraph induced by living vertices a cluster can be disconnected.
\item[(III)] We have $|S'_{i+1}|\geq |S'_i|(1-1/2b)$. 
\end{enumerate}

These invariants, together with \Cref{obs:congestion} about the overlaps of the Steiner trees, prove \Cref{lem:main}. % (modulo the number of Steiner trees that use an edge, which will be discussed later in \Cref{rmrk:congestion}). 
In particular, from the first invariant we conclude that at the end of $b$ phases, different clusters are non-adjacent. 
From the second invariant we conclude that each cluster has a Steiner tree with radius $bR=O(\log^3 n)$. 
Finally, from the third invariant we conclude that for the final set of living nodes $S'= S'_{b}$, we have $|S'| \geq (1-1/2b)^{b} |S| \geq |S|/2$.

\medskip
\paragraph{Outline of one phase of construction} We now outline the construction of one phase and describe its goal (see also \Cref{fig:one_phase}). Let us think about some fixed phase $i$. We focus on one specific $i$-bit suffix $Y$ and the respective set $S'_i(Y)$. 
%Notice that by the invariant, the labels of all nodes in this component share a common suffix of $i$-bits $Y_{\mathcal{C}}$. 
Let us categorize the nodes in $S'_i(Y)$ into two groups of \emph{blue} and \emph{red}, based on whether the $(i+1)^{th}$ least significant bit of their label is $0$ or $1$. Hence, all blue nodes have labels of the form $(*\ldots*0Y)$ and all red nodes have labels of the form $(*\ldots*1Y)$, where $*$ can be an arbitrary bit. During this phase, we make some small number of the red vertices die and we change the labels of some of the other red vertices to blue labels (and then the node is also colored blue). All blue nodes remain living and keep their label. The eventual goal is that, at the end of the phase, among the living nodes, there is no edge from a blue node to a red node. Hence, each connected component of the living nodes consists either entirely of blue nodes or entirely of red nodes. Therefore, the length of the common suffix in each connected component is incremented, which leads to invariant 
(I) for the next phase. The construction ensures that we kill at most $|{S'}_i(Y)|/{2b}$ red vertices of set $S'_i(Y)$, during this phase. We next describe this construction.

\medskip
\paragraph{Steps of one phase}
Each phase consists of $R = 10b\log n = O(\log^2 n)$ steps, each of which will be implemented in $O(\log^3 n)$ rounds. Hence, the overall round complexity of one phase
%\footnote{We think that one should be able to implement a phase in $O(\log^3 n)$ rounds of the \local model, instead of $O(\log^5 n)$ rounds, by gathering some relevant local topology and then simulating the process locally. However, we leave such optimizations to a later version of this paper.} 
is $O(\log^5 n)$ and over all the $O(\log n)$ phases, the round complexity of the whole construction of \Cref{lem:main} is $O(\log^6 n)$ as advertised in its statement. 
Each step of the phase works as follows: each red node sends a request to an arbitrary neighboring blue cluster, if there is one, to join that blue cluster (by adopting the label). For each blue cluster $A$, we have two possibilities: 
\begin{enumerate}
\item[(1)] If the number of adjacent red nodes that requested to join $A$ is less than or equal to $|A|/2b$, then $A$ does not accept any of them and all these requesting red nodes die (because of their request being denied by $A$). In that case, cluster $A$ \emph{stops} for this whole phase and does not participate in any of the remaining steps of this phase. 
\item[(2)] Otherwise --- i.e., if the number of adjacent red nodes that requested to join $A$ is strictly greater than $|A|/2b$ --- then $A$ accepts all these requests and each of these red nodes change their label to the blue label that is common among all nodes of $A$. In this case, we also grow the Steiner tree of cluster $A$ by one hop to include all these newly joined nodes.
\end{enumerate}
We note that each step can be performed in $O(\log^3 n)$ rounds, because each blue cluster has a Steiner tree of depth $O(\log^3 n)$ and therefore can gather the number of vertices in the cluster, as well as the number of red vertices that would like to join this cluster. We also emphasize that in each step, each red node acts alone, independent of other nodes in the same red cluster. Hence, red clusters may shrink, disconnect, or even get dissolved over time. Once a red node adopts a blue label (or if a node had a blue label at the beginning), it will maintain that label throughout the phase. Therefore, blue clusters can only grow, and have more and more red nodes join them. We also emphasize that we can have blue clusters adjacent to each other, and red clusters adjacent to each other -- the objective is to have no edge connecting a red cluster to a blue cluster. 

Let us observe how the Steiner trees of the clusters evolve: For each blue cluster, the corresponding Steiner tree only grows. 
To have a similar property about the Steiner trees of red clusters, we do the following: Although for a red cluster, a terminal red node might become blue, we keep it in this tree as a nonterminal node. We note that although the Steiner tree of a red cluster is not used in the current phase, it may be used in the next phases.  

\begin{figure*}[t]
    \centering
    \includegraphics[width = \textwidth]{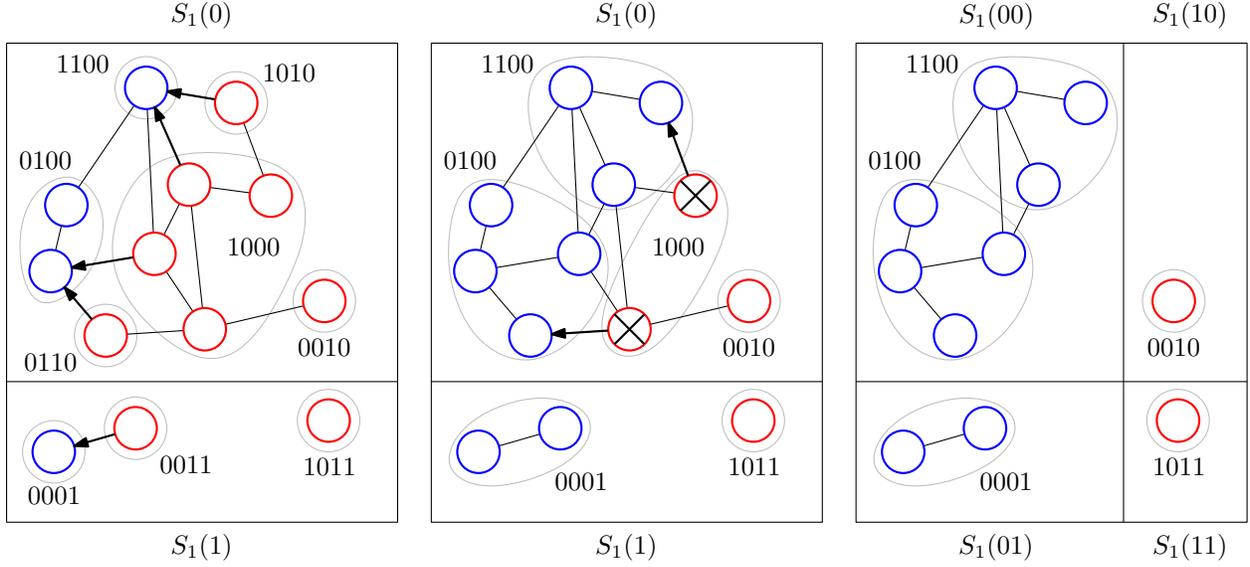}
    \caption{\small In this illustration, we consider the second phase of the algorithm, in a simple example graph. The three figures show the configuration in the beginning of three steps of this phase from left to right. Note that, at the beginning of this phase, the clusters are already separated according to their least significant bit (as a result of the first phase). 
    When the second phase starts---i.e., in the left figure---the second least significant bit determines whether each cluster is blue or red. 
    Adjacent red nodes are proposing to blue nodes (dark arrows) to join their clusters and blue clusters decide either to relabel them so that they join this cluster or to make them die (crossed red vertices). 
    In the end, blue and red clusters are separated. 
    Note that nothing will happen in the third phase, since the only two adjacent clusters share the same bit on the third least significant bit. Their boundary will be resolved only in the last phase. 
    }
    \label{fig:one_phase}
\end{figure*}

\medskip
\paragraph{Analysis} We next provide some simple observations about this construction in one phase, which allow us to argue that the construction maintains invariants (I) to (III), described above.

\begin{observation}\label[observation]{obs:finite} Any blue cluster stops after at most $4b\log n$ steps. 
\end{observation}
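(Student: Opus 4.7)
The plan is to prove this by a simple exponential-growth argument on the size of a blue cluster. A blue cluster $A$ either stops in the current step (case (1)), or it enters case (2), in which it strictly gains more than $|A|/(2b)$ new vertices. In the latter case, its new size $|A'|$ satisfies
\[
|A'| \;>\; |A|\left(1 + \tfrac{1}{2b}\right).
\]
Hence, as long as the cluster keeps growing (i.e.\ has not yet stopped), every step multiplies its size by a factor strictly greater than $1+1/(2b)$.

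Next, I would bootstrap from the initial size. At the start of the phase, each blue cluster has $|A| \geq 1$. Therefore, after $k$ consecutive non-stopping steps, the cluster's size is at least $(1+1/(2b))^k$. Since the cluster lives inside an $n$-vertex graph, we must have $(1+1/(2b))^k \leq n$, i.e.
\[
k \;\leq\; \frac{\ln n}{\ln\!\bigl(1 + \tfrac{1}{2b}\bigr)}.
\]
Using the standard inequality $\ln(1+x) \geq x/2$ for $0 < x \leq 1$, this yields $k \leq 4b \ln n$, which (up to the convention of the logarithm base, and since $b = O(\log n)$ is fixed within the phase) gives the desired bound of $4b \log n$ steps before the cluster must enter case~(1) and stop.

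The main thing to be careful about is the strict inequality: the observation relies on the fact that a growing blue cluster gains strictly more than $|A|/(2b)$ vertices (not just $|A|/(2b)$), so the multiplicative factor is genuinely above $1 + 1/(2b)$. This is exactly how case~(2) was defined in the algorithm, so there is no real obstacle here; the only subtlety is matching the $\ln$-based calculation with the $\log n$-based statement, which is handled by either a minor adjustment of constants or by interpreting $\log$ as natural logarithm. Once this small bookkeeping is settled, \Cref{obs:finite} follows immediately.
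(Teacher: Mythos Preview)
Your proposal is correct and follows essentially the same exponential-growth argument as the paper: a non-stopping blue cluster grows by a factor exceeding $1+1/(2b)$ per step, so after $4b\log n$ steps its size would exceed $n$. The paper's proof is just a terser version of what you wrote, asserting $(1+1/(2b))^{4b\log n} > n$ directly rather than unwinding the logarithm inequality.
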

\begin{proof} In each step that a cluster $A$ does not stop, its size grows by a factor of at least $(1+1/2b)$, as it accepts at least $|A|/2b$ requests from neighboring red nodes. Hence, after $4b\log n$ steps of growth, the size would exceed $(1+1/2b)^{4b\log n} > n$, which is not possible. Therefore, cluster $A$ stops after at most $4b\log n$ steps.
\end{proof}

\begin{observation}\label[observation]{obs:stop} Once a blue cluster $A$ stops, it has no edge to a red node and it will never have one, during this phase. This implies invariant (I).
\end{observation}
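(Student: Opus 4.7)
The plan is to prove the claim in two parts and then argue the implication to invariant~(I). The crucial mechanism I will exploit is that in each step, every red node that has at least one blue cluster in its neighborhood must send a join request to \emph{some} blue cluster, and therefore is resolved---either relabeled blue or killed---by the end of that step.

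First, I would show that at the moment $A$ stops, no red node is adjacent to $A$. Let $t$ be the step in which $A$ stops, and let $v$ be any red node adjacent to $A$ at the beginning of step $t$. By the protocol, $v$ sends a join request to some blue cluster $A^{\star}$ in its neighborhood. If $A^{\star} = A$, the request is denied (since $A$ stops), so $v$ dies. If $A^{\star} \neq A$, then either $A^{\star}$ accepts $v$ and $v$ is relabeled blue, or $A^{\star}$ also denies $v$ and $v$ dies. In every case $v$ is no longer a living red node at the end of step $t$, so no red node is adjacent to $A$ at the end of step $t$. The subtle point that I expect to be the main (and only) obstacle here is this routing: a red neighbor of $A$ might have requested a different cluster, and one must check that it is still removed from the red population in the same step regardless of which cluster it chose.

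Next I would argue that this property persists for the remainder of the phase. Once $A$ stops, by definition its vertex set and the labels of its nodes are frozen for the rest of the phase. Within a phase the algorithm only moves nodes from red to blue or kills them---never the reverse---so the set of living red nodes monotonically shrinks. Since the edge set of $G$ is static, no red node can become adjacent to $A$ in a later step of the phase if none was at the end of step $t$.

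Finally, to derive invariant~(I) for the start of phase $i+2$, I would invoke \Cref{obs:finite}: every blue cluster stops within $4b\log n \le R$ steps, so by the end of phase $i+1$ every blue cluster has stopped. Take any two living nodes $v,w$ at the end of phase $i+1$ whose labels disagree in their last $(i+1)$ bits. If their last $i$ bits already differ, then by the inductive invariant~(I) they had no edge at the start of phase $i+1$, and since the algorithm never adds nodes or edges, they still have no edge. Otherwise their last $i$ bits coincide and they differ only in the $(i+1)$-th bit, so one is blue and the other red; by the two parts just proved, the blue node lies in a stopped cluster that has no red neighbor, hence $v$ and $w$ are not adjacent.
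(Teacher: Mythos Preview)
Your proof is correct and follows essentially the same approach as the paper: you analyze the step in which $A$ stops, observe that every red neighbor of $A$ must have issued a request (to $A$ or to some other blue cluster) and is therefore either killed or relabeled blue, and then use monotonicity of the red set to conclude persistence. You additionally spell out the derivation of invariant~(I) via a case split on whether two nodes already disagreed in the lowest $i$ bits or only in bit $i{+}1$; the paper leaves this implication implicit. One cosmetic point: your phase indexing is off by one relative to the paper (the paper calls the phase that processes bit $i{+}1$ ``phase $i$'', so the invariant you establish is for the start of phase $i{+}1$, not $i{+}2$), but this does not affect the argument.
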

\begin{proof} By the observation above, cluster $A$ stops after at most $4b\log n$ steps. Consider the step in which cluster $A$ stops. In that step, each neighboring red node (if there is one) either requested to join $A$ or some other blue cluster. In the former case, that red node dies. In the latter case, the node adopts a blue label or dies. In either case, the node is not a living red node anymore (and it will never become one). From this point onward, this blue cluster $A$ never grows or shrinks. 
\end{proof}

\begin{observation}\label[observation]{obs:inv_two} In each step, the radius of the Steiner tree of each blue cluster grows by at most $1$, while the radius of the Steiner tree of each red cluster does not grow. This implies invariant (II).
\end{observation}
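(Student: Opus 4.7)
The plan is to verify the two per-step claims---blue grows by at most $1$, red unchanged---separately, and then iterate them over the $R$ steps of a phase with a short induction on the phase index to recover invariant (II).

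For the blue case, fix a step in which blue cluster $A$ accepts the batch $\mathcal{R}$ of neighboring red vertices that requested to join it. By construction every $r \in \mathcal{R}$ has at least one neighbor $a_r \in A$ in $G$ (this is precisely why $r$'s request went to $A$). I would update $T_A$ by adding, for each $r \in \mathcal{R}$ not already in $T_A$, the pendant edge $(a_r, r)$ and marking $r$ as a new terminal; if some $r$ happens to already lie in $T_A$ as a pre-existing Steiner point, one simply relabels it as a terminal. In either situation the augmented object is still a tree, the old center can be kept, and every vertex---old or new---sits at distance at most $\operatorname{rad}(T_A) + 1$ from the center, so $\operatorname{rad}(T_A)$ grows by at most one.

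For the red case, a red cluster $\mathcal{C}$ during a step can only lose vertices (to death or to an adjacent blue cluster that absorbs them), never gain any; the construction explicitly retains departing vertices in $T_{\mathcal{C}}$ as nonterminals. Hence the underlying subgraph $T_{\mathcal{C}} \subseteq G$ is literally unchanged, and so is its radius. Invariant (II) then follows by induction on the phase index $i$: the base $i=0$ is trivial since every cluster is a singleton; assuming radius at most $iR$ at the start of phase $i$, during the $R$ steps a blue parent's Steiner tree grows by at most $R$ while a red parent's does not grow, so every cluster at the start of phase $i+1$---inherited unchanged from a red parent or expanded from a blue parent---has radius at most $(i+1)R$.

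I do not anticipate a real obstacle here; the only thing requiring a careful check is tree-preservation when $A$ absorbs several red vertices in a single step, which reduces to verifying that each such $r$ contributes exactly one new edge and one new vertex, or is already present and no edge is added. This is essentially book-keeping.
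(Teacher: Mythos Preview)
Your proposal is correct and faithfully unpacks what the paper leaves implicit: in the paper this observation is stated without proof, since both claims are immediate from the construction (blue trees are explicitly grown ``by one hop'' when absorbing adjacent red vertices, and red trees are kept intact with departing nodes merely relabeled as nonterminals). Your per-step analysis and phase induction are exactly the intended reading; the only remark is that the case ``$r$ already lies in $T_A$ as a pre-existing Steiner point'' cannot actually occur (any vertex that previously left $A$ now disagrees with $A$'s label on a lower-index bit and hence sits in a different suffix group), so that branch is harmless but unnecessary.
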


\begin{observation}\label[observation]{obs:inv_three} The total number of red vertices in $S'_i(Y)$ that die during this phase is at most $|S'_i(Y)|/(2b)$. This implies invariant (III).
\end{observation}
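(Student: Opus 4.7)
The plan is to charge each red death in $S'_i(Y)$ to the blue cluster that rejected it, and then to bound the total by summing the per-cluster rejection capacities.

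First I would pin down the single moment in which a red node can die. By the description of a step, a red node $v\in S'_i(Y)$ dies only in the step in which it sends a request to some adjacent blue cluster $A$ and the total number of requests received by $A$ in that step is at most $|A|/(2b)$, i.e., only in the step in which $A$ falls into case~(1) and \emph{stops}; by \Cref{obs:finite} this happens at most once per blue cluster during the phase. Consequently, the number of red deaths chargeable to a single blue cluster $A$ is at most $|A|/(2b)$, where $|A|$ is measured at the moment $A$ stops.

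Next I would argue that these stopping-time snapshots are pairwise disjoint subsets of $S'_i(Y)$. Each blue cluster $A$ starts inside $S'_i(Y)$ (its originally-blue nodes have labels ending in $0Y$) and during the phase only absorbs red nodes of $S'_i(Y)$ that adopt the specific blue label $\ell(A)$. Labels are monotone: once a red node acquires the label of some blue cluster $A$, it keeps exactly that label for the remainder of the phase, and in particular it cannot be counted inside any other blue cluster. Hence when each $|A|$ is taken at $A$'s stopping time, $\sum_A |A| \le |S'_i(Y)|$. Combining the two bounds yields
\begin{equation*}
\#\{\text{red deaths in } S'_i(Y)\} \;\le\; \sum_A \frac{|A|}{2b} \;\le\; \frac{|S'_i(Y)|}{2b},
\end{equation*}
which is exactly the claim. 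Invariant~(III) then follows by summing over all $i$-bit suffixes $Y$, since the family $\{S'_i(Y)\}_Y$ partitions $S'_i$ and distinct $S'_i(Y)$'s share no edges by invariant~(I), so no blue cluster straddles two suffixes.

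I expect the only subtle point to be justifying disjointness of the stopping snapshots of different blue clusters; I would therefore isolate the short ``labels are monotone within a phase'' claim as its own mini-observation before combining it with the per-cluster rejection bound.
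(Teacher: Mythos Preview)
Your proposal is correct and follows essentially the same charging argument as the paper: charge each red death to the blue cluster that rejected it, use the fact that each blue cluster stops once and kills at most $|A|/(2b)$ nodes at that moment, and sum over clusters. Your write-up is simply more explicit than the paper's about why the stopping-time snapshots of distinct blue clusters are disjoint (via label monotonicity) and about how invariant~(III) follows by summing over suffixes; the paper asserts both without elaboration. One small citation quibble: the ``at most once'' part of stopping is immediate from the algorithm description (a stopped cluster ceases to participate), not from \Cref{obs:finite}, which rather guarantees that stopping does occur within $4b\log n$ steps.
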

\begin{proof} From \Cref{obs:stop}, it follows that each blue cluster $A$ stops exactly once, and if it had $|A|$ vertices at that point, it makes at most $|A|/(2b)$ red vertices die. Hence, in total over the whole phase, the number of red vertices that die is at most a $1/(2b)$ fraction of the number of nodes in blue clusters that stop, and thus at most $|S'_i(Y)|/(2b)$.
\end{proof}

The above completes the description of our algorithm in the \local model. 
As we will later remark about its applications in the \congest model, we finish the proof of \Cref{lem:main} by adding the following observation about the overlaps of the constructed Steiner trees. 

\begin{observation}\label[observation]{obs:congestion} Eeach edge is used in $O(\log n)$ Steiner trees.
\end{observation}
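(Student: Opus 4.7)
I would prove the observation by showing that in a single invocation of \Cref{lem:main} each edge $e = (u,v)$ enters at most one Steiner tree per phase, and then summing over the $b = O(\log n)$ phases. The mechanism for adding an edge to a Steiner tree is very constrained: $T_L$ grows only when the cluster with label $L$ is blue during some phase and accepts a request from an adjacent red node $w$; at that moment, exactly one edge $(w, x)$—for some $x$ with label $L$ neighboring $w$—is added to $T_L$. Thus for $e$ to enter $T_L$ during a particular phase, one endpoint of $e$, say $u$, must be red at the moment of the join and adopt label $L$ via $e$, meaning $v$ already has label $L$.

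The key small claim is that $e$ can be used in at most one such join event during a single phase. Once $u$ adopts $L$ through $e$, both $u$ and $v$ carry label $L$ and are blue, so neither can issue any join request for the rest of the phase (requests are issued only by red nodes). A symmetric argument handles the case in which $v$ is the joining endpoint. Importantly, the Steiner trees are monotone: as the construction explicitly states, when a cluster loses a terminal node, that node and its attaching tree edges are retained in $T_L$ as nonterminals, so edges once added are never removed. Combined with the per-phase bound of one, this gives a total of at most $b = O(\log n)$ Steiner trees that contain $e$.

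The main obstacle I anticipate is ruling out more subtle ``double use'' of $e$ within a single phase across different steps, e.g.\ a scenario in which $v$ first turns blue via some other attachment edge, and then in a later step $u$ joins through $e$, and one might worry that some further event could still add $e$ again. I would handle this by noting that as soon as either endpoint of $e$ has become blue (either by starting the phase that way or by adopting a blue label mid-phase), it is frozen for the remainder of the phase; hence there is at most one step in the phase at which $e$ can serve as the attachment edge for a join, namely the unique step when a red endpoint of $e$ turns blue through $e$. Summing this per-phase bound of one across the $b = O(\log n)$ phases yields the claimed overlap bound.
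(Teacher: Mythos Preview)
Your proposal is correct and essentially the same as the paper's argument. The paper phrases it slightly more tersely---an edge can lie in $T_L$ only if some endpoint once carried label $L$, and each node changes its label at most once per phase, hence at most $b=O(\log n)$ times---but this is exactly the ``one attachment event per edge per phase'' bound you establish, just viewed from the node side rather than the edge side.
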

\begin{proof}
Each edge can be in the Steiner tree of a cluster only if that cluster at some point included one of the two endpoints of this edge. Throughout the construction, each node changes its label at most $b = O(\log n)$ times, i.e., at most once per label bit. Hence, each edge is used in $O(\log n)$ Steiner trees.
\end{proof}

\medskip 

Below, just to help with the intuition, we discuss an idealized global view of the process in one phase. We then state some remarks about extensions of the result to the \congest model and the settings with larger identifiers.

\paragraph{An intuitive global view of one (ideal) phase} We next describe a different global view for an idealized version of the process in one phase. 
We hope that this view helps in understanding the process; concretely, the above process can be seen as a \emph{localized} version of the idealized global view, where some decisions are performed locally (thus, the colors of nodes might differ in the two processes, but the growth of blue nodes and the number of red nodes that die, when the growth stops, behave similarly). 

The process described above for one phase intends to make sure that there is no edge between red and blue living nodes, while the number of (red) nodes that die is kept small. For that, we grow the blue clusters locally (i.e., relabeling some red nodes to adopt blue labels, while keeping each blue label for the entire phase), each by $O(\log^2 n)$ hops, while making some red nodes die in the meantime. The process also guarantees that only a $1/2b$ fraction of nodes die. 
If we were to ignore the exact labels of the blue nodes and red nodes, and we would just remember whether a node is red or blue, the quantitative aspects of this process --- namely the number of steps of growth and the number of red nodes that die --- would resemble a simpler global ball carving argument: we would start from the initial ``ball" of all blue nodes being together, and would grow this ``ball" hop by hop, as long as in each step we grow by at least a $(1+1/2b)$ factor. In the first step that there is no such rapid growth --- which will happen within $4b\log n$ steps --- we would carve all the neighboring red nodes and call them dead. That would be at most a $1/2b$ fraction of the blue nodes (and hence all nodes). Once these boundary nodes are dead, there is no edge between living red and blue nodes.

\paragraph{Remark about the length of identifiers}
For the construction in the \local model, the requirement on the size of the identifiers of each node can be substantially weakened; this is important for applications when we use the algorithm in the shattering framework, e.g., \cite{gmis,barenboim_symmbreaking}.

\begin{remark}\label[remark]{rem:big_labels} In the construction provided above, we assumed that the nodes of the $n$-node graph have $O(\log n)$-bit unique identifiers. This construction can be extended to an $O(\log^* S \cdot \log^6 n)$-round algorithm in the \local model for the setting where the identifiers are from $[1, S]$.
\end{remark}
\begin{proof}
Let $T(n):=O(\log^7 n)$ be the complexity of the algorithm in $n$-node graphs with $(3\log n)$-bit labels. We compute an $O(n^2)$ coloring of $G^{T(n)}$---the graph on the same set of vertices as $G$ but where we connect every two vertices $v$ and $u$ that have distance at most $T(n)$ in $G$---using the coloring algorithm of Linial\cite{linial92}, in $T(n)\cdot \log^*(S) = O(\log^7 n) \cdot \log^*(S)$ rounds. We recall that Linial's algorithm provides a $O(\Delta^2)$-coloring of any graph with maximum degree at most $\Delta$ where nodes have identifiers from $[1, S]$ in $O(\log^*S)$ rounds of the \local model.
Once we compute a coloring of $G^{T(n)}$ with $O(n^2)$ colors, we can then adopt these colors as \lq\lq unique\rq\rq\ identifiers with no more than $(3\log n)$-bits. Since each node sees unique identifiers in its $(T(n))$-hop neighborhood, the \local algorithm works as if nodes had unique identifiers.
\end{proof}

\subsection{Construction in the CONGEST Model and Extension to Graph Powers}
\label{subsec:CONGEST-construction}

Although we formulated the algorithm in the \local model of computation, it can be easily observed that it also runs in the more restrictive \congest model. 
\begin{remark} 
\label[remark]{rem:main_congest}
The whole network decomposition construction described in \Cref{lem:main} can be performed in $O(\log^8 n)$ rounds of the \congest model.
\end{remark}
\begin{proof} Recall from above that in the construction of clusters of one color, each edge is used in $O(\log n)$ Steiner trees. 
Moreover, whenever we add an edge to a particular Steiner tree, we can think of it as being oriented from a newly added node towards a node that was already in the tree. 
This gives a natural orientation of its edges that points to its root, which is the vertex whose original identifier is equal to the label of the cluster, and that was initially the only member of this cluster. 

The construction only uses convergecast and broadcast along these rooted trees (to decide whether the cluster should continue growing or it should stop). 
Hence, by using every $O(\log n)$ rounds of \congest model as one \emph{big-round}, we can perform the construction of one color in $O(\log^6 n)$ big-rounds, i.e., $O(\log^7 n)$ rounds of the \congest model. 
Over all the $O(\log n)$ colors, this leads to a round complexity of $O(\log^8 n)$ rounds of the \congest model.
\end{proof}

In the \congest model it is particularly helpful that \Cref{lem:main} gives us the underlying Steiner tree for each cluster, with the property that each appears in only $O(\log n)$ trees per color. These Steiner trees can later be used for simultaneous broadcast or convergecast in the clusters. 

\paragraph{Extending the CONGEST-model construction to graph powers}
%\marginpar{remark about the application; shattering}
When solving a distributed problem for an underlying graph $G$, it is often helpful to simulate its power $G^k$ and run certain algorithm on this simulated graph (for example, this will be the case in \Cref{thm:main-strong} as well as all the applications in \Cref{subsub:congestDerand}). 
Obtaining a network decomposition for $G^k$ is straightforward in the \local model, where each node can start by collecting its $k$-hop neighbourhood and then simulate each step of an algorithm for $G^k$ with additional slowdown proportional to $k$. 
However, this cannot be done easily in the \congest model\footnote{Even the task of collecting $2$-hop neighbourhood of a given node $u$ cannot be generally solved in $\poly (\log n)$ rounds, since the number of vertices in the $2$-hop neighbourhood of $u$ can be much larger than the number of connections of $u$ to its neighbours that can be used to collect information.}. 
That being said, our algorithm can be adapted to provide a weak-diameter network decomposition for $G^k$ in the \congest model without the need of an explicit construction of $G^k$.

A weak-diameter decomposition of $G^k$ with $c$ color classes of weak-diameter $d$ can be also interpreted as a weak-diameter decomposition of $G$ with $c$ color classes of weak-diameter $k\cdot d$, where any two clusters of the same color class are at least $k+1$ hops apart. 
%The number of rounds needed will be again multiplied by $k$. 

\begin{theorem}
\label[theorem]{thm:main_alg_gk}
There is an algorithm in the \congest model that, given a value $k$ that is known to all nodes, in $O(k \log^8 n \cdot \min(k,\log^2 n))$ communication rounds outputs a weak-diameter network decomposition of $G^k$ with $O(\log n)$ color classes, each one with $O(k \cdot \log^3 n)$ weak-diameter in $G$. 

Moreover, for each color and each cluster $\mathcal{C}$ of vertices with this color, we have a Steiner tree $T_\mathcal{C}$ with radius $O(k \cdot \log^3 n)$ in $G$, for which the set of terminal nodes is equal to $\mathcal{C}$. 
Furthermore, each edge in $G$ is in $O(\log^2 n \cdot \min(k, \log^2n))$ of these Steiner trees.
\end{theorem}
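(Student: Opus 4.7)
The plan is to conceptually run the \Cref{lem:main} algorithm on $G^k$, while performing all actual communication in $G$. Every ``one-hop'' operation (reach a neighbour, send a join-request, absorb a red neighbour) is replaced by its $k$-hop analogue in $G$, and the Steiner trees we maintain are trees in $G$ spanning the (possibly far-apart) cluster members. The invariants (I)--(III) of \Cref{lem:main} carry over verbatim, now interpreted in $G^k$: after $O(\log n)$ colours, $O(\log n)$ phases per colour, and $R=O(\log^2 n)$ growth steps per phase we obtain a decomposition into $O(\log n)$ classes of clusters that are pairwise non-adjacent in $G^k$. Each Steiner tree grows by at most $k$ edges of $G$ per growth step and so has final radius $O(R\cdot k)=O(k\log^3 n)$ in $G$, matching the claimed weak-diameter.

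One growth step is implemented as follows. Using its current Steiner tree $T_A$ of depth $O(k\log^3 n)$ together with a fresh BFS of depth $k$ past its boundary, each blue cluster $A$ (i) advertises its label to red vertices at $G$-distance at most $k$; (ii) collects join-requests back along the BFS-tree edges on which those red vertices first heard the advertisement; (iii) compares the number of requests to $|A|/(2b)$ via a convergecast on $T_A$; (iv) either relabels all requesting red vertices and appends their $k$-hop paths to $T_A$, or rejects them all (they die) and stops for this phase. All four sub-operations are a constant number of broadcasts/convergecasts on an extended tree of depth $O(k\log^3 n)$ in $G$, so each step costs $O(k\log^3 n)$ rounds in \local and the total over the $O(\log^4 n)$ growth steps of the whole algorithm is $O(k\log^7 n)$.

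Passing from \local to \congest is where the extra factor $\min(k,\log^2 n)$ enters. Exactly as in \Cref{rem:main_congest}, one \local round can be pipelined into $O(\kappa)$ \congest rounds, where $\kappa$ denotes the maximum number of Steiner trees of a single colour passing through any edge of $G$. I claim $\kappa=O(\log n\cdot\min(k,\log^2 n))$, which, summed over the $O(\log n)$ colours, gives the total edge-congestion $O(\log^2 n\cdot\min(k,\log^2 n))$ in the statement and, multiplied by the \local complexity, yields the round count $O(k\log^8 n\cdot\min(k,\log^2 n))$. The claim is proved as the minimum of two arguments. The first follows the template of \Cref{obs:congestion}: an edge $e$ enters a Steiner tree $T_L$ only at the step in which some vertex adopts label $L$ via a $G$-path through $e$; bookkeeping the $O(\log n)$ label changes per vertex, each of length at most $k$, yields $\kappa=O(k\log n)$. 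The second is a step-based argument: we use \Cref{obs:finite} to bound the number of growth steps per cluster per phase by $R=O(\log^2 n)$ and arrange the BFS explorations so that at each step each vertex of $G$ participates in the BFS of at most one cluster (breaking ties in favour of, say, the smallest cluster label); since an edge of $G$ is then added to a Steiner tree at most once per step, we get $\kappa=O(R\cdot\log n)=O(\log^3 n)$ over the $O(\log n)$ phases of a colour.

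The main obstacle is making the step-based congestion argument watertight for large $k$: after tie-breaking the BFS frontiers, one still has to verify that intermediate vertices on the $k$-hop join-paths do not silently appear in several trees simultaneously (e.g.\ because two different join-paths of different clusters share a middle segment), which in turn forces us to specify the BFS routing so that each cluster's tree is truly edge-disjoint from the others at every step. Once this bookkeeping is set up carefully, the rest of the proof is a mechanical substitution of ``one hop in $G$'' by ``$k$ hops in $G$'' throughout the arguments of \Cref{lem:main} and \Cref{rem:main_congest}, with the two complementary congestion bounds combined at the end.
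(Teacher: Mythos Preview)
Your approach is the same as the paper's: simulate \Cref{lem:main} on $G^k$ by replacing each one-hop absorption with a depth-$k$ BFS from the blue frontier, maintain Steiner trees in $G$, and bound the \congest overhead by the per-edge Steiner-tree congestion. Your step-based congestion bound $\kappa=O(\log^3 n)$ per colour is correct and matches the paper: running a single synchronised BFS from all blue vertices produces a forest, so each edge enters at most one tree per step, and there are $O(\log^2 n)$ steps in each of the $O(\log n)$ phases.

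The gap is in your \emph{other} congestion argument, the one meant to give $\kappa=O(k\log n)$ per colour. Your justification---``bookkeeping the $O(\log n)$ label changes per vertex, each of length at most $k$''---does not bound how many trees a given edge belongs to. The intermediate vertices on the $k$-hop join-paths are dead (or not even in $S$) and never ``change label'' in the sense of \Cref{obs:congestion}; they are pass-through nodes, and without further care the same edge can be reassigned to a fresh blue cluster's tree in every one of the $O(\log^2 n)$ steps of a phase, which only recovers the $O(\log^3 n)$ bound. The paper closes this with an explicit optimisation you do not mention: a dead node that was routed toward some blue root $r$ in a previous step switches to a new BFS tree rooted at $r'$ only if $r'$ is \emph{strictly closer} to it than $r$ was (and even blue clusters that have stopped growing keep running the BFS so that this comparison is meaningful). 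Since the distance from a dead node to the blue set is an integer in $[0,k]$ that strictly decreases at every switch, each edge enters at most $k$ new trees during a phase, yielding $\kappa=O(k\log n)$ per colour. In short, you located the obstacle in the wrong branch: per-step edge-disjointness is easy (one global BFS forest), whereas the $k$-bound is the part that genuinely requires the extra monotonicity rule.
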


The proof idea is to run the algorithm from \Cref{lem:main}, with one change: vertices that will propose in one step will not be just red vertices bordering with a blue vertex, but all red vertices in a $k$-hop neighbourhood of some blue vertex. 
This idea by itself readily gives us a $\poly(\log n)$ round algorithm for $k=\poly(\log n)$. As we will show, with some more work, we can also get an algorithm with the same round complexity as the algorithm from \Cref{rem:main_congest} whenever $k$ is constant. Below, we provide a concrete proof sketch. \shortOnly{A full proof can be found in the full version of this paper on arXiv.} \fullOnly{The full proof is deferred to \Cref{app:congest-powergraph-decomp}.}

\begin{proof}[Proof sketch of \Cref{thm:main_alg_gk}]

Consider the algorithm from \Cref{thm:main} with the following change. We generalize each \emph{step} --- red nodes with a blue neighbour are proposing to an arbitrary blue neighbour--- so that all the nodes with at least one blue neighbour in the their $k$-hop neighbourhood are proposing. 
We can mark all such red nodes in $k$ rounds by running a breadth first search (BFS) from all blue nodes simultaneously. 
This can be implemented in such a way that each edge is used in at most one round, hence it can be done in $k$ rounds of the \congest model.  
Note that running BFS also naturally outputs an oriented forest with blue nodes as the roots. 
This forest will also contain dead nodes. 
Now it is simple to implement proposals of red nodes to blue clusters: each red node in a forest will propose to its root and, if accepted by the respective blue cluster, the whole path to the root (that potentially also contain dead nodes) is added to the corresponding Steiner tree. 

It is easy to see that this adaptation of our algorithm returns a network decomposition of $G^k$. 

Before we analyze the running time, let us add an additional optimization to ensure that each edge will be added to at most $k$ Steiner trees during one phase (it is added to at most one Steiner tree during one phase in the original algorithm):
During a phase and in between steps, a dead node $v$ that was used in a BFS tree rooted at $r$ will change to a different BFS tree only if the blue root $r'$ of the new BFS tree is closer to the node $v$ than its current root $r$. 
This property will hold in our implementation of the algorithm because of the following two properties: (A) we break symmetry during BFS in the same manner in all the steps (e.g., toward the one with smaller ID) and (B) even blue clusters that stopped growing are running BFS in every step, to make sure that a dead node changes a BFS tree only when a closer blue cluster appeared. With this optimization, each edge is used by at most $O(\min(k, \log^2(n)))$ different Steiner trees in a phase and, hence, by at most $O(\log^2(n) \cdot \min(k, \log^2(n)))$ Steiner trees in total. 

We are now ready to bound the running time. 
The algorithm constructs $O(\log n)$ color classes and each one is constructed in $O(\log n)$ phases with each phase containing $O(\log^2 n)$ steps. 
For each step, we need to run breadth first search for $O(k)$ steps and broadcast information to root via Steiner trees of depth $O(k \log^3 n)$, which dominates the $O(k)$ steps of the breadth first search. 
Moreover, each edge is used by $O(\log n \cdot \min(k + \log^2 n))$ of the Steiner trees. 
This implies a round complexity $O(k \log^8 n \cdot \min(k + \log^2 n))$. 
\end{proof}

%Although a simple extension of the previous algorithm from \Cref{lem:main}, we now need to take more care about the underlying Steiner trees and we need to rethink invariants from \Cref{lem:main}. 

\subsection{Strong-Diameter Network Decomposition}
\label{subsec:strong}

We now explain that by a known method, first presented by Awerbuch et al.\cite{awerbuch96}, in the \local model, we can transform the algorithm of \Cref{thm:main} for weak-diameter network decomposition to an algorithm for strong-diameter network decomposition, which thus proves \Cref{thm:decomp-informal}. Since this is a known connection, we provide only a sketch of the proof.
\begin{definition}Given a graph $G=(V, E)$, we define a network decomposition of $G$ with $c$ colors and strong-diameter $d$ to be a 
partitioning of the vertices into $c$ disjoint color classes $V_1$, \dots, $V_{c}$, such that in the subgraph $G[V_i]$ induced by the vertices of each color $i$, each connected component has diameter at most $d$. Each of these connected component of the subgraph $G[V_i]$ is called a cluster. 
\end{definition}
The following theorem statement is a rephrased and formalized version of \Cref{thm:decomp-informal}:
\begin{theorem}\label{thm:main-strong} Consider an arbitrary $n$-node network graph where each node has a unique $b=O(\log n)$-bit identifier. There is a deterministic distributed algorithm that computes a network decomposition of this graph with $O(\log n)$ colors and strong-diameter $O(\log n)$, in $O(\log^8 n)$ rounds in the \local model. 
\end{theorem}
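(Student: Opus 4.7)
The plan is to invoke the known weak-to-strong-diameter reduction of Awerbuch, Berger, Cowen, and Peleg~\cite{awerbuch96}, instantiated with Theorem~\ref{thm:main_alg_gk} as the weak-diameter primitive. The leverage of running the weak-diameter decomposition on a power graph $G^k$, rather than on $G$ itself, is that non-adjacency in $G^k$ between two clusters of the same color translates into $G$-distance at least $k+1$ between them, and this slack gives room to attach around each cluster a Steiner-tree-based connector in $G$ without colliding with same-color neighbors.

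Concretely, I would first invoke Theorem~\ref{thm:main_alg_gk} on $G^k$ with $k = \Theta(\log n)$, obtaining deterministically in $\poly(\log n)$ rounds of the \local{} model a weak-diameter decomposition with $O(\log n)$ color classes together with a Steiner tree $T_C$ in $G$ for every cluster $C$, of radius $O(k \log^3 n)$. Second, I would declare the strong cluster associated with $C$ to be the full vertex set of $T_C$ (its terminals together with its Steiner nodes); this is connected in $G$ by construction. Conflicts among overlapping same-color Steiner trees (an edge lies in $O(\log^2 n \cdot \min(k,\log^2 n))$ of them) are handled via identifier-based priority and, if necessary, by refining each color class into $O(1)$ sub-classes; the $G$-distance $\geq k+1$ between same-color weak clusters is the key ingredient that makes the resulting strong clusters pairwise vertex-disjoint without breaking connectivity of any chosen tree.

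The running time matches the advertised $O(\log^8 n)$ because in the \local{} model the $k$-hop simulation on $G^k$ costs only a factor of $k = O(\log n)$ and the Steiner-tree-based BFS waves can each be pipelined into a single round via unbounded messages. To reach the announced $O(\log n)$ strong diameter (rather than the $O(k\log^3 n)$ radius of a single Steiner tree), I would apply the construction recursively inside each strong cluster, viewed as its own graph of $\poly(\log n)$ size: at each of $O(\log\log n)$ recursion levels the cluster radius shrinks by a multiplicative factor, while the palette grows by only a constant factor per level and is absorbed back into $O(\log n)$ colors by a final relabeling that exploits the induced tree hierarchy.

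The main obstacle I anticipate is parameter calibration between the $G^k$-separation $k+1$ and the Steiner-tree radius $O(k \log^3 n)$: both scale linearly in $k$, so one cannot ensure geometric disjointness of same-color Steiner trees simply by enlarging $k$. The ABCP-style trick sidesteps this by treating each Steiner tree as the \emph{defining} structure of its cluster and resolving conflicts by identifier priority rather than spatial separation, and verifying that this tie-breaking simultaneously preserves connectivity of every selected tree and keeps the number of colors at $O(\log n)$ is the technical heart of the sketch.
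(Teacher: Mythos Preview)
Your proposal has two genuine gaps.

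\textbf{Gap 1: the Steiner-tree-to-strong-cluster step does not work.} You propose to take, for each weak cluster $C$, the full vertex set of its Steiner tree $T_C$ as the associated strong cluster, and to resolve overlaps by identifier priority. But as you yourself note, the terminal sets of same-color clusters are only $k+1$ apart in $G$, while each $T_C$ has radius $O(k\log^3 n)$; hence same-color Steiner trees can share many vertices. When a shared vertex is assigned by priority to the higher-identifier tree, the lower-identifier tree loses that vertex and is, in general, disconnected. There is no reason the surviving pieces of the losing tree remain connected in $G$, so you no longer have a strong-diameter cluster at all. Refining into $O(1)$ sub-classes cannot fix this: the overlap is at the level of \emph{vertices}, and a vertex can lie in $\poly(\log n)$ trees of the same color, so a constant refinement is insufficient, and any refinement that separates all overlapping trees blows up the number of colors.

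\textbf{Gap 2: the recursion to shrink the diameter is based on a false premise.} You write that each strong cluster can be ``viewed as its own graph of $\poly(\log n)$ size''. This is incorrect: the clusters have $\poly(\log n)$ \emph{radius}, not $\poly(\log n)$ \emph{size}. A ball of radius $O(\log^4 n)$ in $G$ may contain up to $n$ vertices. Recursing on such a subgraph gives no shrinkage, so the claimed reduction from $O(k\log^3 n)$ to $O(\log n)$ diameter does not go through.

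The paper's route is different and avoids both pitfalls. It uses the weak-diameter decomposition of $G^{10\log n}$ only as a \emph{helper} for scheduling: the strong-diameter clusters are not the Steiner trees at all. Instead, within each helper cluster one gathers the $\log n$-hop topology and runs the standard \emph{sequential ball carving} locally, producing fresh balls of strong radius at most $\log n$. The $10\log n$ separation between same-color helper clusters guarantees that these local ball-carving processes never reach each other, so the resulting balls are genuinely disjoint and connected in $G$. Half the vertices are clustered per pass, giving $O(\log n)$ colors, and the strong diameter is $O(\log n)$ directly, with no recursion needed.
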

\begin{proof}[Proof Sketch] 
We first recall the standard sequential algorithm for building a network decomposition with $\log n$ colors and strong-diameter $O(\log n)$ per cluster, and then we explain how the weak-diameter network decomposition algorithm of \Cref{thm:main} helps us build such a strong-diameter decomposition in a distributed manner. 

The standard sequential algorithm algorithm for building a network decomposition with $\log n$ colors and strong-diameter $O(\log n)$ per cluster works as follows. We describe the process for determining the nodes of the first color. The other colors are obtained similarly, by applying the same construction repeatedly for $\log n$ times, each to the graph induced by the remaining nodes. For the first color, starting with an arbitrary node $v$, we do a \emph{sequential ball carving}. That is, we grow a \emph{ball} around this vertex, hop by hop. A ball of radius $r$ is simply all nodes that are within distance $r$ of node $v$. We increment the radius $r$ of this ball gradually, one by one, as long as the number of the nodes outside the ball that are adjacent to the ball is at least equal to the number of nodes in the ball. Once this growth condition is not satisfied, which will happen before $r\leq \log n$ as otherwise the ball has more than $n$ nodes, we consider all nodes in the ball as one cluster (in the first color) of the decomposition, and we consider all nodes outside but adjacent to it as \emph{dead}. All dead nodes are removed from the construction of the first color of the network decomposition. Then, repeatedly, if any node remains that is not dead but also not clustered, we continue a similar sequential ball carving process starting from that node, among nodes that are not dead. This gives the clusters of the first color of the decomposition. Then, we bring all the dead nodes back to life and repeat this process among them, getting the clusters of the second color, and so on. Since each time we cluster at least $1/2$ of the vertices, we finish after at most $\log n$ repetitions, i.e., at most $\log n$ colors.

We now explain how  \Cref{thm:main} allows us to get an efficient distributed simulation of this sequential construction, thus proving \Cref{thm:main-strong}. Let $G':=G^{10\log n}$, i.e., the graph on the same set of vertices as $G$ but where we connect every two vertices $v$ and $u$ that have distance at most $10\log n$ in $G$. We apply the algorithm of \Cref{thm:main} to obtain a weak-diameter network decomposition of $G'$, in $O(\log^8 n)$ rounds of communication on $G$. The resulting network decomposition is a coloring of vertices with $q=O(\log n)$ colors where the clusters in each color have weak-diameter $O(\log^3 n)$ in $G'$, and thus weak-diameter $O(\log^4 n)$ in $G$. We next use this \emph{helper} network decomposition to build our \emph{output} strong-diameter network decomposition with $O(\log n)$ colors and $O(\log n)$ diameter. 

We describe the process for determining the nodes of the first color in the output network decomposition. The other colors are obtained similarly, by applying the same construction repeatedly for $O(\log n)$ times, each to the graph induced by the remaining nodes. 

To determine the nodes of the first color of the output decomposition, we process the colors of the helper network decomposition one by one, in $q$ stages. Let us fix one stage (and thus one color of the helper network decomposition, and its clusters). For each cluster, we elect a leader for it and we gather the topology of the subgraph of all remaining nodes within $\log n$ hops of the nodes of this cluster. Notice that since the cluster has weak-diameter $O(\log^4 n)$, this can be done in $O(\log^4 n)$ rounds. Moreover, the topologies gathered by different clusters are disjoint. 
This is because different clusters of this color of the helper decomposition have distance at least $10\log n$, since otherwise $G'$ would contain an edge connecting the two clusters.

Each cluster $\mathcal{C}$ will perform a sequential ball carving process, on the topology that it has gathered, as follows: We start from an arbitrary node $v$ of color $i\in [1, q]$ in cluster $\mathcal{C}$, and grow a \emph{ball} around it, hop by hop, in the subgraph induced by the remaining nodes. A ball is simply all remaining nodes that are within a certain distance of node $v$, in the subgraph induced by the remaining nodes. We grow the radius of this ball gradually, as long as the number of the nodes outside the ball that are adjacent to the ball is at least equal to the number of nodes in the ball. Once this growth condition is not satisfied, we consider all nodes in the ball as one cluster of the output decomposition, and we consider all nodes outside but adjacent to it as \emph{dead}. All dead nodes are removed from the construction of this color of the output network decomposition. Then, if any node $v'$ of  cluster $\mathcal{C}$ remains unclustered (for the output decomposition), we start a similar ball growing process from $v'$, but only on the graph induced by the remaining nodes. We continue similarly until all nodes of 
cluster $\mathcal{C}$ are clustered for the output decomposition. 

In each step of growing a ball, the number of nodes grows by a $2$ factor. Hence, any ball can grow by at most $\log n$ hops. This implies that the ball growing processes from cluster $\mathcal{C}$ will never reach the ball growing processes from any other cluster $\mathcal{C}'$ of color $i$ of the helper decomposition. Furthermore, each time that we stop a ball's growth, the number of nodes on the boundary of it that die is less than the number of nodes inside the ball (which get clustered for the output network decomposition). Hence, after going through all the $q$ stages, at least $1/2$ of living nodes get clustered, and at most $1/2$ of living nodes die.

Then, we bring all dead nodes back to life and proceed to build the next color of the output network decomposition, only on the subgraph induced by these remaining nodes. As per repetition the number of remaining nodes reduces by a $2$ factor, we finish in $\log n$ repetitions. \end{proof}

\section{Implications and Applications}
\label{sec:implications}
As mentioned before, despite its simplicity, our efficient deterministic network decomposition has far-reaching implications, leading to a general efficient distributed derandomization theorem and  better \emph{deterministic} and \emph{randomized} distributed algorithms for a range of problems, as well as some improvements in massively parallel computation (aka, the MapReduce algorithms).  We next overview these implications. We start in \Cref{subsec:MIS-and-coloring} with the well-studied problems of maximal independent set and coloring, which were among the most well-known open problems in distributed graph algorithms and get settled immediately by our network decomposition. This also serves as a warm up for the standard method of using network decomposition. Then, in \Cref{subsec:Derand}, we present our general derandomization result for the \local model, thus proving \Cref{thm:derandomization}. Finally, in \Cref{subsec:OtherImplications}, we overview a list of other well-studied problems for which we get substantial (deterministic or randomized) improvements.

\subsection{Maximal Independent Set and Coloring}
\label{subsec:MIS-and-coloring}
\subsubsection{MIS}
The Maximal Independent Set (MIS) problem is one of the central problems in the study of distributed graph algorithms. As mentioned before, there have been  well-known $O(\log n)$-round randomized algorithm for this problem since the 1980s\cite{luby86, alon86} but obtaining a deterministic algorithm for it had remained open. 

\paragraph{Deterministic MIS} We next explain how the efficient network decomposition of \Cref{thm:main} directly gives a $\poly(\log n)$-round deterministic MIS algorithm. This already answers Linial's long-standing open question and settles Open Problem 11.2 in the book of Barenboim and Elkin \cite{barenboimelkin_book}. Weaker forms of this problem appear as Open Problems 11.5 and 11.8 in the same book\cite{barenboimelkin_book} and they are now resolved. The method is fairly standard and thus we provide a proof sketch. It also allows us to recall the usual method of using network decomposition to solve problems such as maximal independent set and coloring \cite{awerbuch89}.

\begin{corollary}\label{thm:MIS-local} There is a deterministic distributed algorithm, in the \local model, that computes a maximal independent set in $\poly(\log n)$ rounds.
\end{corollary}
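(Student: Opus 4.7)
The plan is to apply Theorem~\ref{thm:main-strong} to obtain a strong-diameter network decomposition with $c=O(\log n)$ color classes, each cluster having strong-diameter $O(\log n)$, in $O(\log^8 n)$ rounds. Given such a decomposition, I will then process the color classes sequentially, one at a time, in $c$ stages. Throughout the execution I maintain a set $M\subseteq V$ of vertices already chosen to be in the MIS; initially $M=\emptyset$. I also maintain the set $D\subseteq V$ of vertices that are \emph{dominated}, meaning either in $M$ or adjacent to some vertex of $M$.

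In stage $i\in [1,c]$, I handle all clusters of color $i$ simultaneously. For each such cluster $\mathcal{C}$, I elect a leader (say, the node of minimum identifier inside $\mathcal{C}$) and have it gather the topology of the subgraph $G[\mathcal{C}]$, together with the current status of every vertex in $\mathcal{C}$ (whether it lies in $D$ or not, and which of its neighbors outside $\mathcal{C}$ are in $M$). Because $\mathcal{C}$ is connected in $G$ and has diameter $O(\log n)$, this collection step takes $O(\log n)$ rounds. The leader then computes, locally and sequentially (e.g., by the greedy algorithm), an MIS $I_\mathcal{C}$ of the subgraph of $G[\mathcal{C}]$ induced by the vertices of $\mathcal{C}\setminus D$. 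The leader broadcasts $I_\mathcal{C}$ back to its cluster in another $O(\log n)$ rounds. We then set $M \gets M\cup \bigcup_\mathcal{C} I_\mathcal{C}$ and update $D$ accordingly, which again takes $O(1)$ rounds per stage for neighbors to learn the updates. Crucially, since any two distinct clusters of color $i$ are non-adjacent in $G$, the sets $I_\mathcal{C}$ combined with $M$ still form an independent set in $G$.

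To see correctness, independence is preserved by construction: each $I_\mathcal{C}$ is independent in $G[\mathcal{C}]$ and avoids vertices already dominated, while the non-adjacency of same-color clusters prevents conflicts across clusters within a stage. For maximality, consider any vertex $v\notin M$ at the end of the algorithm; let $i$ be the color of $v$'s cluster $\mathcal{C}$. At the moment stage $i$ ran, $v$ was either already in $D$ (hence already adjacent to some vertex of $M$, which remains in $M$ forever), or the greedy MIS inside $\mathcal{C}\setminus D$ must have forced $v$ to have a neighbor selected (either in $I_\mathcal{C}$ itself, or among the already-fixed boundary contributing to $D$). Either way $v$ has a neighbor in the final $M$.

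The round complexity is $O(\log^8 n)$ for the decomposition plus $O(\log n)$ stages of $O(\log n)$ rounds each, totaling $\poly(\log n)$. The only subtle point, and hence the main thing to pin down carefully, is the use of the strong-diameter guarantee: it ensures clusters of a common color are genuinely non-adjacent components of $G[V_i]$, which is what makes the parallel local MIS computations across different clusters in the same stage safe. If one preferred to use only the weak-diameter decomposition of Theorem~\ref{thm:main} directly, the same strategy goes through by instead applying the decomposition to $G^2$ (or equivalently processing each color via the Steiner trees while serializing conflicting clusters), but invoking Theorem~\ref{thm:main-strong} is cleaner and keeps the proof essentially a one-line reduction from network decomposition to MIS.
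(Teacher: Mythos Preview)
Your proof is correct and follows essentially the same approach as the paper: compute a network decomposition, then process the colors sequentially, with each cluster's leader gathering the local topology and running a greedy MIS. The only cosmetic difference is that the paper invokes the weak-diameter decomposition of \Cref{thm:main} directly (clusters of weak-diameter $O(\log^3 n)$, computed in $O(\log^7 n)$ rounds), whereas you invoke the strong-diameter version \Cref{thm:main-strong}; both yield $\poly(\log n)$ rounds overall.
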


\begin{proof}
First, we compute a network decomposition with $O(\log n)$ colors and clusters of diameter $O(\log^3 n)$, in $O(\log^7 n)$ rounds, using \Cref{thm:main}. Then, we process the clusters color by color. In each color $i$, the center node of each cluster aggregates at the center the topology of the cluster as well as the information of which nodes adjacent to the cluster have already been added to the maximal independent set, when processing the previous colors $1$ to $i-1$. Since the cluster diameter is $O(\log^3 n)$, this information can be gathered in $O(\log^3 n)$ rounds. Then, the center simulates a greedy process of adding the vertices of this cluster to the MIS, one by one, for any node that does not already have a neighbor in the MIS. Since any two cluster of the same color are non-adjacent, the computations of different clusters can happen simultaneously. Processing each color takes $O(\log^3 n)$ rounds, which means that we finish processing all the $O(\log n)$ colors in $O(\log^4 n)$ rounds. Together with the $O(\log^7 n)$ rounds used for computing the network decomposition, this is a deterministic maximal independent set algorithm that runs in $O(\log^7 n)$ rounds. 
\end{proof}

We note that, due to a very recent breakthrough of Balliu et al.\cite{balliu2019LB}, any deterministic algorithm for MIS needs a round complexity of $\Omega(\log n/\log\log n)$. 

\paragraph{Randomized MIS} Plugging the above deterministic MIS algorithm into the shattering framework of the algorithm of \cite{gmis} improves also the randomized complexity of MIS:
\begin{corollary}
There is a randomized distributed algorithm, in the \local model, that computes a maximal independent set in $O(\log \Delta) + \poly(\log\log n)$ rounds, with probability at least $1-1/\poly(n)$.
\end{corollary}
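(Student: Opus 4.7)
The plan is to combine the randomized MIS algorithm of \cite{gmis} with the new deterministic MIS algorithm from \Cref{thm:MIS-local} via the standard shattering framework. First, I would run Ghaffari's randomized MIS algorithm for its first, randomized phase of $O(\log \Delta)$ rounds. The key property established in \cite{gmis} is the \emph{shattering guarantee}: with probability at least $1-1/\poly(n)$, after these $O(\log\Delta)$ rounds the subgraph induced by the \emph{bad} vertices --- those that are neither in the partial MIS nor have a neighbor in it --- consists of connected components each of size at most $N = \poly(\log n)$.

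Second, on each remaining bad component I would run the deterministic MIS algorithm of \Cref{thm:MIS-local} to completion. Since each component has $N = \poly(\log n)$ vertices, and since \Cref{thm:MIS-local} runs in $\poly(\log N)$ rounds, the deterministic post-processing uses only $\poly(\log\log n)$ rounds. Because the components are disjoint and non-adjacent (they are separated by vertices that are already dominated), these deterministic subroutines can be executed in parallel without interference, so the total round complexity is $O(\log\Delta) + \poly(\log\log n)$, as desired.

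The main subtlety --- rather than a real obstacle --- is that the deterministic algorithm of \Cref{thm:MIS-local} nominally assumes $O(\log N)$-bit identifiers on an $N$-vertex graph, whereas the identifiers available inside a bad component of size $N = \poly(\log n)$ come from a universe of size $\poly(n)$. This is exactly the setting handled by \Cref{rem:big_labels}: with identifier space $[1,S]$ for $S = \poly(n)$, the weak-diameter decomposition (and hence the MIS algorithm built on top of it) can be run with an additional $\log^* S = O(\log^* n)$ factor, which is absorbed into $\poly(\log\log n)$. Once this identifier issue is addressed, the bound $O(\log \Delta) + \poly(\log \log n)$ follows, and the success probability $1-1/\poly(n)$ is inherited from the shattering step, since the deterministic second phase never fails.
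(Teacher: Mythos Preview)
Your plan matches the paper's approach, which is simply to plug the new deterministic MIS algorithm into the shattering framework of \cite{gmis}; the paper gives no further detail, and you correctly anticipate the identifier issue handled by \Cref{rem:big_labels}.

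One technical point is worth flagging: the shattering guarantee of \cite{gmis} is not that the remaining components have $\poly(\log n)$ vertices, but only $\poly(\Delta)\cdot\log_\Delta n$ vertices. Applying \Cref{thm:MIS-local} directly to such components would yield $\poly(\log\Delta)+\poly(\log\log n)$ rounds rather than $O(\log\Delta)+\poly(\log\log n)$. The actual post-shattering phase in \cite{gmis} first clusters the surviving vertices around a ruling set so that the resulting \emph{cluster graph} has components of size $\poly(\log n)$, and it is on this cluster graph that the deterministic network decomposition (and hence \Cref{rem:big_labels}) is invoked. With that standard intermediate step inserted, your plan goes through as written.
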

We note that due to a celebrated lower bound of Kuhn, Moscibroda and Wattenhofer~\cite{kuhn16_jacm}, any (randomized) algorithm for MIS needs a round complexity of $\Omega(\frac{\log \Delta}{\log\log \Delta})$, which means the $\Delta$ dependency in the above algorithm is nearly optimal. 
Moreover, regarding the dependency on $n$, due to another result of Balliu  et al.\cite{balliu2019LB}, any randomized algorithm for MIS needs a round complexity of $\Omega(\frac{\log \log n}{\log\log\log n})$, on some graphs with $\Delta = \Omega(\frac{\log \log n}{\log\log\log n})$. Thus, one cannot hope for an algorithm with round complexity $O(\log \Delta) + o(\frac{\log \log n}{\log\log\log n})$, or even $o(\Delta) + o(\frac{\log \log n}{\log\log\log n})$.
\smallskip

\paragraph{MIS with small messages} The algorithm described in the proof of \Cref{thm:MIS-local} works in the \local model, where message sizes are unbounded. We can also obtain an algorithm for the \congest model, where message sizes are bounded to $O(\log n)$:
\begin{corollary}\label{thm:congest-MIS} There is a deterministic distributed algorithm, in the \congest model, that computes a maximal independent set in $\poly(\log n)$ rounds.
\end{corollary}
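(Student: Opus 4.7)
The plan is to combine our polylogarithmic-time deterministic CONGEST network decomposition (\Cref{rem:main_congest}) with the derandomization framework of Ghaffari, Kuhn, and Maus~\cite{ghaffari2017complexity}, which is already cited in the paper as the engine behind the general LOCAL derandomization (\Cref{thm:derandomization}). That framework takes a randomized CONGEST algorithm whose analysis uses only bounded-independence random bits, together with a polylogarithmic-round deterministic CONGEST network decomposition with low-congestion Steiner trees, and outputs a deterministic polylogarithmic-round CONGEST algorithm for the same problem. I would instantiate this framework with Luby's $O(\log n)$-round randomized CONGEST MIS algorithm (which uses only pairwise-independent random bits, fits in $O(\log n)$-bit messages, and enjoys the required concentration-style guarantees) and the new CONGEST network decomposition of \Cref{rem:main_congest}. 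The total round complexity is then $O(\log n)\cdot \poly(\log n) = \poly(\log n)$, as claimed.

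An alternative, more self-contained route would be to mimic the LOCAL proof of \Cref{thm:MIS-local} directly. First, compute a strong-diameter network decomposition in CONGEST with $O(\log n)$ colors and strong-diameter $O(\log n)$ per cluster; this can be obtained by applying \Cref{thm:main_alg_gk} with $k=\Theta(\log n)$ in place of the direct graph-power simulation used inside the LOCAL argument for \Cref{thm:main-strong}. This yields, in $\poly(\log n)$ rounds, both the decomposition and a spanning tree of depth $O(\log n)$ for each cluster in $G$. Then process the color classes sequentially. Within a single color, clusters are pairwise non-adjacent in $G$, hence their MIS computations can be run in parallel without conflict. Within a single cluster, compute its MIS subject to the forbidden set of cluster vertices that already have a neighbor in the MIS from earlier colors, by simulating the greedy-in-ID-order sequential MIS over the cluster's spanning tree using pipelined convergecast and broadcast.

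The main obstacle, in either route, is performing the per-cluster computation within the $O(\log n)$-bit bandwidth of CONGEST, since the LOCAL proof simply gathers the whole cluster topology at a center, which is not available here. In the direct route, a naive sequential enumeration of cluster vertices would need $\Omega(|\mathcal{C}|)$ rounds, which can be exponentially larger than the cluster's diameter; the key is a careful pipelining argument on the spanning tree that exploits the fact that a node can decide as soon as it knows the decisions of its lower-ID neighbors, so many decisions can be resolved concurrently in $\poly(\log n)$ waves along the tree. In the derandomization route, the heavy lifting is already encapsulated in the framework, and the only thing left to verify is that our network decomposition meets its structural prerequisites, specifically the $O(\log n)$ edge-congestion of the Steiner trees per color, which is exactly the content of \Cref{rem:main_congest,obs:congestion}.
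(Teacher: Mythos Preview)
Your first route is close in spirit to what the paper does, but the attribution is off in a way that matters. The framework of Ghaffari, Kuhn, and Maus~\cite{ghaffari2017complexity} (and \Cref{thm:derandomization}) is a \local derandomization: it works by gathering full $2R$-hop topologies at cluster centers, which is exactly the operation \congest forbids. The \congest-compatible derandomization of pairwise-independent Luby via the method of conditional expectations is due to Censor-Hillel, Parter, and Shwartzman~\cite{censor2017derandomizing}, and that is precisely the black box the paper invokes. So your first route, once re-attributed, \emph{is} the paper's proof: compute the weak-diameter \congest decomposition of \Cref{rem:main_congest} (no strong diameter needed), then for each color run the $O(D\log^2 n)$-round \congest MIS algorithm of~\cite{censor2017derandomizing} inside each cluster over its Steiner tree, using the $O(\log n)$ edge-congestion bound of \Cref{obs:congestion} to run all same-color clusters in parallel.

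Your second route has a genuine gap in the per-cluster step. Greedy-by-ID MIS has dependency chains whose length is governed by the longest monotone-ID path in the cluster, not by the cluster's diameter; on a path of $m$ nodes with increasing IDs, every node must wait for its predecessor, so no amount of pipelining on a spanning tree collapses this to $\poly(\log n)$ rounds. The observation that ``a node can decide as soon as it knows the decisions of its lower-ID neighbors'' is true but does not bound the depth of that dependency DAG. This is exactly why the paper reaches for a nontrivial external algorithm (Censor-Hillel et al.) rather than simulating the greedy process. Separately, your claim that \Cref{thm:main_alg_gk} with $k=\Theta(\log n)$ yields a \emph{strong}-diameter \congest decomposition overshoots what that theorem provides; it gives a weak-diameter decomposition of $G^k$, and the strong-diameter reduction of \Cref{thm:main-strong} again relies on gathering local topology, a \local step.
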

\begin{proof}[Proof Sketch] The method outline is similar to the \local model algorithm, with two exceptions: (1) we use the \congest-model variant of our network decomposition, which runs in $O(\log^8 n)$ rounds, (2) when processing each cluster, we use a \congest-model MIS algorithm of Censor-Hillel, Parter, and Shwartzman \cite{censor2017derandomizing}, instead of the naive topology gathering step. 
Concretely, Censor-Hillel et al. give an $O(D\log^2 n)$-round MIS algorithm in the \congest model, $D$ denotes the graph diameter. When processing the colors of network decomposition, for each cluster of the color, we can run the algorithm of Censor-Hillel et al. on the cluster (ignoring nodes that already have a neighbor in the MIS). 
Recall from \Cref{lem:main} that per color, each edge of the graph is used by the Steiner trees of $O(\log n)$ clusters. 
Hence, we can run the algorithm of Censor-Hillel et al. for all the clusters of the same color, in parallel, in $O(\log^3 n \cdot \log^2 n \cdot \log n) = O(\log^6 n) $ rounds. Over all the $O(\log n)$ colors, this MIS computation runs in $O(\log^7 n)$ rounds of the \congest model, besides the initial $O(\log^8 n)$ rounds spent for computing a network decomposition. 
\end{proof}

\subsubsection{Coloring} 
\paragraph{Deterministic coloring} One can apply the standard method for using network decomposition, as done above when proving \Cref{thm:MIS-local}, to also obtain an $O(\log^7 n)$ round algorithm for $\Delta+1$ vertex coloring, where $\Delta$ denotes the maximum degree, or its generalization to list-coloring. This efficient coloring resolves Open Problem 11.3 in the book of Barenboim and Elkin\cite{barenboimelkin_book} and gives an alternative, and more systematic, solution for Open Problem 11.4, which asked for an efficient deterministic $(2\Delta-1)$-edge coloring (that problem was settled first in \cite{FischerGK17}).

\begin{corollary}\label{thm:det-coloring-local} There is a deterministic distributed algorithm, in the \local model, that computes a $(\Delta + 1)$ vertex coloring, where $\Delta$ denotes the maximum degree in the graph, in $\poly(\log n)$ rounds. The algorithm can also be generalized to list-coloring where each vertex $v$ should choose its color from a list $L_v$ of colors, where $|L_v| \geq deg(v)+1$.
\end{corollary}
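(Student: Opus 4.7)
The plan is to mimic the template used in the MIS corollary, replacing the local greedy MIS computation with a greedy list-coloring computation inside each cluster. First I would invoke \Cref{thm:main} to compute, in $O(\log^7 n)$ rounds, a (weak-diameter) network decomposition with $q = O(\log n)$ color classes $V_1, \dots, V_q$ in which each cluster has weak-diameter $O(\log^3 n)$ and comes equipped with a Steiner tree of the same radius. These clusters of the decomposition are the scheduling unit; the final $\Delta+1$ colors of the output coloring are unrelated to them.

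Next I would process the decomposition colors one at a time. When handling decomposition color $i$, each cluster $\mathcal{C} \subseteq V_i$ contracts along its Steiner tree: the root collects the induced topology of $\mathcal{C}$, together with, for every $v \in \mathcal{C}$, the list $L_v$ and the set of colors already assigned to neighbors of $v$ lying in clusters processed in decomposition colors $1, \dots, i-1$. Since the Steiner tree has depth $O(\log^3 n)$ in $G$, this aggregation costs $O(\log^3 n)$ rounds (messages may be large, which is fine in \local). The root then simulates locally a sequential greedy list-coloring of $\mathcal{C}$: order the vertices arbitrarily and, for each $v$ in turn, assign it any color from $L_v$ that is not used by an already-colored neighbor (inside $\mathcal{C}$ or in an earlier decomposition color). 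Because $|L_v| \geq \deg_G(v) + 1$, at most $\deg_G(v)$ colors are forbidden, so such a color always exists; in the non-list $(\Delta+1)$ case, each $L_v = \{1, \dots, \Delta+1\}$ and this is the classical greedy argument. Finally, the root broadcasts the assigned colors back through the Steiner tree in another $O(\log^3 n)$ rounds. Clusters of the same decomposition color are non-adjacent, so all of them run this protocol in parallel without conflict, and a vertex never sees a newly coloured neighbour within the \emph{same} decomposition color.

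Summing up, each of the $q = O(\log n)$ decomposition colors costs $O(\log^3 n)$ rounds, for a total of $O(\log^4 n)$ rounds after the decomposition, and $O(\log^7 n)$ in all, which is $\poly(\log n)$. The output is a proper coloring because whenever a vertex $v$ is assigned its color, every already-coloured neighbour of $v$ either lies in an earlier decomposition color (and its color was made available to $v$'s root during aggregation) or lies in the same cluster as $v$ and was handled earlier in the cluster's sequential greedy ordering.

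I do not foresee any real obstacle: the construction is just the standard ``compute a decomposition, then solve color-class-by-color-class locally'' template of \cite{awerbuch89} with a greedy list-coloring subroutine. The only point worth verifying carefully is that weak-diameter (rather than strong-diameter) clusters suffice, but this causes no issue in \local because aggregation along the Steiner tree proceeds through arbitrary intermediate vertices of $G$, whose own eventual colors are irrelevant to the message-passing itself. (If one prefers strong-diameter clusters, \Cref{thm:main-strong} may be used instead with the same round complexity up to polylog factors.)
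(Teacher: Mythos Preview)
Your proposal is correct and follows essentially the same approach as the paper: the paper explicitly states that one applies ``the standard method for using network decomposition, as done above when proving \Cref{thm:MIS-local}'' to obtain the $(\Delta+1)$ (list-)coloring result, which is exactly the color-class-by-color-class greedy template you spell out, with matching round complexities.
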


\paragraph{Randomized coloring} Moreover, plugging this deterministic list-coloring algorithm of \Cref{thm:det-coloring-local} into the randomized coloring algorithm of Chang, Li, and Pettie\cite{chang2018optimal} improves the randomized complexity of $\Delta+1$ coloring from $2^{O(\sqrt{\log\log n})}$ to $\poly(\log\log n)$:

\begin{corollary}\label{crl:rand-coloring-local} There is a randomized distributed algorithm, in the \local model, that computes a $(\Delta + 1)$ vertex coloring, where $\Delta$ denotes the maximum degree in the graph, in $\poly(\log \log n)$ rounds, with probability at least $1-1/\poly(n)$.
\end{corollary}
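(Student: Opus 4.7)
The plan is to treat this as a plug-in result into the randomized $(\Delta+1)$-coloring framework of Chang, Li, and Pettie~\cite{chang2018optimal}. That framework shows that the randomized complexity of $(\Delta+1)$-coloring on $n$-node graphs is dominated by the deterministic complexity of list-coloring on instances of size $\poly(\log n)$. More precisely, their algorithm has the following overall shape: a randomized ``pre-shattering'' phase that runs in $O(\log^* n)$-type time (and more specifically in $\poly(\log\log n)$ rounds once one uses their refined analysis), after which, with probability at least $1-1/\poly(n)$, the set $U$ of still-uncolored vertices induces a graph whose connected components each have size at most $N=\poly(\log n)$, and where each uncolored vertex $v$ retains a palette $L_v$ with $|L_v|\ge \deg_U(v)+1$. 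Completing the coloring is therefore reduced to a deterministic distributed list-coloring problem on each such component.

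The next step is to invoke our new deterministic list-coloring algorithm from \Cref{thm:det-coloring-local}. Since each component has at most $N = \poly(\log n)$ vertices, running that algorithm on the component takes $\poly(\log N) = \poly(\log\log n)$ rounds in the \local model. The components are vertex-disjoint and, because the CLP shattering guarantees that the components are also at pairwise \local-distance larger than the desired round complexity (this is the standard ``shattering'' guarantee, proved by a Lov\'asz-Local-Lemma-type union bound), all components can run the deterministic algorithm in parallel without interference.

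Adding the two parts gives the claimed total: $\poly(\log\log n)$ for pre-shattering plus $\poly(\log\log n)$ for the post-shattering deterministic list-coloring on components of size $\poly(\log n)$, and the overall failure probability is at most $1/\poly(n)$, inherited from the shattering step. Previously, the bottleneck was the post-shattering step, which used the $2^{O(\sqrt{\log N})}$-round deterministic list-coloring of \cite{panconesi-srinivasan} and therefore contributed $2^{O(\sqrt{\log\log n})}$ to the total; replacing it with the $\poly(\log N)$-round algorithm of \Cref{thm:det-coloring-local} is exactly what reduces the complexity to $\poly(\log\log n)$.

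The main conceptual ``obstacle'' is simply to verify that the CLP framework is genuinely black-box modular in its deterministic sub-routine, i.e.\ that the instances it hands off satisfy the list-coloring hypothesis $|L_v|\ge \deg_U(v)+1$ used in \Cref{thm:det-coloring-local}, and that the shattered components are far enough apart in $G$ to be processed in parallel. Both are spelled out in \cite{chang2018optimal}, so beyond citing their reduction, no new technical work is required.
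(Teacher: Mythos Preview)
Your proposal is correct and follows essentially the same approach as the paper: invoke the Chang--Li--Pettie shattering so that, with high probability, uncolored vertices form components of size $\poly(\log n)$ satisfying the list-coloring condition, then run the deterministic list-coloring of \Cref{thm:det-coloring-local} on each component in $\poly(\log\log n)$ rounds. One small imprecision: the shattered components are not guaranteed to be far apart in $G$ (your LLL-style distance claim is not what CLP proves); parallel processing is justified simply because the components are connected components of the \emph{uncolored} subgraph and hence share no edges there, while neighbors already colored are absorbed into the lists.
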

\begin{proof}[Proof Sketch] Following the shattering framework\cite{barenboim_symmbreaking}, the randomized phase of the algorithm of \cite{chang2018optimal} works in $O(\log^* \Delta)$ rounds, and colors almost all nodes, except for some small components of nodes that remain uncolored. The guarantee is that, with probability at least $1-1/\poly(n)$, each remaining component has $\poly(\log n)$ vertices. After that, for the deterministic phase, we can invoke the deterministic list-coloring algorithm of \Cref {thm:det-coloring-local} on each of these components separately, all in parallel. Since each component has $\poly(\log n)$ vertices, this would run in $\poly(\log (\poly(\log n))) = \poly(\log\log n)$ rounds, and would complete the partial coloring to a coloring for all vertices.
\end{proof}

As another coloring result, by using \Cref{thm:det-coloring-local} along with the method of \cite{barenboim2011deterministic}, one can obtain an arboricity-dependent coloring:

\begin{corollary}
There is a deterministic distributed algorithm that computes a $(2+o(1))a$-coloring of any graph with arboricity at most $a$, in $\poly(\log n)$ rounds of the \local model.
\end{corollary}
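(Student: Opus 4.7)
The plan is to combine the Barenboim--Elkin H-partition of \cite{barenboim2011deterministic} with the deterministic list-coloring algorithm of \Cref{thm:det-coloring-local}. First, I would invoke \cite{barenboim2011deterministic} to deterministically compute, in $O(\epsilon^{-1} \log n)$ rounds of the \local model, an H-partition of $V$ into $\ell = O(\log n)$ layers $V_1, V_2, \ldots, V_\ell$ with the property that every vertex $v \in V_i$ has at most $d := (2+\epsilon)a$ neighbors in $V_i \cup V_{i+1} \cup \cdots \cup V_\ell$. Setting $\epsilon = 1/\log\log n$ (or any slowly-vanishing function of $n$) yields $d = (2+o(1))a$ while keeping the H-partition step within $\poly(\log n)$ rounds.

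Next, I would fix a palette $C$ of size $|C| = d+1$ and process the layers in decreasing order of index, from $V_\ell$ down to $V_1$. When it is time to color layer $V_i$, every neighbor of $v \in V_i$ lying in some higher layer is already colored; let $L_v \subseteq C$ be $C$ minus those forbidden colors, and write $\deg_i(v)$ for the number of neighbors of $v$ that sit inside $V_i$. Since $v$ has at most $d$ neighbors in $V_i \cup V_{>i}$, at most $d - \deg_i(v)$ of them are already colored, giving
\[
|L_v| \geq |C| - (d - \deg_i(v)) = \deg_i(v) + 1.
\]
This is exactly the list-size condition that the list-coloring version of \Cref{thm:det-coloring-local} requires. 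Applying that algorithm to the induced subgraph $G[V_i]$ with the lists $\{L_v\}_{v \in V_i}$ therefore completes the coloring of $V_i$ in $\poly(\log n)$ rounds, using only colors from $C$. Different layers are handled sequentially, so no interference between layers arises.

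Iterating over the $\ell = O(\log n)$ layers yields a proper coloring of $G$ with $d + 1 = (2+o(1))a$ colors, in $O(\log n) \cdot \poly(\log n) = \poly(\log n)$ rounds of list-coloring, plus the $\poly(\log n)$ rounds spent on the H-partition.

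The main obstacle is a careful balancing of the parameter $\epsilon$: one has to pick it small enough to push the leading constant down to $2+o(1)$, but large enough that both the H-partition step (whose complexity grows like $1/\epsilon$) and the total polylogarithmic-round budget remain unaffected. Because both black boxes are polynomial in $1/\epsilon$ and $\log n$, any choice such as $\epsilon = 1/\log\log n$ comfortably works, and no genuinely new idea is needed beyond plugging the two ingredients together.
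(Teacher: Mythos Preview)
Your proposal is correct and matches the paper's intended approach, which is simply to combine the Barenboim--Elkin H-partition of \cite{barenboim2011deterministic} with the deterministic list-coloring of \Cref{thm:det-coloring-local}; the paper gives no further detail beyond that one-line pointer. One minor slip: with $\epsilon = 1/\log\log n$ the number of layers is $\ell = O(\epsilon^{-1}\log n) = O(\log n \log\log n)$ rather than $O(\log n)$, but this is still $\poly(\log n)$ and does not affect the argument.
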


\paragraph{Massively Parallel Computation (MPC) of coloring} we also get a nearly-exponential improvement for massively parallel (aka, MapReduce) algorithms\cite{karloff2010mpc} for $\Delta+1$ coloring. It is beyond the scope of this paper to explain the exact setting and review the related literature. For those, and particularly for the coloring problem, we refer the readers to \cite{karloff2010mpc, chang2019coloring, ghaffari2019conditional}. We just briefly state that in the MPC model (with strongly sublinear memory per machine), the $n$-node graph is partitioned among a number of machines, each with memory $n^{\alpha}$ for a constant $\alpha<1$, and per round each machine can send $n^{\alpha}$ bits to the other machines.

We obtain our improvement by plugging in the \local-model deterministic list-coloring algorithm of \Cref{thm:det-coloring-local} into the algorithm of \cite{chang2019coloring}. This gives a randomized MPC $\Delta+1$ coloring algorithm, with strongly sublinear memory per machine, with round complexity of $O(\log\log\log n)$, which improves on the previous bound of $O(\sqrt{\log\log n})$.

\begin{corollary}
There is a randomized MPC algorithm, in the regime where each machine has memory $n^{\alpha}$ for any constant $\alpha<1$, that computes a $\Delta+1$ coloring of any $n$-node graph with maximum degree at most $\Delta$ in $O(\log\log \log n)$ rounds, with high probability.
\end{corollary}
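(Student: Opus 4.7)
The plan is to follow the black-box reduction of Chang, Fischer, Ghaffari, Uitto, and Zheng~\cite{chang2019coloring}, whose strongly sublinear-memory MPC algorithm for $(\Delta+1)$-coloring decomposes into two stages: a randomized shattering stage that runs in $O(1)$ MPC rounds, followed by a deterministic post-shattering stage that handles the residual uncolored subgraph via a LOCAL-model deterministic list-coloring algorithm simulated inside MPC. I would keep the shattering stage completely unchanged, and simply replace the deterministic subroutine with our new \Cref{thm:det-coloring-local}, then re-tabulate the round complexity.

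First I would invoke the shattering phase of \cite{chang2019coloring} verbatim. With high probability it produces a proper partial coloring such that the uncolored vertices break up into connected components each of size $s \leq \poly(\log n)$, and each residual vertex $v$ has a list $L_v$ of permissible colors with $|L_v| \geq \deg_{\text{res}}(v)+1$, so the residual task is exactly a degree-$+1$ list-coloring instance. Second, I would solve each residual instance in MPC using the standard graph-exponentiation simulation: a $T$-round LOCAL algorithm on a component of size $s$ can be executed in $O(\log T)$ MPC rounds by doubling the simulated radius per round, under the memory budget $n^\alpha \gg s$. Applying this to our deterministic list-coloring algorithm, whose round complexity on $s$-vertex inputs is $T(s) = \poly(\log s) = \poly(\log\log n)$, yields
\[
O(\log T(s)) \;=\; O(\log\poly(\log\log n)) \;=\; O(\log\log\log n)
\]
MPC rounds for the post-shattering stage, which dominates the $O(1)$ rounds of the shattering stage. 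Since both stages succeed with probability $1-1/\poly(n)$, the claimed bound follows.

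The main obstacle I expect is not any new argument, but verifying that nothing in the \cite{chang2019coloring} pipeline breaks when the deterministic subroutine is swapped: in particular, that the residual components can be load-balanced across machines so that each machine's memory stays within $n^\alpha$, that the list-coloring instances handed to the deterministic stage satisfy the hypotheses of \Cref{thm:det-coloring-local} (the generalization to list-coloring is already noted there), and that the exponential-speedup simulation applies to our algorithm---which it does because our algorithm is a standard deterministic LOCAL procedure and has no hidden dependence on global identifiers beyond those already available. Since the deterministic stage was the bottleneck in \cite{chang2019coloring}, shrinking it from $2^{O(\sqrt{\log s})}$ to $\poly(\log s)$ propagates cleanly through their analysis, and the $O(\sqrt{\log\log n})$ bound improves to $O(\log\log\log n)$.
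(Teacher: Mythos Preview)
Your proposal is correct and follows exactly the approach the paper takes: the paper simply states that the corollary is obtained by ``plugging in the \local-model deterministic list-coloring algorithm of \Cref{thm:det-coloring-local} into the algorithm of \cite{chang2019coloring},'' and your write-up expands this one-line justification with the standard shattering/graph-exponentiation details. If anything, you have supplied more of the verification than the paper itself does.
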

We also note that due to a conditional hardness result of \cite{ghaffari2019conditional}, conditioned on a standard hardness assumption of $\Omega(\log n)$-complexity for connectivity, improving this $O(\log\log\log n)$-round randomized MPC coloring algorithm would imply a deterministic $\log^{o(1)} n$-round deterministic distributed algorithm for $\Delta+1$ coloring, in the \local model, which would be a major improvement on the state of the art (\Cref{thm:det-coloring-local}). 

\subsection{Derandomization via Network Decomposition}
\label{subsec:Derand}
We now explain how our network decomposition, when put together with the approach of \cite{ghaffari2018derandomizing, ghaffari2017complexity}, leads to an efficient derandomization method for the \local model. We note that this result can be viewed as answering Open Problem 11.1 in the book of Barenboim and Elkin \cite{barenboimelkin_book}, which asked for developing ``a general derandomization technique for the distributed message passing model" and was followed by several locally checkable problems that admit $\poly(\log n)$-round randomized algorithms but no known $\poly(\log n)$-round deterministic algorithm. 

\medskip
\noindent \textbf{Theorem 1.1 (LOCAL Derandomization Theorem)}\emph{ 
We have $$\mathsf{P}\textit{-}\mathsf{LOCAL} = \mathsf{P}\textit{-}\mathsf{RLOCAL}.$$ Here, $\mathsf{P}$-$\mathsf{LOCAL}$ denotes the family of {locally checkable problems} that can be solved by deterministic algorithms in $\poly(\log n)$ rounds of the \local model in $n$-node graphs and $\mathsf{P}$-$\mathsf{RLOCAL}$ denotes the family of \emph{locally checkable problems} that can be solved by randomized algorithms in $\poly(\log n)$ rounds of the \local model, with success probability $1-1/n$.}
\begin{proof}[Proof Sketch]
A formal and precise description of this procedure can be found in \cite{ghaffari2018derandomizing}. To keep this article self-contained and accessible to a broad audience, we provide a less formal sketch here, and without going through the language of the $\mathsf{SLOCAL}$ model of \cite{ghaffari2017complexity}. 

Consider any locally checkable problem $\mathcal{P}$ that can be checked in $t(n)$ rounds by a deterministic \local-model algorithm, and a randomized \local-model algorithm $\mathcal{A}$ for $\mathcal{P}$ that runs in exactly $r(n)$ rounds and produces correct outputs with probability at least $1-1/\poly(n)$. Thus, composing these, we have an algorithm $\mathcal{B}$ that runs in $R = r(n)+t(n)$ rounds and computes the outputs for $\mathcal{P}$, as well as a correctness indicator flag $f_v$ for each node $v$ such that if a constraint of $\mathcal{P}$ involving node $v$ is not satisfied, then $f_v =1$. In other words, if for all nodes $v\in V$ the indicator flags $f_v=0$, the output is a valid solution for the problem. Moreover, the expected number of flags that equal to $1$ is at most $1/\poly(n)$. 
We derandomize this algorithm $\mathcal{B}$ by working through the network decomposition, and fixing the randomness of different nodes, via a method of conditional expectation for the function $\sum_{v} f_v$. 

We first take a network decomposition of $G^{2R+1}$ where each two nodes are connected if their distance is at most $2R+1$. This can be computed deterministically in $R\poly(\log n)$ rounds of the \local model, using \Cref{thm:main}. We get a decomposition into clusters of radius $O(R \log^3 n)$, colored with $O(\log n)$ colors, such that any two clusters of the same color are more than $2R+1$ hops apart.

Then, similar to the standard method explained in the proof of \Cref{thm:MIS-local}, we work through the colors of the network decomposition, one by one. Per color $i$, each cluster gathers the topology from $2R$-hop neighborhood of the cluster in the cluster center (this topology also includes the information of how randomness has been fixed, when processing previous colors), in $O(R\log^3 n)$ rounds. Then, each cluster center fixes the randomness of its vertices one by one, in a sequential manner, ensuring that the expectation of $\sum_{v} f_v$ conditioned on the fixed randomness does not increase. Notice that since $\mathcal{B}$ is an $R$ round algorithm, the randomness of each node $u$ influences only $f_v$ for nodes $v$ that are within distance $R$ of node $u$. Hence, the cluster center can compute the change in the expected value of $\sum_{v} f_{v}$ when fixing the randomness of each node $u$ in its cluster, and can fix the randomness in a way that does not increase the conditional expectation. Moreover, clusters of the same color can work in parallel as they are more than $2R+1$ hops apart and hence they do not influence the same indicator flag $f_v$ for any node $v$. Once each cluster center fixes the randomness of the node's of its cluster, it reports these values back to the nodes, in $O(R\log^3 n)$ rounds.
Then, we proceed to the next color and repeat a similar procedure. Once we finish processing all the $O(\log n)$ colors, all the randomness is fixed, and still the expected value of $\sum_{v} f_{v}$ is at most $1/\poly(n) \ll 1$. Since $\sum_v f_v$ has to be a non-negative integer value, we must have $\sum_v f_v =0$, which means all $f_v=0$ and thus all the constraints are satisfied. Overall, we now have a deterministic algorithm that runs in $R \cdot \poly(\log n)$ rounds. Hence, any locally checkable problem whose solution can be checked deterministically in $t(n)=\poly(\log n)$ rounds and admits a randomized algorithm that runs in $r(n)=\poly(\log n)$ rounds also has a deterministic algorithm that runs in $(r(n)+t(n))\cdot \poly(\log n) = \poly(\log n)$ rounds.
\end{proof}
\subsection{Other Implications (Deterministic $\&$ Randomized)}
\label{subsec:OtherImplications}
Here, we mention some of the other implications. This list is not exhaustive; these are just some of the prominent instances that came to our mind. A more thorough job is needed to re-examine all the related literature and list all the consequences.  Moreover, in the interest of brevity and due to the large number of the implications, here we just provide a brief and sometimes informal explanation of each problem; the precise setup can be found in the references that we mention.

\subsubsection{Lov\'{a}sz Local Lemma and the Sublogarithmic Complexity Lanscape}

The Lov\'{a}sz Local Lemma has turned out to have a fundamental role in several distributed problems, and perhaps most remarkably, in the complexity of the locally checkable problems that have sublogarithmic complexity. We next review the LLL problem and outline the new result.

\paragraph{Lov\'{a}sz Local Lemma} Consider a probabilistic setting of events defined on a set of random variables. There is one node for each \emph{bad} event, and $p$ denotes the maximum probability among these bad events. Moreover, each two bad events that share a variable are connected via an edge, and we use $d$ to denote the maximum degree of this graph. The Lov\'{a}sz Local Lemma proves that if $epd \le 1$, then there is an assignment to the variables that avoids all the bad events. In the distributed version of this problem, the question is to efficiently compute such as assignment that avoids all the bad events, where the \local-model graph is the same as the dependency graph among the events. See \cite{chung2017LLL, chang2017time, ghaffari-lll, ghaffari2018derandomizing}. 

\paragraph{Improved deterministic LLL} By running the $O(\log^2 n)$-round randomized distributed LLL algorithm of Moser and Tardos\cite{mt} through the derandomization method of \Cref{thm:derandomization}, we get a $\poly(\log n)$ round deterministic distributed algorithm for Lov\'{a}sz Local Lemma:  

\begin{corollary}
There is a deterministic distributed algorithm that solves the Lov\'{a}sz Local Lemma problem in $\poly(\log n)$ rounds, so long as the maximum probability among the  bad events $p$ and the maximum dependency degree among them $d$ satisfy $epd \leq 1-\delta$, for any constant $\delta>0$ or even a slightly sub-constant $\delta>1/\poly(\log n)$.
\end{corollary}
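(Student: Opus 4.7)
The plan is to combine the randomized distributed Moser--Tardos resampling algorithm with the LOCAL derandomization theorem (\Cref{thm:derandomization}). Concretely, I would first reformulate LLL in the form required by the derandomization machinery: each node of the distributed LLL instance corresponds to a bad event, and its output is the assignment of the random variables that it controls. The goal is to produce an assignment avoiding every bad event.

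Next I would verify the local checkability hypothesis of \Cref{thm:derandomization}. Given any candidate assignment, whether a bad event is avoided depends only on variables shared with its neighbors in the dependency graph, which is exactly the LOCAL communication graph. Hence each event can evaluate whether it is satisfied in $O(1)$ rounds, and if the assignment is invalid then at least one such node detects a violation. Thus the LLL problem is locally checkable in $O(1)$, hence $\poly(\log n)$, rounds.

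I would then invoke the parallel Moser--Tardos resampling algorithm as the randomized algorithm $\mathcal{A}$. Under the hypothesis $epd \le 1-\delta$, the standard analysis shows this resampling algorithm terminates with a valid assignment after $O(\log n / \delta)$ resampling rounds with probability at least $1 - 1/\poly(n)$. Each resampling round can be implemented in $O(\log n)$ communication rounds on the dependency graph by parallel symmetry breaking on independent subsets of events that currently hold, giving a total round complexity $O(\log^2 n / \delta)$, which remains $\poly(\log n)$ so long as $\delta \ge 1/\poly(\log n)$. The success probability $1 - 1/\poly(n)$ is exactly what \Cref{thm:derandomization} requires.

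Finally, I would apply \Cref{thm:derandomization} to the composition of this $\poly(\log n)$-round randomized algorithm and the $O(1)$-round checker, which directly yields a deterministic algorithm of complexity $\poly(\log n)$ for the LLL problem. The main subtlety I would pay attention to is the mildly sub-constant regime $\delta \ge 1/\poly(\log n)$: one must confirm that the Moser--Tardos running time $O(\log n/\delta)$ and the $1-1/\poly(n)$ success bound do not degrade in this regime; both do in fact survive because $1/\delta$ still stays in $\poly(\log n)$. The real conceptual obstacle---and the reason this was long open---was the absence of an efficient deterministic network decomposition to feed into the method-of-conditional-expectations derandomization; this is precisely what \Cref{thm:decomp-informal} supplies, and in turn what makes \Cref{thm:derandomization} applicable here.
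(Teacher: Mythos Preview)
Your proposal is correct and follows essentially the same route as the paper: invoke the $O(\log^2 n)$-round randomized Moser--Tardos algorithm and pass it through the derandomization theorem (\Cref{thm:derandomization}). You supply more detail than the paper does---explicitly verifying local checkability and tracking the $1/\delta$ factor in the sub-constant regime---but the argument is the same.
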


\paragraph{Improved randomized LLL} By plugging this deterministic Lov\'{a}sz Local Lemma algorithm into the frameworks of \cite{ghaffari-lll, ghaffari2017complexity}, we get a randomized LLL algorithm with complexity $\poly(\log\log n)$ in constant-degree graphs.

\begin{corollary}
There is a randomized distributed algorithm that solves the Lov\'{a}sz Local Lemma problem in $O(d^2)+\poly(\log\log n)$ rounds, so long as the maximum probability among the  bad events $p$ and the maximum dependency degree among them $d$ satisfy $Cpd^8 \leq 1$, for some constant $C>1$.
\end{corollary}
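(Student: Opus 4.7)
The plan is to follow the standard shattering framework for Lov\'{a}sz Local Lemma that was developed by Ghaffari in \cite{ghaffari-lll}, and then refined in \cite{ghaffari2017complexity}. That framework decomposes any LLL-type computation into two phases: a \emph{pre-shattering} randomized phase whose only job is to leave behind a \lq\lq residual\rq\rq\ problem supported on small components, and a \emph{post-shattering} deterministic phase that finishes those components in parallel. The novelty here is purely quantitative: we simply insert our new polylogarithmic deterministic LLL algorithm (the previous corollary in this section) as the post-shattering subroutine, replacing the $2^{O(\sqrt{\log n})}$-round black box that was used before.

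Concretely, the first step is to run Ghaffari's randomized LLL procedure from \cite{ghaffari-lll} for $O(d^2)$ rounds. Under the criterion $C p d^{8} \leq 1$ (with $C$ the constant baked into that algorithm), the analysis of \cite{ghaffari-lll} guarantees that with probability at least $1 - 1/\poly(n)$ every bad event that is still \emph{live} at the end of this phase lies in a connected subgraph of the dependency graph of size $O(d^{2}\log n) = \poly(\log n)$ (here we use that $d$ is a constant, or more generally absorbed into the assumption; the shattering bound depends on the slack in the polynomial LLL condition). Moreover, conditioned on the partial assignment produced, the live events still satisfy a slightly weaker polynomial LLL criterion of the form $e p' d \leq 1 - \delta$ with $\delta$ bounded away from $0$ by a constant.

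The second step is to run, in parallel on each of these residual components, the deterministic polylogarithmic LLL algorithm from the preceding corollary of this section. Because each component has $N := \poly(\log n)$ nodes, and because the deterministic algorithm is polylogarithmic in the \emph{number of nodes of the instance it sees}, it terminates in $\poly(\log N) = \poly(\log \log n)$ rounds. Different components cannot interfere with each other since they are non-adjacent in the dependency graph, so they run simultaneously within the overall \local round budget. Adding the two phases yields the claimed $O(d^{2}) + \poly(\log \log n)$ complexity, with the same $1 - 1/\poly(n)$ success probability inherited from the shattering phase.

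The main obstacle is conceptual rather than technical: one has to verify that the post-shattering residual problem still meets the hypotheses of the deterministic LLL corollary, namely that on each small component the polynomial LLL slackness $e p d \leq 1 - \Omega(1)$ survives the conditioning and can be checked locally in $\poly(\log n)$ rounds so that the derandomization framework of \Cref{thm:derandomization} applies. Both points are already handled inside \cite{ghaffari-lll, ghaffari2017complexity}, which is precisely why we can plug in our new deterministic LLL algorithm as a black box; no new analysis of the shattering step itself is needed.
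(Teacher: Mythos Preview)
Your proposal is correct and matches the paper's approach exactly: the paper simply states that the corollary follows ``by plugging this deterministic Lov\'{a}sz Local Lemma algorithm into the frameworks of \cite{ghaffari-lll, ghaffari2017complexity},'' which is precisely the shattering-plus-deterministic-cleanup argument you spell out. One small sharpening: the post-shattering components have size $\poly(d,\log n)$ rather than $\poly(\log n)$, so the deterministic phase costs $\poly(\log d + \log\log n)$ rounds, but the $\poly(\log d)$ part is absorbed into the $O(d^2)$ term for all $d$, which is how the stated bound holds without assuming $d$ is constant.
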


This $\poly(\log\log n)$ round complexity for constant-degree graphs almost settles a conjecture of Chang and Pettie\cite{chang2017time}; their conjecture postulates the existence of an $O(\log\log n)$ time algorithm. 

\paragraph{Complexity of LCLs in the sublogarithmic landscape} 
Due to a beautiful result of Chang and Pettie\cite{chang2017time}, this improved LLL has a remarkable complexity-theoretic consequence:

\begin{corollary}
Any locally-checkable problem that admits an $o(\log n)$ round randomized distributed algorithm in constant-degree graphs also admits a $\poly(\log\log n)$ round randomized algorithm.
\end{corollary}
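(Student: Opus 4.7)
The plan is to invoke the black-box reduction of Chang and Pettie~\cite{chang2017time}, which is the source of this corollary, and then plug in the new LLL algorithm. Concretely, Chang and Pettie prove a gap-style structural result for locally-checkable labelings (LCLs) on graphs of constant maximum degree: they show that any LCL admitting an $o(\log n)$-round randomized algorithm admits in fact a randomized algorithm whose running time is dominated by a single call to a distributed Lovász Local Lemma solver on a dependency graph whose degree is a function only of the original constant degree. The round complexity of the resulting algorithm is, up to constants and additive $\poly(\log\log n)$ overhead for local post-processing, the same as the round complexity of the LLL subroutine it uses.

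First, I would state the Chang--Pettie reduction as a black box: given an LCL $\mathcal{L}$ on $\Delta$-bounded graphs with randomized complexity $T(n) = o(\log n)$, there is a randomized algorithm solving $\mathcal{L}$ whose dominant cost is one application of a distributed LLL solver on an instance with maximum dependency degree $d = d(\Delta) = O(1)$ and with bad-event probability $p$ satisfying the slack criterion required by our LLL algorithm, specifically $Cpd^8 \leq 1$. The remaining overhead outside the LLL call is $\poly(\log\log n)$.

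Second, I would plug in the improved randomized LLL algorithm from the preceding corollary, which runs in $O(d^2) + \poly(\log\log n)$ rounds under exactly the criterion $Cpd^8 \leq 1$. Since $d$ is a constant depending only on $\Delta$ (itself a constant), the term $O(d^2)$ collapses to $O(1)$, and the LLL step executes in $\poly(\log\log n)$ rounds. Combining this with the $\poly(\log\log n)$ overhead of the reduction yields the claimed $\poly(\log\log n)$-round randomized algorithm for $\mathcal{L}$.

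The main conceptual work is entirely contained in the Chang--Pettie reduction, so the ``hard part'' for this corollary is only a bookkeeping check: one must verify that the LLL instance produced by the reduction has constant dependency degree and meets the polynomial slack $Cpd^8 \leq 1$ demanded by our LLL algorithm, rather than merely the standard asymptotic Moser--Tardos condition $epd \leq 1$. This is where I would be most careful in a full write-up; the slack comes for free in the Chang--Pettie construction on constant-degree graphs, since one can afford to blow up the bad-event probabilities by any constant factor, but it should be stated explicitly to ensure that the hypothesis of the preceding LLL corollary is satisfied. Once that is checked, the chain of implications is immediate.
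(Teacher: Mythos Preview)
Your proposal is correct and follows exactly the approach the paper takes: the paper itself does not give a standalone proof of this corollary but simply attributes it to the Chang--Pettie reduction \cite{chang2017time} combined with the preceding $\poly(\log\log n)$-round randomized LLL algorithm. Your write-up in fact spells out more of the bookkeeping (constant dependency degree, the $Cpd^8 \le 1$ slack) than the paper bothers to, but the underlying argument is identical.
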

That is, for any problem whose solution can be checked deterministically in $O(1)$ rounds, in bounded degree graphs, the randomized complexity is either $\Omega(\log n)$ and above, or $\poly(\log\log n)$ and below. As soon as we can prove some LCL problem to admit an $o(\log n)$-round algorithm, we immediately get a $\poly(\log\log n)$ round algorithm.
 
\subsubsection{Packing/Covering Integer Linear Programs}
Covering and packing integer Linear Programs are LPs in the standard form where all the coefficients are non-negative; the former is a minimization problem and the latter is a maximization problem. A wide range of optimization problem can be formulated in this manner. 

A general result of Ghaffari, Kuhn, and Maus~\cite[Section 7]{ghaffari2017complexity} shows that for any covering or packing integer linear program, there is a $\poly(\log n/\eps)$ round randomized algorithm in the \local model for computing a $1+\eps$ (integral) approximation. 
The concrete distributed formulation of these LPs is that we have a bipartite graph where each node on the left shows one of the variables and each node on the right shows one of the constraints, and a constrain node is connected to the variable nodes that it includes. Cf. \cite{ghaffari2017complexity} for details. We note that one can imagine a number of other natural formulations of the optimization problem as a graph, but in the \local model, these usually can simulate each other with a constant round complexity overhead. 

By plugging our network decomposition into the framework of \cite{ghaffari2017complexity}, we can derandomize their result and get a deterministic variant:
\begin{corollary}
For any covering or packing integer linear program, there deterministic algorithm in the \local model that computes a $1+\eps$ approximation in $\poly(\log n/\eps)$ rounds.
\end{corollary}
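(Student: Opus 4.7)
The plan is to invoke the derandomization machinery underlying \Cref{thm:derandomization}, applied to the randomized $\poly(\log n/\eps)$-round $(1+\eps)$-approximation algorithm of Ghaffari, Kuhn, and Maus~\cite{ghaffari2017complexity}. Strictly speaking, a $(1+\eps)$-approximation is not locally checkable in the narrow sense, since the value of the objective is a global quantity, so I would not cite \Cref{thm:derandomization} as a black box. Instead, I would replay the argument of its proof sketch in the more general conditional-expectation form that appears in~\cite{ghaffari2017complexity, ghaffari2018derandomizing}, with the potential function tailored to the covering/packing setting.

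Concretely, first I would recall from~\cite{ghaffari2017complexity} that the randomness in their algorithm can be expressed so that both feasibility and the objective bound are implied by the absence of a collection of local bad events: one per constraint for constraint violation, and one per variable (or per constraint) for deviation of a local summand of the objective from its expectation. By choice of parameters, each bad event has probability at most $1/\poly(n)$, and each depends only on the random bits assigned to nodes within some radius $r(n) = \poly(\log n/\eps)$. Letting $f_v$ be the indicator of the bad event at node $v$, we obtain $\E[\sum_v f_v] \le 1/\poly(n)$, so any realization with $\sum_v f_v = 0$ is a valid $(1+\eps)$-approximation. Second, using \Cref{thm:main}, I would compute in $r(n)\cdot \poly(\log n)$ \local rounds a weak-diameter network decomposition of $G^{2r(n)+1}$ with $O(\log n)$ colors and cluster weak-diameter $O(r(n)\log^3 n)$, so that clusters of the same color are more than $2r(n)+1$ hops apart in $G$. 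Third, I would process the colors one at a time exactly as in the proof sketch of \Cref{thm:derandomization}: each cluster gathers the $r(n)$-hop topology and the randomness fixed while handling previous colors, and its center greedily fixes the random bits of its own vertices one by one so as never to increase the conditional expectation of $\sum_v f_v$. Because two clusters of the same color are more than $2r(n)+1$ hops apart, their fixed bits influence disjoint sets of $f_v$'s, and they may be processed in parallel. After all $O(\log n)$ colors are handled, the conditional expectation is still below $1/\poly(n) \ll 1$, so $\sum_v f_v$, being a nonnegative integer, must be zero, and the deterministically produced assignment is a $(1+\eps)$-approximation.

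The main obstacle I expect is the first step: verifying that the randomized algorithm of~\cite{ghaffari2017complexity} genuinely admits such a local-bad-event decomposition, with both a small enough dependency radius $r(n)$ and a per-event failure probability of $1/\poly(n)$. This structural requirement is closely tied to how the algorithm rounds fractional LP solutions and how concentration of the objective is proved; in~\cite{ghaffari2017complexity} the analysis is already carried out via per-constraint and per-variable Chernoff-type bounds, so the required packaging should be available, but it must be confirmed that the probabilities can be amplified to $1/\poly(n)$ within the same $\poly(\log n/\eps)$ round budget. Once that input is in hand, the derandomization is a routine instantiation of the network-decomposition-plus-conditional-expectation template, yielding a deterministic $(r(n) + t(n))\cdot \poly(\log n) = \poly(\log n/\eps)$-round algorithm as claimed.
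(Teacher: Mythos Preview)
Your proposal is correct and follows essentially the same approach as the paper. The paper does not give a standalone proof for this corollary; it simply states that the result follows ``by plugging our network decomposition into the framework of \cite{ghaffari2017complexity},'' i.e., exactly the network-decomposition-plus-conditional-expectation template you unwind explicitly, and it notes (as you do) that the packaging of the randomized $(1+\eps)$-approximation into locally analyzable bad events is carried out in \cite{ghaffari2017complexity}.
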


We note that the conference version of \cite{ghaffari2017complexity} describes the method explicitly only for the maximum independent set problem, but the same technique extends to other covering or packing integer linear programs, as outlined in \cite{ghaffari2017complexity}. A full description will appear in the journal version of \cite{ghaffari2017complexity}. As some concrete examples, this implies $\poly(\log n/\eps)$-round deterministic \local-model algorithms for $1+\eps$ approximation of maximum independent set (as a sample packing problem) and for $1+\eps$ approximation of minimum dominating set (as a sample covering problem). It should be remarked that the \local model does not bound the time for local computation in one node and these two particular results take advantage of that.

\subsubsection{Defective and Frugal Colorings}

\paragraph{Defective coloring} The defective coloring problem is a variant of the standard proper coloring problem, which has turned out to be important in the study of distributed graph algorithms. In an $f$-defective coloring, we allow each node to have up to $f$ neighbors in its own color --- in return for this relaxation, we hope for a smaller number of colors. Open Problem 11.7 in the book of Barenboim and Elkin asks for ``\emph{an efficient distributed algorithm for computing a $O(\Delta/p)$-defective $O(p)$-coloring}". 

We note that an iterative-improvement algorithm of Lov\'{a}sz\cite{lovasz1966decomposition}---which starts with an arbitrary coloring and changes node colors one by one, so long as that improves the node's defect--- ensures the existence of such a defective coloring in all graphs. Kuhn\cite{kuhn2009weak} showed that a $\Delta/p$-defective $O(p^2)$ coloring can be computed in $O(\log^* n)$ rounds. Chung, Pettie, and Su\cite{chung2017LLL} gave a randomized algorithm that in $O(\log n)$ rounds computes an $O(\Delta/p)$-defective $O(p)$ coloring. By running their randomized algorithm through our derandomization result (\Cref{thm:derandomization}), we get an efficient deterministic variant which settles Open Problem 11.7:

\begin{corollary}
There is a deterministic distributed algorithm in the \local model that, for any $p$, computes an $O(\Delta/p)$-defective $O(p)$ coloring in $\poly(\log n)$ rounds. 
\end{corollary}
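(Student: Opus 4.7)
The plan is to obtain this corollary as a direct application of the LOCAL Derandomization Theorem (Theorem 1.1) to the randomized algorithm of Chung, Pettie, and Su~\cite{chung2017LLL}. First, I would observe that an $f$-defective $c$-coloring is a locally checkable problem in the strongest possible sense: the constraint at each vertex $v$ involves only the colors of $v$ and its neighbors, so correctness can be verified in a single round of the \local model by having every node exchange colors with its neighbors and counting same-color neighbors. Hence the problem lies in the class to which Theorem 1.1 applies, with a trivial checker of radius $1$.

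Next, I would take the randomized \local-model algorithm of~\cite{chung2017LLL} that computes an $O(\Delta/p)$-defective $O(p)$-coloring in $O(\log n)$ rounds, and confirm that it produces a correct coloring with probability at least $1 - 1/\poly(n)$ (if its natural guarantee is only high probability in a weaker sense, one would boost this by independent repetition and selecting any successful attempt, still in $O(\log n)$ rounds). Composing the coloring algorithm with the one-round checker yields an algorithm $\mathcal{B}$ that runs in $R = O(\log n)$ rounds and, for each node $v$, produces both a tentative color and an indicator flag $f_v$ that equals $1$ only when the defect constraint at $v$ is violated; the expected number of violating flags is $1/\poly(n)$.

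Having placed the problem in the framework, I would invoke Theorem~1.1 (or equivalently, the derandomization procedure sketched in Section 4.2). This builds a network decomposition of $G^{2R+1}$ via Theorem~\ref{thm:main} in $\poly(\log n)$ rounds, and then sweeps through the $O(\log n)$ color classes, letting each cluster center collect the $R$-hop topology and sequentially fix the random bits of its cluster's nodes by the method of conditional expectations applied to $\sum_v f_v$. Since clusters of the same color are more than $2R+1$ hops apart, their derandomizations do not interact. At the end $\E[\sum_v f_v \mid \text{fixed randomness}] < 1$ forces $\sum_v f_v = 0$, giving a valid $O(\Delta/p)$-defective $O(p)$-coloring deterministically in $\poly(\log n)$ rounds.

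The main obstacle is purely bookkeeping rather than conceptual: one must confirm that the algorithm of Chung, Pettie, and Su (or a straightforward amplification of it) satisfies the $1-1/\poly(n)$ success-probability hypothesis required by Theorem 1.1, and that its randomness per node uses only $\poly(\log n)$ bits so the conditional-expectation step at each cluster center is well-defined and computable locally after topology gathering. Both are standard and follow from the structure of the LLL-style algorithm in~\cite{chung2017LLL}; everything else is a direct invocation of the derandomization framework.
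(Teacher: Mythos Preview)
Your proposal is correct and follows exactly the paper's approach: the paper simply states that the corollary follows by running the randomized $O(\log n)$-round algorithm of Chung, Pettie, and Su through the derandomization of \Cref{thm:derandomization}, and you have unpacked precisely that argument, including the observation that defective coloring is locally checkable in one round. Your additional bookkeeping remarks (success-probability amplification, bounded randomness per node) are more caution than the paper exercises, but they are harmless and the core argument is identical.
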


\paragraph{Frugal coloring} A $k$-frugal coloring is a coloring where each color appears at most $k$ times in the neighborhood of each node (independent of the color of that node itself, which is what makes this definition different from defective coloring). We are not aware of any deterministic distributed algorithm for frugal coloring (with good parameters), but there are some efficient randomized algorithms: Chung, Pettie, and Su\cite{chung2017LLL} show a randomized algorithm that computes an $O(\log^2 \Delta/\log\log \Delta)$-frugal $\Delta+1$ coloring in $O(\log n)$ rounds of the \local model, and a $\beta$-frugal $O(\Delta^{1+1/\beta})$-coloring in $O(\log n\log^2 \Delta)$ rounds of the \local model. By derandomizing these algorithms, we get

\begin{corollary}
There are deterministic distributed algorithm that in $\poly(\log n)$ rounds of the \local model compute 
\begin{itemize}
    \item[(I)] a $O(\log^2 \Delta/\log\log \Delta)$-frugal $\Delta+1$ coloring, and 
    \item[(II)] $\beta$-frugal $O(\Delta^{1+1/\beta})$-coloring.
\end{itemize}
\end{corollary}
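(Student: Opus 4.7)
The plan is to obtain both deterministic algorithms by invoking the LOCAL Derandomization Theorem (\Cref{thm:derandomization}) on the randomized frugal-coloring algorithms of Chung, Pettie, and Su~\cite{chung2017LLL}. The two ingredients we need are (i) that each of these frugal colorings is a locally checkable problem in the sense required by \Cref{thm:derandomization}, and (ii) that the known randomized algorithms run in $\poly(\log n)$ rounds in the \local model with success probability at least $1-1/n$.

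First, I would verify local checkability. A $k$-frugal proper coloring demands that (a) no edge is monochromatic and (b) in the neighborhood of every node $v$, each color appears at most $k$ times. Both constraints span only the $1$-hop neighborhood of each node, so a single communication round lets every node examine its neighbors' colors and either accept or raise a flag. Thus frugal coloring is locally checkable in $O(1)$ rounds, which trivially satisfies the $\poly(\log n)$ checkability requirement of \Cref{thm:derandomization}. This covers both variant (I) and variant (II) uniformly, since the checking procedure is identical up to the value of $k$.

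Second, I would cite the randomized algorithms of Chung, Pettie, and Su~\cite{chung2017LLL}: an $O(\log n)$-round \local algorithm that produces an $O(\log^2\Delta/\log\log\Delta)$-frugal $(\Delta+1)$-coloring with probability $1-1/\poly(n)$, and an $O(\log n \log^2\Delta)$-round \local algorithm that produces a $\beta$-frugal $O(\Delta^{1+1/\beta})$-coloring with probability $1-1/\poly(n)$. Both run in $\poly(\log n)$ rounds and succeed with probability at least $1-1/n$, so each lies in $\mathsf{P}\text{-}\mathsf{RLOCAL}$.

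Applying \Cref{thm:derandomization} to each algorithm yields a deterministic \local algorithm running in $\poly(\log n)$ rounds that solves the corresponding frugal coloring problem, which is exactly the corollary's two statements. The only potentially subtle point is ensuring that the randomized algorithms of~\cite{chung2017LLL} are phrased so that a single correctness indicator flag per node detects any violation, so that the conditional-expectation derandomization inside the proof sketch of \Cref{thm:derandomization} applies verbatim; since frugal coloring is $1$-hop checkable, this reduction is immediate, and no new algorithmic ingredient beyond our network decomposition is required.
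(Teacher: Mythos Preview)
Your proposal is correct and follows exactly the approach the paper takes: the paper simply states that the randomized frugal-coloring algorithms of Chung, Pettie, and Su run in $\poly(\log n)$ rounds and then says ``By derandomizing these algorithms, we get'' the corollary, without spelling out the details. Your write-up merely makes explicit the two ingredients (local checkability of frugal coloring in $O(1)$ rounds and the $\poly(\log n)$ randomized round complexity) needed to invoke \Cref{thm:derandomization}, so there is nothing to add or correct.
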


\subsubsection{Forest Decomposition and Low Out-degree Orientation}
Consider a graph with arboricity at most $a$, that is, a graph where edges can be decomposed into $a$ forests. Due to a result of Barenboim and Elkin\cite{barenboim10}, there is a deterministic distributed algorithm that decomposes any graph of arboricity $a$ into $2a$ forests, in $O(\log n)$ rounds. In Open Problem 11.10 of their book\cite{barenboimelkin_book}, Barenboim and Elkin ask for an ``\emph{efficient distributed algorithm for computing a decomposition of graph with arboricity a into less than 2a forests}". A result of \cite{ghaffari2017orinetation} provides a randomized $\poly(\log n)$ round algorithm that decomposes the graph into $(1+o(1))a$ forests, when $a=\Omega(\log n)$, and into $(1+o(1))a$ pseudo-forests when $a=o(\log n)$. Recall that a pseudo-forest is an undirected graph where each connected component has at most one cycle. In both cases, the decomposition provides an orientation of the edges where each node has out-degree at most $(1+o(1))a$. To the best of our knowledge, in all distributed applications of the aforementioned forest decomposition, a decomposition into pseudo-forests (or alternatively, just the orientation with the bounded out-degree) would also suffice. Plugging this randomized algorithm into our derandomization result (\cref{thm:derandomization}), we get an algorithm that almost settles Open Problem 11.10 of \cite{barenboimelkin_book}:

\begin{corollary}
There is a deterministic $\poly(\log n)$ round algorithm in the \local model that, for any graph with arboricity at most $a$, computes an orientation with maximum outdegree at most $(1+o(1))a$. Moreover, the algorithm
decomposes the graph into $(1+o(1))a$ forests, if $a=\Omega(\log n)$, and into $(1+o(1))a$ pseudo-forests if $a=o(\log n)$.
\end{corollary}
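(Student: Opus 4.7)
The plan is to invoke the LOCAL Derandomization Theorem (\Cref{thm:derandomization}) on top of the randomized $\poly(\log n)$-round algorithm of \cite{ghaffari2017orinetation} that computes the desired orientation and forest/pseudo-forest decomposition with high probability. The bulk of the argument is a routine bookkeeping exercise; the only genuinely delicate point is verifying that the problem fits the locally-checkable framework required by \Cref{thm:derandomization}.

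First, I would recall the statement of the randomized guarantee in \cite{ghaffari2017orinetation}: in $\poly(\log n)$ rounds of the \local model, it produces (i) an orientation of $G$ with maximum out-degree at most $(1+o(1))a$, and (ii) a coloring of the edges into $(1+o(1))a$ color classes that are forests when $a=\Omega(\log n)$ and pseudo-forests when $a=o(\log n)$, with failure probability at most $1/\poly(n)$. This is exactly the success probability required by \Cref{thm:derandomization}.

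Second, I would verify local checkability of the output. The max-outdegree constraint is checked in a single round at every node. For a pseudo-forest color class, acyclicity of components is equivalent to each node having out-degree at most $1$ in that color, which is again a one-round check once the algorithm emits an orientation per color. For the forest case (when $a=\Omega(\log n)$), acyclicity of an arbitrary spanning subgraph is not locally checkable in general, so I would augment the algorithm's output with a certifying parent-pointer structure for each color class; because the algorithm in \cite{ghaffari2017orinetation} builds these forests with $\poly(\log n)$ depth, verifying that the parent pointers form a rooted forest (and that color classes match the pointer structure) takes $\poly(\log n)$ rounds, so if the output is wrong, some node detects it. This places the problem in $\mathsf{P}\text{-}\mathsf{RLOCAL}$ under the relaxed locally checkable notion from the footnote to \Cref{thm:derandomization}.

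Third, applying \Cref{thm:derandomization} transforms the randomized algorithm into a deterministic $\poly(\log n)$-round \local algorithm, which yields all three claims of the corollary by simply reading off the orientation, and, depending on whether $a=\Omega(\log n)$ or $a=o(\log n)$, the forest or pseudo-forest coloring.

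The main obstacle I expect is precisely the local-checkability step in the forest regime: one must either rely on the internal structure of the \cite{ghaffari2017orinetation} algorithm (producing low-depth forests together with explicit parent pointers) or, alternatively, output the certificate alongside the decomposition so that verification in $\poly(\log n)$ rounds is guaranteed. Once this is handled, the rest of the proof is a black-box invocation of \Cref{thm:derandomization}.
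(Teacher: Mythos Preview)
Your proposal is correct and follows the same approach as the paper, which simply says to plug the randomized algorithm of \cite{ghaffari2017orinetation} into the derandomization result \Cref{thm:derandomization}. In fact you are more careful than the paper in flagging and addressing the local-checkability requirement (particularly the forest case), which the paper leaves implicit.
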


\subsubsection{Derandomizations in the \congest Model: Neighborhood Cover, Spanners, and Dominating Set} \label{subsub:congestDerand}
 We have already mentioned that our network decomposition algorithm extends to the \congest model, and even has the nice property that each edge is in $\poly(\log n)$ many Steiner trees. We used these to derive our \congest model efficient deterministic MIS algorithm, in \Cref{thm:congest-MIS}. 
 But there is one more generality of our network decomposition, which opens the road for other applications: the algorithm readily extents to powers $G^k$ of the graph $G$, where we connect any two nodes within distance $G$. As stated in \Cref{thm:main_alg_gk}, in $\poly(\log n)$ rounds of the \congest model, we can compute a decomposition into clusters, each with a Steiner tree of depth $\poly(\log n)$, colored with $\poly(\log n)$ colors so that any two clusters wihin distance $k$ have different colors. Moreover, each edge is used in $\poly(\log n)$ Steiner trees. This can be directly plugged into some of the recent work on derandomization in the \congest model, for particular graph problems, to improve the related round complexities. We overview these next.
 \smallskip
 
\paragraph{Sparse neighbohood covers} One prominent corollary is that we get an efficient deterministic algorithm in the \congest model for the \emph{sparse neighborhood cover} problem --- one of the central and versatile algorithmic tools in the study of locality-sensitive distributed graph algorithms\cite{peleg00, awerbuch89}. This corollary follows from using our improved network decomposition in the method provided by Ghaffari and Kuhn~\cite{ghaffari2018congest-derandomizing}. 

\begin{corollary}
There is a deterministic distributed algorithm that  for any radius $r\geq 1$, computes an $\poly(\log n)$-sparse neighborhood cover of the $r$-neighborhoods of the graph, with clusters of radius $r \poly(\log n))$, in $r\poly(\log n)$ rounds of the \congest model. In other words, this gives a clustering of the graph into overlapping clusters of radius $r\poly(\log n)$ such that for each node, its $r$-hop neighborhood is entirely contained in at least one of the clusters and moreover, each node is in at most $\poly(\log n)$ clusters.
\end{corollary}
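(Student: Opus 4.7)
The plan is to reduce the sparse neighborhood cover construction to the graph-power network decomposition of \Cref{thm:main_alg_gk}. First I would apply \Cref{thm:main_alg_gk} with parameter $k = 2r+1$ to obtain, in $O(r \log^8 n \cdot \min(r, \log^2 n)) = r\poly(\log n)$ rounds of \congest, a weak-diameter network decomposition of $G^{2r+1}$ with $O(\log n)$ color classes, such that each cluster $\mathcal{C}$ comes equipped with a Steiner tree $T_\mathcal{C}$ in $G$ of radius $O(r\log^3 n)$ and any two clusters of the same color have $G$-distance strictly greater than $2r+1$. The output cover would then be obtained by expanding each cluster $r$ hops in $G$: for each $\mathcal{C}$, define $\mathcal{C}^+ := \{u \in V : d_G(u, \mathcal{C}) \leq r\}$, inheriting the color of its parent cluster.

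To compute the $\mathcal{C}^+$s in \congest, I would process the color classes one at a time, and within each color class have all of its clusters simultaneously run a breadth-first search for $r$ rounds. Because same-color clusters are at $G$-distance more than $2r+1$, their BFS frontiers of radius $r$ never meet, so each edge in each round carries at most one cluster identifier and fits in a single \congest message. Each newly discovered vertex attaches itself to the corresponding Steiner tree through its BFS parent, extending $T_\mathcal{C}$ to a Steiner tree of $\mathcal{C}^+$ with radius $O(r\log^3 n) + r = O(r\log^3 n)$. The expansion phase therefore costs $O(r \log n)$ \congest rounds, and the overall round complexity stays at $r\poly(\log n)$.

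Next I would verify the three defining properties. \emph{Coverage:} every vertex $v$ lies in some cluster $\mathcal{C}$, and any $u \in B_r^G(v)$ satisfies $d_G(u, \mathcal{C}) \leq r$, so $u \in \mathcal{C}^+$ and hence $B_r^G(v) \subseteq \mathcal{C}^+$. \emph{Cluster radius:} the Steiner tree of each $\mathcal{C}^+$ has radius $O(r \log^3 n) = r\poly(\log n)$ in $G$. \emph{Sparsity:} if some vertex $u$ were to lie in $\mathcal{C}_1^+ \cap \mathcal{C}_2^+$ for two distinct same-color clusters, then $d_G(\mathcal{C}_1, \mathcal{C}_2) \leq 2r$, contradicting the $(2r{+}1)$-separation provided by the decomposition of $G^{2r+1}$; hence each vertex lies in at most one expanded cluster per color and thus in at most $O(\log n) = \poly(\log n)$ of them in total.

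The main point requiring care is the \congest implementation of the BFS expansion: the choice $k = 2r+1$, rather than a smaller power of $G$, is precisely what keeps the BFS fronts of same-color clusters disjoint, which both bounds each edge's round-traffic to one cluster label and yields the desired sparsity guarantee. Everything else is a routine application of \Cref{thm:main_alg_gk} combined with the standard Awerbuch--Peleg-style construction of a neighborhood cover from a well-separated clustering.
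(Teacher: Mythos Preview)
Your proposal is correct and matches the approach the paper has in mind: the paper itself does not spell out a proof but simply states that the corollary follows by plugging the new \congest power-graph decomposition (\Cref{thm:main_alg_gk}) into the method of Ghaffari and Kuhn~\cite{ghaffari2018congest-derandomizing}, and that method is exactly the Awerbuch--Peleg-style ``decompose $G^{\Theta(r)}$ and expand each cluster by $r$ hops'' construction you describe. Your verification of coverage, radius, and sparsity, as well as the observation that the $(2r{+}1)$-separation lets the per-color BFS expansions run without edge congestion in \congest, are precisely the ingredients that construction needs.
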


We note that the above neighborhood cover also settles a question of Elkin~\cite{elkin2006faster}, giving a deterministic variant of his minimum spanning tree algorithm with the same round complexity up to logarithmic factors. \smallskip

\paragraph{Spanner} Another example is the first efficient deterministic distributed algorithm, in the \congest model, for constructing spanners with almost optimal parameters. This follows from plugging our network decomposition into the algorithms of \cite{ghaffari2018congest-derandomizing}:
\begin{corollary}
There is a deterministic distributed algorithm that in $\poly(\log n)$ rounds of the \congest model, computes a spanner with stretch $2k-1$ and size $O(k n^{1+1/k} \log n)$.
\end{corollary}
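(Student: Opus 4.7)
The plan is to invoke the derandomization framework of Ghaffari and Kuhn~\cite{ghaffari2018congest-derandomizing} for the \congest model, using the classical randomized spanner construction of Baswana--Sen as the starting point. Recall that Baswana--Sen proceeds in $k$ phases: at each phase, a random subset of the current clusters is ``sampled'' (each retained independently with probability $n^{-1/k}$); unsampled clusters have each of their vertices add, for every color class of neighboring sampled clusters, one edge per such cluster into the spanner; sampled clusters continue to the next phase and grow by adjacent vertices. The expected spanner size is $O(kn^{1+1/k})$, the stretch is $2k-1$, and each phase requires only local coordination in $O(k)$-neighborhoods.

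The first step is to reformulate each phase as a sum of per-vertex ``bad events'' (new spanner edges contributed), whose total expectation over the random sampling is $O(n^{1+1/k}\log n)$. Since each bad event is determined by what happens in the $O(k)$-hop neighborhood of the corresponding vertex in $G$, the decisions of vertices that are more than, say, $\Theta(k)$ hops apart in $G$ (equivalently, nonadjacent in an appropriate power $G^{O(k)}$) affect disjoint events and can be fixed independently in parallel. This is exactly the setting in which the derandomization framework of~\cite{ghaffari2018congest-derandomizing} applies: given a network decomposition of $G^{O(k)}$ with $\poly(\log n)$ colors and clusters of weak-diameter $\poly(\log n)$ in $G$, together with low-congestion Steiner trees, one can process the color classes sequentially and, within each color class, fix the random bits of the vertices in each cluster via the method of conditional expectations aggregated along the Steiner tree, without increasing the total conditional expectation of spanner edges.

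The second step is to supply the missing ingredient: a network decomposition of $G^{O(k)}$ in the \congest model with $\poly(\log n)$ round complexity, $\poly(\log n)$ colors, Steiner trees of depth $\poly(\log n)$ in $G$, and $\poly(\log n)$ edge-congestion per color. This is precisely what our \Cref{thm:main_alg_gk} provides (instantiated with the relevant $k = O(\log n / \log\log n)$ or smaller, so that the $k$-dependent factors remain polylogarithmic; in any case $k\leq \log n$ suffices, otherwise the size bound is trivial). Once the decomposition is in hand, the aggregation and derandomization per phase runs in $\poly(\log n)$ \congest rounds using the Steiner trees, exactly as in~\cite{ghaffari2018congest-derandomizing}.

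Repeating this over the $k$ phases of Baswana--Sen yields a deterministic spanner with stretch $2k-1$ and size $O(kn^{1+1/k}\log n)$ in $\poly(\log n)$ \congest rounds. The only place where care is needed, and which I expect to be the main obstacle, is verifying that the pessimistic estimator for the expected number of spanner edges can be both efficiently computed and efficiently updated along the Steiner trees under the $O(\log n)$-bit message constraint of \congest, and that the low congestion of our Steiner trees (each edge in $\poly(\log n)$ trees per color) keeps the aggregation cost polylogarithmic; but both of these are precisely the properties that~\cite{ghaffari2018congest-derandomizing} is designed to exploit, and our \Cref{thm:main_alg_gk} delivers them.
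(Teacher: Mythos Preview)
Your proposal is correct and matches the paper's approach exactly: the paper simply states that the corollary follows from plugging the new \congest network decomposition of $G^{O(k)}$ (\Cref{thm:main_alg_gk}) into the derandomization framework of Ghaffari and Kuhn~\cite{ghaffari2018congest-derandomizing}, which is precisely what you spell out. Your additional details about Baswana--Sen and the conditional-expectation aggregation are accurate elaborations of what that framework does, but the underlying route is the same.
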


\paragraph{Dominating set and set cover} As another example, by putting together our \congest-model network decomposition with the work of Deurer et al.\cite{Deurer2019}, we get the first efficient deterministic \congest model approximation of minimum dominating set. Moreover, as outlined in \cite{Deurer2019}, this can also be extended to an approximation of set cover. These lead to the following corollary:
\begin{corollary}
There are $\poly(\log n)$-round deterministic distributed algorithms in the \congest model that compute: (I) a $(1+o(1))\log \Delta$ approximation of minimum dominating set, where $\Delta$ denotes the maximum degree, and (II) a $(1+o(1))\log \Delta$ approximation of the minimum set cover problem, where $\Delta$ denotes the maximum set size.
\end{corollary}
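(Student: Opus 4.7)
The plan is to invoke the framework of Deurer et al.~\cite{Deurer2019} as a black box, substituting their randomized network decomposition subroutine with our deterministic \congest-model decomposition for power graphs. Their work reduces the problem of computing a $(1+o(1))\log\Delta$-approximate minimum dominating set in the \congest model to two ingredients: approximately solving the natural LP relaxation deterministically in $\poly(\log n)$ rounds (via Kuhn--Wattenhofer-style methods), and then derandomizing the standard randomized rounding of the fractional solution through the method of conditional expectations, applied cluster-by-cluster on a weak-diameter network decomposition of $G^k$ for some $k = \poly(\log n)$. The framework is modular: its total round complexity is polylogarithmic in $n$ plus a polylogarithmic factor times the cost and parameters of the network decomposition subroutine.

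My key step is to instantiate this subroutine with \Cref{thm:main_alg_gk}, which for any $k = \poly(\log n)$ deterministically produces, in $\poly(\log n)$ rounds of \congest, a weak-diameter decomposition of $G^k$ with $O(\log n)$ colors, Steiner trees of radius $O(k\log^3 n)$ in $G$, and edge-congestion $O(\log^2 n \cdot \min(k,\log^2 n)) = \poly(\log n)$. Feeding these into the derandomization pipeline of \cite{Deurer2019}, each cluster sequentially fixes the random bits of its vertices via conditional expectations against the LP-rounding objective; because same-color clusters are more than $k$ hops apart in $G$, their local decisions decouple and can run in parallel, while the low edge-congestion ensures that each cluster's internal convergecast and broadcast over its Steiner tree cost only $\poly(\log n)$ \congest rounds. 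Summing over the $O(\log n)$ colors yields a deterministic $\poly(\log n)$-round \congest algorithm achieving the promised $(1+o(1))\log\Delta$ approximation for minimum dominating set. Part (II) follows by the standard reduction, already spelled out in \cite{Deurer2019}, from set cover to an analogous problem on the bipartite element-set incidence graph, to which the same framework applies verbatim with $\Delta$ interpreted as the maximum set size.

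The main obstacle to verify is that \Cref{thm:main_alg_gk} truly provides the three properties on which the Deurer et al.\ framework relies: (a) a decomposition of $G^k$ rather than of $G$ itself, which is needed because the dependency radius in the LP rounding scales with $k$; (b) an explicit Steiner tree for each cluster, needed to execute the conditional-expectation updates by convergecast in \congest, where naive topology gathering is prohibitive; and (c) $\poly(\log n)$ edge-congestion across the Steiner trees, needed to run all same-color clusters in parallel under the $O(\log n)$-bit message constraint. \Cref{thm:main_alg_gk} was designed with precisely these three guarantees, so the substitution into the framework is direct and the overall round complexity remains $\poly(\log n)$.
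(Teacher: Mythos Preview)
Your proposal is correct and follows precisely the approach the paper takes: the paper simply states that the corollary follows by plugging the \congest-model network decomposition of $G^k$ (\Cref{thm:main_alg_gk}) into the framework of Deurer et al.~\cite{Deurer2019}, and notes that the set-cover extension is already outlined in~\cite{Deurer2019}. Your write-up is a faithful and somewhat more detailed elaboration of exactly this black-box substitution.
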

\smallskip

%\newpage
\section*{Acknowledgment}
We are grateful to Christoph Grunau for several discussions about verifying the ideas and working intensively with us throughout the writing process. We also thank Sebastian Brandt, Keren Censor-Hillel, Yi-Jun Chang, Davin Choo, Michael Elkin, Fabian Kuhn, Merav Parter, Julian Portmann, and Hsin-Hao Su for proofreading an earlier version of this write-up and helpful comments. We are also grateful to the reviewers of STOC 2020 for their helpful comments.

The first author also thanks Michael Elkin and Jukka Suomela for very inspiring discussions about network decomposition.

%for an inspiring discussion about the strength of the assumption of polynomially bounded identifiers of each node (in a computational model similar to \local).

\bibliography{ref}

\fullOnly{\bibliographystyle{alpha}}
\shortOnly{\bibliographystyle{ACM-Reference-Format}}

\appendix
\fullOnly{ %---------------------------------

\section{Comparison of previous work with \Cref{thm:decomp-informal}}
\label{sec:appendix_relation}

Recall that before our work, the state of the art deterministic algorithm for network decomposition was the \ps-round algorithm of Panconesi and Srinivasan\cite{panconesi-srinivasan}. This result itself was a refinement of the approach of Awerbuch et al.\cite{awerbuch89}, who obtained an \aglp-round algorithm for network decomposition. 
In this section, we first briefly recall this approach, in \Cref{subsec:RecapAwerbuch}. Then, in \Cref{subsec:newSlow}, we describe a new algorithm that also achieves this \aglp round complexity. In some sense, the approaches of \cite{panconesi-srinivasan, awerbuch89} and the algorithm that we describe here can be viewed as two fundamentally different methods, both of which seem to get stuck at the \aglp round complexity. Finally, in \Cref{subsec:fastVSslow} then discuss how these compare with the approach of \Cref{thm:main} that achieves a $poly(\log n)$ round complexity.

\subsection{Recap on Ruling sets}
Both the approach of Awerbuch et al. \cite{awerbuch89} and of the new algorithm that we present here would achieve round-complexity \ps, assuming we have an efficient algorithm for maximal independent set (MIS). However, the only known way of computing MIS is via network decomposition (cf. \cref{thm:MIS-local}). 
Hence, instead, we will use a \emph{ruling set} procedure. As we recall next, ruling set is a certain weakening of MIS, and it can be computed deterministically in $O(\log n)$ rounds\cite{awerbuch89} (for certain parameters). 
%Next, we define the notion of $(\alpha, \beta)$-ruling set. 

%While the beautiful algorithms of Luby \cite{luby86} and Alon, Babai, and Itai \cite{alon86} show that a maximal independent set can be obtained in polylogarithmic round complexity in randomized setting, the question of a deterministic algorithm was open prior to our work. 
%However, we can deterministically construct a certain weakening of maximal independent set that is a special case of a so-called ruling set. 

\begin{definition}
Given a graph $G$, an $(\alpha, \beta)$-ruling set is a subset $S \subseteq G$ such that
\begin{itemize}
    \item for each two different vertices $u, v \in S$ their distance in $G$ is at least $\alpha$,
    \item for each vertex $u \in G$ there is a vertex $v \in S$ such that the distance of $u$ and $v$ in $G$ is at most $\beta$. 
\end{itemize}
\end{definition}

To give an example, a maximal independent set is simply a $(2,1)$-ruling set. 
For the sake of completeness, we now show how to construct a $(2, O(\log n))$-ruling set, as presented by Awerbuch et al.\cite{awerbuch89}. 
In other words we find a set $S$ such that no two of its vertices are neighbouring and for each vertex $v$ there is a vertex $u\in S$ within its $O(\log n)$ distance. 
%The second condition is weakened with respect to a maximal independent set, where we require that for each 

\begin{proposition}
\label[proposition]{prop:ruling_set}
There is a deterministic $O(\log n)$-round complexity algorithm that computes a $(2, O(\log n))$-ruling set of any graph $G = (V,E)$ in the \local model, given that each vertex holds a unique $O(\log n)$-bit identifier. 
\end{proposition}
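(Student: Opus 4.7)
The plan is to process the $b = O(\log n)$ bits of the identifiers one at a time, in $b$ phases of $O(1)$ rounds each. We maintain a set of \emph{alive} vertices, starting with all of $V$, and at the end we output $S$ as the set of vertices that are still alive. In phase $i$, every alive vertex broadcasts its $i$-th bit to its neighbors; an alive vertex $v$ whose $i$-th bit equals $1$ then \emph{dies} and records an arbitrary witness $w(v)$, provided at least one of its alive neighbors has $i$-th bit equal to $0$. Otherwise $v$ remains alive. The round complexity is immediate: each phase needs $O(1)$ rounds of \local to exchange the bit and the alive flag, so the total is $O(b) = O(\log n)$.

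For the independence of $S$, suppose $u, v \in S$ were adjacent. Since both survive all $b$ phases, consider any bit position $i$; if their $i$-th bits differed, the one with bit $1$ would have died in phase $i$, since the other was still alive at that moment. Hence the identifiers of $u$ and $v$ would agree bit-by-bit, contradicting the uniqueness of identifiers. Thus no two vertices of $S$ are neighbors, giving the first condition with $\alpha = 2$.

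For the covering property, we follow a \emph{domination chain} from any vertex $v_0 \in V$ to a vertex of $S$. If $v_0$ survives, it is already in $S$. Otherwise it died in some phase $i_0$ because of a witness $v_1 = w(v_0)$ that was alive at that moment; if $v_1$ survives we stop, and otherwise $v_1$ dies in some later phase $i_1$ because of a witness $v_2$, and so on. Each consecutive pair $v_k, v_{k+1}$ is adjacent in $G$ by construction, and the sequence of phase indices $i_0 < i_1 < \cdots$ is strictly increasing and bounded by $b$, so the chain has length at most $b$; its terminal vertex lies in $S$ and is at $G$-distance at most $b = O(\log n)$ from $v_0$.

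The only real observation to check is the strict monotonicity $i_0 < i_1 < \cdots$ of the phase indices along a domination chain, which is what caps $\beta$ at $O(\log n)$; this holds because a witness $w(v)$ is required to be alive at the moment $v$ dies, so if the witness itself later dies this must happen in a strictly later phase. Everything else is a direct consequence of the construction, and no recursion or auxiliary subroutine is needed, which is why this algorithm fits within the $O(\log n)$-round budget claimed in the proposition.
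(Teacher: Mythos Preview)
Your proof is correct and follows essentially the same approach as the paper: both process the identifier bits one at a time, removing a vertex with bit $1$ in the current position whenever it has an alive neighbor with bit $0$, and both establish the covering bound by tracing a chain of witnesses whose death-phase indices strictly increase (the paper phrases this as oriented trees of depth $O(\log n)$ rooted at $S$, which is the same argument).
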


\begin{proof}
We start with a set $S_0 = V$ of all vertices and gradually prune this set in $O(\log n)$ steps, producing a chain $V = S_0 \supseteq S_1 \supseteq \dots \supseteq S_{O(\log(n))}=:S$, where the final set $S$ in the chain is the desired $(2, O(\log n))$-ruling set. 
%This gradual approach is motivated by the need of ensuring that there are no two neighbouring vertices in the final set $S$ -- instead of tackling this challenge directly, we will only ensure in the $i$-th round that $S_i$ does not contain a pair of neighboring vertices that would have the same bit at the $i$-th position of their identifiers. 

This is done as follows: in the $i$-th step, each vertex $u \in S_{i-1}$ that has $1$ at the $i$-th position of its identifier checks whether it has a neighbour with identifier containing $0$ at its $i$-th bit. 
If this is the case, $u$ decides to leave $S_{i-1}$. 

This process clearly finishes with a set $S_{O(\log n)}$ without any neighbouring vertices. 
To see that for each vertex $v \in G$ we can find $u \in S$ at distance $O(\log n)$, consider the previous process again and for each vertex $w$ that is removed from the ruling set at some iteration $i$, draw an oriented edge towards its neighbour that made $w$ disappear from $S_{i-1}$. 
These edges form disjoint oriented trees of depth $O(\log n)$ and with vertices of $S$ being their roots. This implies that $S$ is indeed $(2, O(\log n))$-ruling set. 
\end{proof}

\subsection{Recap on the Approach of Awerbuch et al. \cite{awerbuch89}}
\label{subsec:RecapAwerbuch}
Next, we sketch the \aglp-round algorithm for network decomposition from Awerbuch et al. \cite{awerbuch89}. We note that the algorithm of Panconesi and Srinivasan\cite{panconesi-srinivasan} is a improvement and refinement of this approach; while the algorithms is different, the overall approach outline is the same. In particular, both of these algorithms seem to fall short of reaching a $\poly(\log n)$ complexity, for the same reasons.

\begin{proposition}
There is a deterministic \aglp-round algorithm in the \local model that computes a network decomposition of the underlying graph $G$ into $O(\log n)$ color classes, each one with clusters of diameter at most $O(\log n)$. 
\end{proposition}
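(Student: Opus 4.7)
The plan is to simulate the classical sequential ball-carving algorithm in a distributed fashion. Recall sequential ball-carving for one color class: pick an arbitrary live vertex $v$, grow $B(v,r)$ hop by hop, stop at the smallest $r$ with $|B(v,r+1)| < 2|B(v,r)|$, declare $B(v,r)$ a cluster of the current color and the boundary $B(v,r+1)\setminus B(v,r)$ dead; repeat on the remaining live vertices. The doubling rule forces $r \le \log n$, and the dead vertices per cluster are at most the cluster's size, so at least half of the live vertices get clustered per color. Hence $O(\log n)$ repetitions produce the desired $(O(\log n), O(\log n))$ decomposition, and the task reduces to simulating one color class in \aglp rounds.

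To parallelize one color stage, I would identify a set $C$ of well-separated centers whose balls can grow simultaneously without interfering. Concretely, let $r^* = \lceil \log n \rceil$ and take $C$ to be a maximal independent set of $G^{2r^*+1}$ restricted to live vertices. Then any two centers are at $G$-distance greater than $2r^*$, so balls of radius at most $r^*$ grown around them are pairwise non-adjacent; moreover, every live vertex is within $G$-distance $2r^*+1 = O(\log n)$ of some center, because $C$ is dominating in $G^{2r^*+1}$. Each center then runs the doubling growth rule locally. The resulting clusters have diameter $O(\log n)$ by the sequential analysis, the boundary vertices get charged to their cluster's interior by the standard $2$-to-$1$ accounting, and at least half of the live vertices are colored in this stage.

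The bottleneck, and the main obstacle, is computing the MIS on $G^{2r^*+1}$. Using Awerbuch et al.\cite{awerbuch89}, who derive an MIS algorithm from the $O(\log n)$-round ruling-set procedure of \Cref{prop:ruling_set} together with their hierarchical clustering scheme, this MIS can be computed in $2^{O(\sqrt{\log n\,\log\log n})}$ rounds on $G^{2r^*+1}$; simulating one round of $G^{2r^*+1}$ costs $O(\log n)$ rounds of $G$, but $2^{O(\sqrt{\log n\,\log\log n})}\cdot O(\log n)$ is still $2^{O(\sqrt{\log n\,\log\log n})}$. The $O(\log n)$ color stages add another $O(\log n)$ factor, again absorbed into \aglp. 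The remaining delicate point I anticipate is arguing clean non-adjacency between balls grown from different centers while the doubling rule runs asynchronously; this is precisely handled by the choice of $G^{2r^*+1}$, which guarantees the required $>2r^*$ gap between centers regardless of when each ball stops.
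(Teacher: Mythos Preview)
Your argument has two genuine gaps.

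\textbf{Circularity.} The proposition you are asked to prove \emph{is} the Awerbuch et al.\ result; the paper is recapitulating their approach. You invoke an MIS algorithm from \cite{awerbuch89} as a subroutine, but the only way \cite{awerbuch89} obtains MIS in \aglp rounds is \emph{via} the very network decomposition you are trying to construct. The paper makes this explicit just before \Cref{prop:rapid_ball_growing}: ``the only known way of computing MIS is via network decomposition \dots\ Hence, instead, we will use a ruling set procedure.'' Your proof therefore assumes its conclusion. The actual Awerbuch et al.\ argument avoids this by using only the $(2,O(\log n))$-ruling set of \Cref{prop:ruling_set} as a primitive: it iteratively contracts high-degree clusters around ruling-set centers (not MIS centers), shrinking the cluster graph by a factor of roughly $d$ per iteration, and handles low-degree vertices with Linial's coloring; balancing the parameter $d$ against the number of iterations $\log_d n$ yields the \aglp bound.

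\textbf{The ``half colored'' claim is false.} Even granting an MIS of $G^{2r^*+1}$, growing a \emph{single} ball per center does not color a constant fraction of vertices. On a path, the doubling rule stops each ball at radius $1$ (since $|B(v,2)|=5 < 2\cdot|B(v,1)|=6$), so each center colors $3$ vertices and kills $2$, while the MIS centers sit $\Theta(\log n)$ apart. Thus only a $\Theta(1/\log n)$ fraction of live vertices is colored per stage, not $1/2$. The sequential algorithm avoids this by repeatedly carving new balls from the \emph{remaining} live vertices until everything is colored or dead; your parallel version does only one round of carving. The paper's \Cref{thm:main-strong} faces the same issue and resolves it by gathering the full $O(\log n)$-hop topology around each cluster of a helper decomposition and running the \emph{entire} sequential ball-carving locally---something your single-ball-per-center scheme does not do.
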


\begin{proof}[Proof Sketch]
It suffices to construct a decomposition with \aglp color classes and clusters of weak-diameter \aglp \footnote{I.e., any two vertices of a cluster have $2^{O\left(\sqrt{\log n \log \log n}\right)}$ distance in $G$. }, since then we can reduce the diameter and the number of color classes by known reduction~\cite{awerbuch96}. See also \Cref{thm:main-strong} for such a transformation. 

We show an iterative process with a parameter $d$ that will have $\log_d n$ iterations. 
The $i$-th iteration has round complexity $O(\log n)^{i-1} \cdot O(d^2 + \log^* n)$. 
This process will construct network decomposition with $d \log_d n$ colors and with the diameter of each cluster being $O(\log n)^{\log_{d} n}$. 
We optimize $d$ by setting $d = 2^{O\left(\sqrt{\log n \log \log n}\right)}$ which gives the advertised parameters. 

Now we describe one phase of the process that works with a graph $G_i = (V_i, E_i)$. Initially, we set $G_0 := G$. We shall ensure that one round of communication in $G_i$ can be simulated in $O(\log n)^{i-1}$ rounds of communication in $G$. 

We split the vertex set $V_i$ into the set $H_i$ of vertices of degree at least $d$ and $L_i$ of vertices of degree smaller than $d$. 
To deal with low degree vertices, we construct their $d^2$-coloring by the algorithm of Linial\cite{linial1987LOCAL} in time $O(\log n)^{i-1} \cdot O(d^2 + \log^* n)$ -- here the first term is the number of rounds in $G$ we need to simulate one round in $G_{i-1}$ and the second term is the complexity of Linial's algoirthm. 
Each constructed color class will constitute one class of the final network decomposition. 

Next, the nodes locally construct a graph $G_i^2$ where two vertices are connected if their distance in $G_i$ is at most $2$. % (we also include already colored nodes of $L_i$). 
Now we use \Cref{prop:ruling_set} to find a $(2, O(\log n))$ ruling set in $G_i^2$ restricted to vertices of $H_i$. 
The ruling set procedure also gives us a decomposition of $H_i$ into oriented trees of depth $O(\log n)$, hence we get a decomposition of vertices of $H_i$ into clusters of diameter $O(\log n)$ in $G_i$. 
These clusters will be vertices of $G_{i+1}$ and there is an edge in $E(G_{i+1})$ for each pair of clusters adjacent in $G_i$. 

In the next round, the additional communication overhead of the algorithm on $G^{i+1}$ is only $O(\log n)$ times higher than for $G^i$. 
More precisely, each vertex of $G^{i+1}$ corresponds to a set of vertices of $G$ that have weak-diameter at most $O(\log n)^{i+1}$. 
To finish the argument we need to bound the number of iterations. This follows from the fact that each vertex in the ruling set found in the $i$-th iteration has degree at least $d$ in $G_i$. 
Moreover, the set of its neighbours is disjoint with the neighbours of all other vertices in the ruling set due to its construction in graph $G_i^2$. 
Hence, the number of vertices in $G_{i+1}$ is at least $(d+1)$ times smaller than the number of vertices of $G_{i}.$ Thus the number of iterations is bounded by $\log_{d} n$. 
\end{proof}

\subsection{A New Network Decomposition with Complexity \aglp} 
\label{subsec:newSlow}
We now propose a new algorithm, which appears to be fundamentally different and it also achieves the same \aglp round complexity. In the next subsection, we will contrast these two approaches with the $\poly(\log n)$-time algorithm described in \Cref{thm:main}. 

Similarly to the case of \Cref{thm:decomp-informal}, we only show how to find first $O(\log n)$ color classes of the resulting decomposition that ensure that the overall number of uncolored vertices drop by a factor of $(1 - 2^{-O(\sqrt{\log n\log\log n}})$. The result then follows after repeated application of \Cref{prop:rapid_ball_growing} and, finally, after reduction of the number of colour classes and their diameter using the transformations of \cite{awerbuch96} (See also \Cref{thm:main-strong} for such a transformation). 

The idea of the new algorithm is to simulate the sequential process of building network decomposition from the proof of \Cref{thm:main-strong}. 
In particular, each vertex $u$ will consider a ball $B(u, r(u))$ around itself of radius $r(u)$. This is similar to the sequential ball carving process in the proof of \cref{thm:main-strong}. 
The difference is that, here, instead of additively growing the radius of the ball by one in each step, we will consider radii that are powers of a small value $t$ (later fixed to be $t=O(\log n)$). We discuss this choice after the proof of \cref{prop:rapid_ball_growing}. 
%, where $r(u)$ is chosen by a doubling process \mtodo{what is a doubling process?}, similarly to the sequential algorithm from the proof of \cref{thm:main-strong}. 
% We shall call such a ball $B(u, r(u))$. 
%and the aim is to achieve that i relatively small radius $r(u)$ and small boundary (we make this precise later). 
The balls of different vertices will possibly intersect considerably. The idea is then to output only a subfamily $I := \{B(u_1, r(u_1)), B(u_2, r(u_2), \dots, B(u_k, r(u_k)) \}$ composed of balls that do not intersect nor touch, pairwise. 
In the actual proof we follow this idea, though the resulting output is little bit more complicated.  
%(it will be little bit more complicated due to the fact that different vertices may grow balls of different radius -- we will actually output $O(\log n)$ such families). 

The main challenge here is to ensure that the family $I$ covers a substantial part of the whole graph. 
To this end, we need a stronger condition, which is effectively like thinking of a ``wider boundary" for each ball. Concretely, instead of requiring a lower bound on the ratio $|B(u, r(u))| / |B(u, r(u)+1)|$--- as we had in the sequential algorithm of \cref{thm:main-strong} and our main distributed algorithm from \cref{thm:decomp-informal} --- we require a lower bound on the ratio $|B(u, r(u))| / |B(u, O(\log n) \cdot r(u)|$. This allows us to find such a good set $I$ with the help of ruling sets. 
The above condition leads to an optimization problem giving the same round complexity as in \cite{awerbuch89}.

\begin{proposition}
\label[proposition]{prop:rapid_ball_growing}
There is a deterministic \local-model algorithm that, for any graph $G=(V,E)$ with $n$ nodes, in \aglp rounds, computes a $S \subseteq V$ of vertices and a coloring of this set with $O(\log n)$ colors such that (1) each each connected component of any color class of $S$ has weak-diameter \aglp, and (2) $|S| \ge |V| / $\aglp. 
\end{proposition}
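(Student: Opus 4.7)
The plan is to implement the high-level strategy sketched above: each vertex $u$ picks a good radius from a geometric sequence, and iterative ruling-set calls pick a sparse collection of centers whose balls become the clusters. The parameters will be set as follows: $k := \lceil \sqrt{\log n/\log\log n}\,\rceil$, $t := c\log n$ for a sufficiently large constant $c$, and $\beta := 2^{O(\sqrt{\log n \log\log n})}$ chosen large enough that both $t^k \le \beta$ and $n^{2/k} \le \beta$. Consider the geometric radii $r_j := t^j$ for $j = 0, 1, \ldots, k$.

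First, I would assign each vertex a good radius. Each $u$ gathers its $r_k$-hop neighborhood in $O(r_k) = O(\beta)$ rounds of \local and computes $|B(u, r_j)|$ for $j = 0, \ldots, k$. Telescoping gives $\prod_{j} |B(u, r_{j+2})|/|B(u, r_j)| \le n^2$, and pigeonhole yields some $i(u) \in [0, k-2]$ with $|B(u, r_{i(u)+2})| \le n^{2/k}\,|B(u, r_{i(u)})| \le \beta\,|B(u, r_{i(u)})|$. Set $\rho(u) := r_{i(u)}$ as the cluster radius and $R(u) := r_{i(u)+2}$ as the wider exclusion radius. Taking two steps at a time, rather than one, is what leaves room between the ruling-set coverage parameter and the ball-growth bound.

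Next, I would run a stage-by-stage construction in $k-1$ stages $j = 0, 1, \ldots, k-2$, in increasing order of $j$. At stage $j$, let $V_j$ be the currently-unremoved vertices with $i(u) = j$, and build the auxiliary graph $H_j$ on $V_j$ whose edges are the pairs at $G$-distance at most $3 r_{j+1}$; one $H_j$-round simulates in $O(r_{j+1})$ rounds of $G$. Invoke \cref{prop:ruling_set} to compute a $(2, O(\log n))$-ruling set $R_j$ of $H_j$ in $O(r_{j+1} \log n)$ rounds of $G$. For each $u \in R_j$, declare the unremoved portion of $B(u, \rho(u))$ a cluster of color $j$, and mark every vertex of $B(u, R(u))$ as removed. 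At the end, output $S$ as the union of the declared clusters with this coloring.

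All the desiderata except one would then be immediate. Within a single color $j$, distinct centers are at $G$-distance greater than $3 r_{j+1} > 2\rho(u)$, so the clusters of that color are pairwise non-adjacent and each has weak-diameter at most $2\rho(u) \le 2\beta$. By the ruling-set coverage, every $v \in V_j$ is within $O(\log n)$ hops of some $u \in R_j$ in $H_j$, hence within $O(\log n \cdot 3 r_{j+1}) \le r_{j+2} = R(u)$ in $G$ once $c$ is large enough, so $v$ is removed during stage $j$; iterating across stages, every vertex of $V$ is removed by the end. The total round complexity is dominated by the last stage, $O(r_{k-1} \log n) \cdot O(k) = 2^{O(\sqrt{\log n \log\log n})}$. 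The hard part will be the density estimate $|S| \ge |V|/\beta$: attributing each removed vertex to the first center that reached it yields $|A_u| \le |B(u, R(u))| \le \beta\,|B(u, \rho(u))|$, but the contribution of $u$ to $S$ is only $|B(u, \rho(u)) \cap \textup{unremoved}|$, which can be strictly smaller than $|B(u, \rho(u))|$ whenever an earlier exclusion ball has already bitten into $B(u, \rho(u))$. The main obstacle I expect is to set up the parameters and the small-to-large stage ordering, together with a charging scheme, so that any such loss at $u$ is offset by vertices already clustered at the responsible earlier center.
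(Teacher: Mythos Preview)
Your overall strategy---assign each vertex a geometrically-spaced radius at which ball growth slows, bucket vertices by radius level, and run a ruling set on each bucket to pick well-separated centers---is exactly what the paper does. The gap you flag at the end is real, but it is self-inflicted: it comes from your choice to process levels sequentially and \emph{remove} vertices as you go, so that each new cluster is only the unremoved portion of $B(u,\rho(u))$. There is no evident charging scheme that rescues this, and none is needed.

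The paper sidesteps the issue by not removing anything. It computes the ruling set $R_i$ at every level $i$ independently (on \emph{all} vertices $u$ with that level, not just unremoved ones), sets $S := \bigcup_i \bigcup_{u\in R_i} B(u,r(u))$ as the full union of balls, and colors each $v\in S$ by the smallest $i$ for which $v$ lies in some $B(u,r(u))$ with $u\in R_i$. Each color-$i$ class is then a subset of a family of pairwise non-adjacent radius-$t^i$ balls (centers in $R_i$ are more than $3t^i$ apart in $G$), and subsets of non-adjacent sets remain non-adjacent with the same weak-diameter bound. For the density, since the balls within one level are disjoint and each vertex of $S$ lies in at most $k$ of the level-unions,
\[
|S| \ \ge\ \frac{1}{k}\sum_{i}\sum_{u\in R_i}\bigl|B(u,r(u))\bigr|
\ \ge\ \frac{1}{k\beta}\sum_{i}\sum_{u\in R_i}\bigl|B\bigl(u,\,t\cdot r(u)\bigr)\bigr|
\ \ge\ \frac{n}{k\beta},
\]
the last inequality because $\bigcup_i\bigcup_{u\in R_i} B(u,\,t\cdot r(u))=V$ by the ruling-set coverage guarantee. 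No stage ordering, no cross-level charging.

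A smaller point: your two-step gap ($\rho(u)=r_{i(u)}$ versus $R(u)=r_{i(u)+2}$) is unnecessary. If you place the $H_j$-edges at $G$-distance at most $3t^j$ rather than $3t^{j+1}$, a $(2,\,t/4)$-ruling set already gives coverage radius $(t/4)\cdot 3t^j < t^{j+1}$ in $G$, so a single geometric step suffices; this is what the paper does.
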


\begin{proof}
First fix $t = O(\log n)$ to be such that \Cref{prop:ruling_set} can give us $(2,t/4)$-ruling sets in $O(\log n)$ rounds. 
Moreover, set $\eps = 2^{-\sqrt{\log n / \log\log n}}$. 
Each vertex $u$ gradually grows a sequence of balls $B(u, t^i)$ for $i \ge 0$. 
It stops growing the ball around it at the first point when $|B(u, t^i)| \ge \eps \cdot |B(u, t^{i+1})|$ and then sets $r(u) := t^{i}$. 
Note that the maximum $i$ the vertex needs to consider is bounded by 
\[
\log_{1/\eps} n 
= \frac{\log n}{\log 1/\eps}
= \sqrt{\log n \log\log n}.
\]
This is because, in each step, the vertex either finishes growing or the volume of its ball grows by multiplicative factor of at least $1/\eps$. 
Therefore, the overall radius the vertex needs to consider and, thus also the round complexity of this step, is bounded by 
\[
t^{\log(1/\eps)} = 2^{O(\log\log n) \cdot \sqrt{\log n / \log\log n}} = 2^{O(\sqrt{\log n \log\log n})}. 
\]

Next, we define $\log_{1/\eps} n+1$ graphs $G_0, G_1, \dots, G_{\log_{1/\eps} n}$, where each vertex $u$ is in the vertex set of $G_i$ if and only if $r(u) = t^i$. 
Two vertices $u,v \in G_i$ are connected in $G_i$ if their distance in $G$ is at most $3 \cdot t^i$. This is to ensure that if there is not an edge between $u$ and $v$ in $G_i$, the corresponding balls $B(u,r(u))$ and $B(v,r(v))$ neither overlap, nor are adjacent. 

Now for each $G_i$ we run the $(2,t/4)$-ruling set algorithm of \Cref{prop:ruling_set} on $G_i$ in $O(\log n) \cdot 2^{O(\sqrt{\log n \log\log n})} = 2^{O(\sqrt{\log n \log\log n})}$ rounds. 
We call the output ruling set $R_i$ and define the set $S$ of vertices we color as
\[
S = \bigcup_{0 \le i \le \log_{1/\eps}n}\;\; \bigcup_{u \in R_i} B(u, r(u)). 
\]
Moreover, each vertex $v \in S$ is colored by the smallest index $i$ such that $v$ is in some ball $B(u,r(u))$ for a vertex $u \in R_i$. 
%defines a family of sets $\mathcal{I}_i$ as follows. For fixed $i$ and each $u \in G_i$ chosen in the ruling set of $G_i$ we add the ball $B(u, r(u))$ to the family $\mathcal{I}_i$. 
%The output $S$ of the algorithm consists of vertices of $\bigcup \mathcal{I}_0 \,\cup\, \bigcup \mathcal{I}_1 \,\cup\, \dots \,\cup\, \bigcup \mathcal{I}_{\log_{1/\eps} n}$, where each vertex $u \in S$ is colored by the smallest index $i$ such that $u$ is in some $B(v,r(v))$ of a $v \in \mathcal{I}_i$. 

Next, we argue that this output satisfies all the desired conditions. First, notice that the number of color classes we output is $\log_{1/\eps} n = \sqrt{\log n \log\log n} = O(\log n)$. 
For each color class we output a subset of balls with diameter $2^{O(\sqrt{\log n \log\log n})}$ that were neither overlapping, nor adjacent due to the construction of the ruling set and hence their subsets have still bounded weak-diameter and they are neither overlapping, nor adjacent. 

It remains to be seen that the number of vertices of $S$ constitutes a substantial fraction of the vertices of $G$. 
The crucial observation is that due to the property of our ruling sets, each $v \in G_i$ has distance in $G$ at most $t/4 \cdot 3t^i < t^{i+1}$ to some $u \in R_i$. Hence, while $S = \bigcup_{0 \le i \le \log_{1/\eps}n} \bigcup_{u\in R_i} B(u, r(u))$ is a union of $O(\log n)$ disjoint sets, the union $\bigcup_{0 \le i \le \log_{1/\eps}n} \bigcup_{u\in R_i} B(u, t \cdot r(u))$ already covers the whole $G$. 
Finally, recall that each ball $B(u, r(u))$ already substitutes a substantial fraction of the bigger ball $B(u, t \cdot r(u))$. This enables us to finish off with the following simple double counting argument. 

\begin{align*}
    |S| &=
    \left\lvert  \bigcup_{0 \le i \le \log_{1/\eps}n}\;\; \bigcup_{u \in R_i} B(u, r(u))\right\rvert \\ 
    &\ge \frac{1}{\log_{1/\eps}n} \sum_{0 \le i \le \log_{1/\eps}n} \left\lvert \bigcup_{u \in R_i} B(u, r(u))\right\rvert 
    && \text{each vertex is overcounted at most $\log_{1/\eps}n$ times}\\ 
    &= \frac{1}{\log_{1/\eps}n} \sum_{0 \le i \le \log_{1/\eps}n} \sum_{u\in R_i} \left\lvert B(u, r(u))\right\rvert 
    && \text{balls with centers from the same ruling set are disjoint}\\
    &\ge \frac{\eps}{\log_{1/\eps}n} \sum_{0 \le i \le \log_{1/\eps}n} \sum_{u\in R_i} \left\lvert B(u, t \cdot r(u))\right\rvert 
    && \text{$|B(u,r(u))| \ge \eps |B(u, t\cdot r(u))|$} \\
    &\ge \frac{\eps}{\log_{1/\eps}n} \left\lvert  \bigcup_{0 \le i \le \log_{1/\eps}n} \bigcup_{u\in R_i} B(u, t \cdot r(u))\right\rvert \\
    &\ge \frac{\eps}{\log_{1/\eps}n} \cdot n && \text{this union covers the whole graph}\\
    &= n / 2^{O(\sqrt{\log n \log\log n})}. 
\end{align*}
\end{proof}

Note that choosing the radius as a power of $t$, instead of additive increases similar to the sequential ball carving process, is crucial in the above proof. This is because it is necessary to ensure that not only the boundary of each ball $B(u,r(u))$ is small as in the sequential ball carving process, but also the volume of the ball $B(u, t \cdot r(u))$ is not much bigger than the volume of $B(u, r(u))$, as the factor $|B(u,r(u))| / |B(u, t \cdot r(u))|$ is also the fraction of all vertices that we output in the end output as the set $S$.

\subsection{Contrasting the $\poly(\log n)$-round and \aglp-round algorithms}
In this subsection, we discuss the similarities in the above two approaches and describe why they both fall short of reaching a $\poly(\log n)$ round complexity. In particular, we discuss one particular graph on which both of these approaches get stuck at a $2^{O(\sqrt{\log n})}$ complexity. We note that the discussions in this subsection are not written formally, and they instead try to provide an intuitive explanation of the shortcoming of the two approaches, which is circumvented by the $\poly(\log n)$-round algorithm of \Cref{thm:main}. 

\label{subsec:fastVSslow}
First, let us note the similarities between the algorithm from \cite{awerbuch89} and \Cref{prop:rapid_ball_growing}. 
Both algorithms use a ruling set computation just as a replacement for a maximal independent set, which allows efficient computation. 
It is straightforward to observe that, if we could solve the maximal independent set problem in $\poly( \log n)$ rounds, then the complexity of both of these approaches would improve from \aglp to $2^{O(\sqrt{\log n})}$. This is roughly because, in that case, per iteration, we would lose an $O(1)$ factor instead of an $O(\log n) = O(2^{\log\log n})$ factor.
For the approach of \cite{awerbuch89}, the algorithm of \cite{panconesi-srinivasan} cleverly circumvents this issue via additional insights (which recursively build the desired maximal independent set, at each point, using the network decomposition that is constructed up to that point). 
% (here, oracle access means ``assume that we can compute a maximal independent set, say, in $\poly(\log n)$ rounds"). 

Moreover, both algorithms cannot bound the increase of the radii of growing clusters in an additive manner. Instead, they both the increase in the radii by a multiplicative factor of $O(\log n)$ (or $O(1)$ should they have access to an efficient algorithm for maximal independent set). 
They have to achieve as much as possible for one multiplicative increase and here their strategies differ: 
The approach of Awerbuch et al.\cite{awerbuch89} uses iterations of \textit{contracting clusters}, while the algorithm of \Cref{prop:rapid_ball_growing} uses a sped up \textit{ball carving}. More concretely, the algorithm of \cite{awerbuch89} decreases the overall number of vertices by a factor of roughly $2^{O(\sqrt{\log n})}$, via contractions, the algorithm from \Cref{prop:rapid_ball_growing} multiplies volume of a given ball by the same factor, via growing the ball's radius. 

Despite this departure, it appears that both approaches fall short of reaching a $\poly(\log n)$ round complexity for similar reasons. In particular, next we discuss an example graph $G$ that is ``hard" for both of these approaches; that is, both approaches achieves at best a $2^{O(\sqrt{\log n})}$ round complexity on this particular graph. This is a simple high-dimensional torus-like graph $G$, as follows: The vertices of $G$ are vectors of length $\sqrt{\log n}$ with coordinates from the set $\{ 0, 1, \dots, 2^{\sqrt{\log n}} -1 \}$. Two vertices/vectors are connected via an edge if and only if the difference in each coordinate is at most one modulo $2^{\sqrt{\log n}}$. We will refer to the first parameter, i.e., the length of vectors which is set to $\sqrt{\log n}$, as the \textit{dimension}, and to the second parameter, i.e., the maximum coordinate value which is set to $2^{\sqrt{\log n}}$, as the \textit{side length}.

The vertex-transitive graph $G$ has the following two properties. First, its diameter is $\Theta(2^{\sqrt{\log n}})$. Second, the graph has a ``volume expansion" factor of $\Theta(2^{\sqrt{\log n}})$ as we double the radius. That is, for each vertex $u \in V(G)$ and for any radius $r \leq \Theta(2^{\sqrt{\log n}})$, we have $|B(u, 2r)| / |B(u, r)| = (4r+1)^{\sqrt{\log n}} / (2r+1)^{\sqrt{\log n}} = \Theta(2^{\sqrt{\log n}})$. In fact, $G$ optimizes the trade-off between these two parameters, diameter and ``volume expansion". 

The two \aglp-round algorithms in fact optimize the same trade-off and we now observe that the approaches used there cannot yield $2^{o(\sqrt{\lg n})}$ running time. 
Let us first observe it on the example of algorithm from \Cref{prop:rapid_ball_growing}. 
There, we are growing a ball around each vertex and doubling its diameter at every step. 
If we continue doubling the ball until it covers the whole graph, its radius reaches the diameter of $G$ and we will need $O(2^{\sqrt{\log(n)}})$ rounds. 
If we, instead, stop growing the ball with any radius $r$, we will have $|B(u, O(\log n)\cdot r)| \ge |B(u, 2r)| = \Theta(2^{\sqrt{\log n}}) \cdot |B(u,r)|$. Hence, we can guarantee that only $\Theta(1 / 2^{\sqrt{\log n}})$ fraction of vertices will be colored in one phase of the algorithm. 

Similarly, if we run the algorithm of Awerbuch et al.~\cite{awerbuch89}, we will either decide that all vertices of $G$ have small degree and color them in $2^{\Theta(\sqrt{\log(n)})}$ rounds, or we decide that they have high degree. In the latter case, we compute a ruling set and use it to contract vertices and form a new cluster graph. However, one can see that for an arbitrary ruling set --- e.g., if the ruling set is given by vertices with all coordinates equal $0$ modulo $3$ and vertices outside ruling sets join the cluster of the unique adjacent vertex in the ruling set --- %(that is, unless we assume additional nice properties from this ruling set), 
the new cluster graph will again be a $\sqrt{\log(n)}$-dimensional torus, this time with coordinates from the set (up to rounding errors) $\{0, 1, \dots, 2^{\sqrt{\log(n)}}/3\}$. That is, the new contracted graph will also be a torus-like graph with the same structure described above and only with a constant factor smaller side length. 
Even after we contract vertices for $O(\sqrt{\log(n)})$ repetitions, we will still have a torus with dimension $\sqrt{\log(n)}$ and side length $2^{\Theta(\sqrt{\log(n)})}$. 
But already at this point, simulating one round of communication on this contracted graph takes $2^{\Theta(\sqrt{\log(n)})}$ communication rounds in the original graph. Hence, the algorithm cannot achieve a round complexity $2^{o(\sqrt{\log(n)})}$. 

%is a hard example for both algorithms. 
%\mtodo{The next sentences is too convoluted. Break it into a few sentences}
%The reason is that, in a certain sense, the right trade-off between expansion  -- growing a ball from a particular starting vertex, its volume increases by a factor of $2^{\Theta(\sqrt{\log n})}$ after we double the radius -- which is its ``local" characteristics, and diameter -- it is again $2^{\Theta(\sqrt{\log n})}$ --  which is its ``global" characteristics. 

Our algorithm from \cref{thm:main} circumvents this issue by growing the clusters more carefully: crucially, we get only additive increase in diameter of each cluster per step, instead of a multiplicative increase as in the examples above. 

%The first author actually constructed the algorithm above while trying to understand whether there is a fundamental reason, why we cannot push the $2^{O(\sqrt{\log n})}$-round complexity to $\poly (\log n)$ for deterministic algorithms. 

%This even lead the first author to believe that there can be a fundamental reason, why the $2^{O(\sqrt{\log n})}$ barrier cannot be pushed to $\poly (\log n)$ for deterministic algorithms. 
%Though the torus-like graph did not yield any insights in this direction, it served as a natural example suggesting that if there is a deterministic algorithm for network decomposition, it should better use in a strong way the option of breaking symmetry via the unique labels of each vertex. 
%This in turn suggests to ``open up" the black box of the ruling set algorithm, which then very naturally leads to \Cref{thm:decomp-informal}. 
}

\fullOnly{
\section{CONGEST Network Decomposition for Power Graphs}
\label{app:congest-powergraph-decomp}
Here, we present the formal proof of \Cref{thm:main_alg_gk}. 
This formal proof expands on the proof sketch provided in \Cref{subsec:CONGEST-construction} and provides addition low-level details. 

\begin{proof}[Proof of \Cref{thm:main_alg_gk}]
We adopt the algorithm from \Cref{lem:main} and the notation used throughout the proof; we also apply the lemma as in the proof of \Cref{thm:main}. The only change is that the process we run in a given step of a given phase will involve all red nodes at distance $k$ from some blue node, instead of only red nodes neighbouring to blue nodes in the original algorithm.
More concretely, all the red nodes in $k$-hop distance of some blue node propose to some blue cluster. This is done as follows. 

We describe a process with $k$ \emph{iterations} that we run in a given step of a given phase. The process can be thought of as a variant of a breadth first search (BFS) algorithm run from all blue nodes at once. 

In the first iteration, each blue node starts with a token with the label  $Y$ of its cluster $S'(Y)$ (we dropped the index $i$ from the original notation $S'_i(Y)$ since we already fixed a phase) and it sends this token to all of its neighbours. 

In the iterations $2$ to $k$, the following happens. 
If at any point of the algorithm any node $u$ from $G$ (here we consider even dead nodes that generally include also nodes of the host graph that were already colored) receives \emph{for the first time} some nonzero number $t$ of tokens, say a set $\{ Y_1, Y_2, \dots, Y_t \}$ with the label $Y_1$ being the smallest, it does the following. 

\begin{itemize}
    \item If $u$ is a living blue node, it does not do anything. 
    \item If $u$ is a living red node, it adds itself as a new terminal node to the underlying Steiner tree of the cluster $S'(Y_1)$ together with an oriented edge pointing towards some node $v$ that sent a token with $Y_1$. %This keeps Invariant A satisfied. 
    The node then sends a token $Y_1$ to all of its neighbours. 
    
    The node $u$ will later propose to the blue cluster with the identifier $Y_1$. 
    To propose, the node will broadcast via the Steiner tree of the cluster $S'(Y_1)$ that it just joined. 

%Since $u$ has distance at most $i$ from some blue node of cluster $Y_1$, all of its neighbours are of distance at most $i + 1$ from that node and, hence, such token does not contradict Invariant A. 
    
    \item 
    If $u$ is a dead node, after receiving tokens in the iteration $i$, $u$ first checks whether in the previous run it received tokens also in the iteration $i$, or later. If the latter is the case, it adds itself as a new nonterminal node to the underlying Steiner tree of the cluster $S'(Y_1)$ together with an oriented edge pointing towards some node $v$ that sent a token with $Y_1$. 
    The node then sends a token $Y_1$ to all of its neighbours. 
    
    However, if, during the last BFS, $u$ received some tokens during the same iteration $i$ and chose to forward a token $Y_\textrm{last}$, it will forward the same token also this time, not considering the tokens it actually received. 

    We will see that it is not possible that $u$ received tokens earlier than in the $i$-th iteration in the previous run of BFS via standard argument about correctness of BFS. 
        
%$i < i_{\textrm{last}}$. 
    %If it is so, it sets $(i_{\textrm{last}}, Y_{\textrm{last}}) \leftarrow (i, Y_1)$. 
    %Then, if it is not already in the Steiner tree of cluster $Y_1$, it adds itself to the tree as a nonterminal node, together with an edge towards $v_1$. %This ensures that both invariants A and B are still satisfied. 
    %Finally, the node sends a token $Y_1$ to all of its neighbours. % which does not contradict Invariant A. 
    
    %If $i = i_\last$, we know from Invariant B that in some previous step this dead node had to receive a token $Y_\last$ in iteration $i$ and sent it to its neighbours in iteration $i+1$. 
    
    %, hence there is a blue node from cluster $Y_\textrm{last}$ of distance at most $k - i_\textrm{last}$ and $u$ is already a nonterminal vertex of a Steiner tree for the cluster $Y_\last$. 
    
    %Hence, the node $u$ is already a nonterminal node of a Steiner tree of the cluster $S'(Y_\last)$. 
    %The node discards all received tokens and sends a token $Y_\last$ to all of its neighbours. 
\end{itemize}

If the node already received some tokens during this breadth first search algorithm, it does not receive or send any more tokens. 

After the breadth first search algorithm finishes, roots of all Steiner trees collect the number of proposing red nodes and each root decides to either accept all proposing red vertices and recolor them to blue, or it makes them die and stops growing with the same decision rule as in \Cref{lem:main}. 
The Steiner trees, however, stay the same even if some of its vertices die, the red nodes that died are just labeled as nonterminals. 
This finishes the description of one step of current phase. 

Besides these changes, the algorithm (the number of phases, steps in each phase and decisions of blue clusters whether to grow or not) stays the same. 

\paragraph{Analysis}
To argue about the correctness of the new version of the algorithm, we first check that some basic properties of BFS. 

\begin{observation}
\label[observation]{obs:bfs}
Throughout the course of the algorithm, the following holds:
\begin{enumerate}
    \item The underlying Steiner trees of clusters are indeed trees oriented towards the root of the cluster throughout the course of the algorithm. 
    \item In the $i+1$-th iteration tokens are sent exactly by vertices whose distance from the set of all blue vertices is $i$.
    \item If a node $u$ sends a token $Y$ in some iteration, it belongs to the Steiner tree of $S'(Y)$ as either terminal or nonterminal node. 
    \item If in some step $j$ a dead node $u$ added itself to the Steiner tree of cluster $S'(Y)$ and later in step $\tilde{j} > j$ it added itself to the Steiner tree of cluster $S'(\tilde{Y})$, the distance of $u$ to the set of blue vertices was strictly smaller in step $\tilde{j}$ than in step $j$. 
\end{enumerate}
\end{observation}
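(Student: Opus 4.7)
The plan is to establish all four items simultaneously by an induction that interleaves them, anchored on one auxiliary observation which I would prove first: within a single phase, the set of blue vertices is monotonically non-decreasing across steps, since blue nodes never revert to red while some red nodes may become blue. As a consequence, for every vertex $u \in V(G)$, the distance in $G$ from $u$ to the current set of blue vertices is non-increasing from one step to the next within a phase. This monotonicity is the glue that connects the single-step BFS reasoning (items (1)--(3)) to the multi-step tree-switching statement (item (4)).

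For item (2) I would induct on the iteration index $i$ within a single step's BFS. In iteration~$1$, precisely the blue vertices fire, and they are exactly the vertices at distance $0$ from the blue set. For the inductive step, every non-blue vertex fires at most once---the first time it receives any token---and is then inert; the dead-node rule that forwards the old label $Y_{\textrm{last}}$ still fires in the iteration of first receipt, so it does not disturb the wave. Hence a vertex fires in iteration $i+1$ if and only if its distance to the set of blue vertices is exactly $i$, which is just standard BFS correctness.

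For item (3) I would trace when tokens are sent. A blue node fires only in iteration~$1$ with its own cluster label, and it has been the (initial) root of that tree since the start of the phase. A living red or dead node that fires a freshly chosen label $Y_1$ has, by construction, just added itself to the tree of $S'(Y_1)$. The only subtle case is a dead node forwarding $Y_{\textrm{last}}$ inherited from an earlier step: here I would appeal to the inductive hypothesis on (3) applied to the step in which $u$ originally joined the tree of $S'(Y_{\textrm{last}})$, together with the fact that Steiner trees never shed vertices within a phase. With (3) in hand, item (1) follows by a routine induction on edge additions: each new edge is oriented from a joining node $u$ to a node $v$ that has just sent token $Y$; by (3), $v$ already lies in the tree of $S'(Y)$ with a directed path to the root, and $u$ is being added to this particular tree for the first time, so appending $u \to v$ extends a root-oriented tree to a larger root-oriented tree without creating a cycle.

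Finally, item (4) would follow directly from the ``However'' clause of the dead-node rule combined with the auxiliary monotonicity. A dead node $u$ switches to a new tree $S'(\tilde{Y})$ at step $\tilde{j}$ only when, in the current BFS, its first-receipt iteration $\tilde{\imath}$ is strictly smaller than the first-receipt iteration $i$ at the earlier step $j$ in which it joined $S'(Y)$; equal first-receipt iterations are handled by the ``However'' clause, which locks $u$ onto $Y_{\textrm{last}} = Y$. The monotonicity of the distance to the blue set rules out the converse possibility of $u$ having received its first token strictly earlier in a previous BFS than in the current one---this is the ``standard argument about correctness of BFS'' flagged in the statement. Translating iterations back to distances via (2) then yields the strict decrease in $\operatorname{dist}(u,\text{blue})$ between step $j$ and step $\tilde{j}$. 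The main obstacle, as I see it, is keeping the bookkeeping of first-receipt iterations consistent across different steps and BFS runs; once the monotonicity of blue-set distances is in place, everything else reduces to a routine BFS analysis.
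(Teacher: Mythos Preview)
Your proposal is correct and follows essentially the same route as the paper's own proof, which is a terse four-sentence sketch invoking ``standard analysis of BFS'' for items (1)--(2), induction for item (3), and the dead-node rule together with item (2) for item (4). Your explicit monotonicity lemma (blue set only grows within a phase, hence distances to it are non-increasing) is exactly the ``standard argument about correctness of BFS'' the paper alludes to but does not spell out; everything else matches, with your version simply supplying the bookkeeping the paper omits.
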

\begin{proof}
The first two bullet points follow from the standard analysis of BFS. 
For the third bullet point, we recall that if a dead node $u$ sends a token $Y_\textrm{last}$ in some iteration, then node $u$ is already part of the Steiner tree of $S'(Y_\textrm{last})$ by induction. 

The last bullet point follows from the fact that a dead node adds itself to a Steiner tree only if it receives token in an earlier iteration than during the previous step, which by the second bullet point means that the distance of $u$ to the set of blue clusters decreased. 

%we consistently break symmetry to smaller identifiers $Y$: if in step $j$ the dead node $u$ received a message from some blue node after $i$ steps, we know that next time it will receive the message from the same blue node after $i$ steps, unless there is now a blue node that is even closer to $u$. 
\end{proof}

%\begin{observation}
%If a node is reached with a token $Y$ in the $i$-th step, its distance from some node from cluster $Y$ is at most $i$. 
%\end{observation}
%\begin{proof}
%This also follows from a standard breadth first search argument. 
%Moreover, we need to observe that if a node decides to send a token $Y_\last$ in the $i+1$-th step, it means that it was already reached by a token $Y_\last$ in the $i$-th step, hence it is of distance at most $i$ to the cluster $Y$, hence the distance of its neighbours is at most $Y + 1$.  
%\end{proof}

Now we can mostly replicate the proof of \Cref{lem:main}. We state equivalents of observations from the proof of \Cref{lem:main} and argue that they are satisfied if they differ substantially from the arguments for \Cref{lem:main}. 
To argue about $k$-hop separation of the clusters, instead of the invariants (I) and (II) from \Cref{lem:main} we keep stronger invariants (I') and (II'). We keep invariant (III) the same. 

\begin{enumerate}
    \item[(I')] For each $i$-bit string $Y$, the set $S'_i(Y)\subseteq S'_i$ of all living nodes whose label ends in suffix $Y$ has no other living nodes $S'_i\setminus S'_i(Y)$ in its $k$-hop neighbourhood. 
    
    In other words, the set $S'_i(Y)$ is a union of some connected components of the subgraph $G[S'_i]$ induced by living nodes $S'_i$ and in the $k$-hop neighbourhood in $G$ around $S'_i(Y)$ all nodes are either dead or they do not belong to the set $S$ (they were colored by previous application of the algorithm). 
    \item[(II')] For each label $L$ and the corresponding cluster $S'_i(L)$, the related Steiner tree $T_L$ has radius at most $i \cdot k \cdot R$, where $R=O(\log^2 n)$. 
    \item[(III)] We have $|S'_{i+1}|\geq |S'_i|(1-1/2b)$.
\end{enumerate}

Now we repeat the list of observations from the analysis of \Cref{lem:main} and remark on them whenever they differ from their counterparts.

\begin{observation}\label[observation]{obs:khop_finite} Any blue cluster stops after at most $4b\log n$ steps. 
\end{observation}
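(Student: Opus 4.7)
The plan is to carry over the argument from \Cref{obs:finite} almost verbatim, since the decision rule at each blue cluster is unchanged: in any step, cluster $A$ either receives more than $|A|/(2b)$ proposals from living red nodes and accepts all of them (growing), or it does not and stops for the remainder of the phase. Only the set of \emph{which} red nodes propose is different in the $G^k$ version—namely, those within $k$ hops of some blue node, reached via the BFS described above—but the threshold for acceptance and the way $|A|$ updates when acceptance happens are identical to those in \Cref{lem:main}.

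First I would note that in any non-stopping step, the root of $A$ has verified (via convergecast on the Steiner tree of $A$) that the number of red proposers assigned to $A$ in this step is strictly greater than $|A|/(2b)$. All of those proposers are then relabeled and added as new terminal nodes of the Steiner tree of $A$ (as explained in the description of the modified step, via the oriented BFS paths pointing to the root of $A$). Consequently, the size of $A$ after this step is at least
\[
|A|\cdot\left(1+\tfrac{1}{2b}\right).
\]

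Second, I would iterate this multiplicative growth. If $A$ did not stop for $4b\log n$ consecutive steps of the phase, its size would be at least
\[
\left(1+\tfrac{1}{2b}\right)^{4b\log n} \;>\; e^{\,\log n}\;=\;n,
\]
for $b$ sufficiently large (using $(1+1/(2b))^{2b}\ge e^{1-o(1)}$), which contradicts $|A|\le n$. Therefore $A$ must stop within at most $4b\log n$ steps, proving the observation.

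The only aspect that is conceptually different from \Cref{obs:finite}, and which I would briefly flag, is that the proposers in a given step may no longer be limited to 1-hop red neighbors of $A$. However, this only potentially increases the number of proposers a cluster sees in a step, never decreases it, so the bound $|A|/(2b)$ on the accept-threshold and the resulting $(1+1/(2b))$ growth factor are unaffected. Thus no new obstacle arises, and the exponential-growth argument from the proof of \Cref{obs:finite} still delivers the $4b\log n$ bound.
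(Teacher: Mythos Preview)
Your proposal is correct and takes essentially the same approach as the paper, which simply states that the proof is identical to that of \Cref{obs:finite}. Your added remark that enlarging the set of proposers to $k$-hop red nodes can only increase the per-step growth (and hence does not invalidate the $(1+1/(2b))$-factor argument) is a helpful clarification but not a departure from the paper's reasoning.
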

\begin{proof}
The proof stays the same as in \Cref{obs:finite}. 
\end{proof}

\begin{observation}\label[observation]{obs:khop_stop} Once a blue cluster $A$ stops growing, there is no red node in its $k$-hop neighbourhood in $G$ and there never will be one in this phase. 
\end{observation}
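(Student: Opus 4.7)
The plan is to mimic the proof of \Cref{obs:stop}, replacing ``neighbor'' with ``within $k$ hops'' everywhere, and to use the BFS machinery set up for the modified algorithm (in particular \Cref{obs:bfs}) to ensure that every red node in the $k$-hop neighborhood of $A$ is in fact touched by the proposal round of the step in which $A$ stops. First I would invoke \Cref{obs:khop_finite} to fix a step $j^\star \le 4b\log n$ in which $A$ stops. Then I would consider an arbitrary living red node $u$ whose distance from $A$ in $G$ is at most $k$ at the beginning of step $j^\star$, and argue that during that step $u$ must propose to some blue cluster $B$ (not necessarily $A$).

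Concretely, since $A$ is a blue cluster, $u$ has distance at most $k$ to the set of all blue nodes, so by the BFS semantics described in the algorithm and by \Cref{obs:bfs} (item 2 in particular), $u$ receives at least one token within the $k$ BFS iterations. Being a living red node, $u$ then forwards the smallest received label $Y_1$ and is committed to propose to the blue cluster $S'(Y_1) =: B$. Independent of which cluster $B$ turns out to be, $B$'s root then decides either to grow (in which case $u$ is relabeled blue) or to stop (in which case $u$ dies along with the other red nodes proposing to $B$). In both cases $u$ is no longer a living red node after step $j^\star$.

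Finally I would close the loop by observing that after $A$ stops, its vertex set is frozen for the remainder of the phase, so its $k$-hop neighborhood in $G$ is also frozen; and that during a phase no new living red node can appear (nodes only transition from red to blue or to dead, never back), so the previous paragraph's conclusion persists for all later steps of the phase.

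The main obstacle, and the only place the argument differs nontrivially from \Cref{obs:stop}, is justifying that every such $u$ receives a token; this is precisely what makes the BFS book-keeping in the algorithm design (including the tie-breaking rules and the way dead nodes forward tokens) necessary, and it is cleanly handled by appealing to \Cref{obs:bfs} rather than re-analyzing the BFS from scratch.
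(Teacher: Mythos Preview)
Your proposal is correct and follows essentially the same approach as the paper's proof: fix the stopping step, argue that every red node within $k$ hops of $A$ ends up proposing to \emph{some} blue cluster and hence becomes blue or dies, and then observe that $A$ is frozen thereafter. The only cosmetic difference is that you invoke \Cref{obs:bfs} (item 2) to guarantee the red node receives a token, whereas the paper argues this directly by noting that even if intermediate dead nodes forward a label other than $A$'s, they still forward \emph{some} blue label from a cluster at the same or smaller distance.
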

\begin{proof} 
As in \Cref{lem:main}, consider the step in which cluster $A$ stops. In that step, each red node in its $k$-hop neighbourhood in $G$ (if there is one) either requested to join $A$ or some other blue cluster. 
This is because even if a dead node $u$ did not forward the token with the identifier of $A$, it forwarded a token of a blue cluster $A'$ that is of the same distance to $u$ or even closer. 

Any red node $v$ that requested to join some blue cluster either adopts a blue label or dies. In either case, $v$ is not a living red node anymore (and it will never become one). From this point onward, this blue cluster $A$ never grows or shrinks. 
\end{proof}

\begin{observation}\label[observation]{obs:khop_inv_two} In each step, the radius of the Steiner tree of each blue cluster grows by at most $k$, while the radius of the Steiner tree of each red cluster does not grow. This implies invariant (II').
\end{observation}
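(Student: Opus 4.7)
The plan is to unpack how each Steiner tree can be modified during a single step of the modified (power-graph) algorithm, and then combine this with the bound on the number of steps per phase to derive invariant (II').

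First I would analyze blue Steiner trees. Fix a step and a blue cluster $S'(Y)$ with root $r$. All modifications to the Steiner tree $T_Y$ during this step are driven by the $k$-iteration BFS; concretely, edges are added to $T_Y$ only when some node $u$ adds itself as a (terminal or nonterminal) child of a neighbor $v$ that forwarded a token carrying the label $Y$. By \Cref{obs:bfs}(ii), a token is forwarded in iteration $j$ only by a node whose BFS-distance to the initial set of blue vertices is exactly $j-1$, so any new node joining $T_Y$ sits at distance at most $k$ from some blue node that was already in $T_Y$ at the start of the step. Since that ancestor blue node is connected to the root $r$ by a pre-existing path of $T_Y$, the newly added node lies at distance at most $(\text{old radius of }T_Y) + k$ from $r$. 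Hence the radius of $T_Y$ grows by at most $k$ in this step.

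Next I would handle red Steiner trees. Fix a step and a red cluster with Steiner tree $T_Z$. During the step, the BFS only attaches nodes to Steiner trees whose label appears on a forwarded token, and by \Cref{obs:bfs}(iii) such labels correspond to blue clusters at the beginning of the step. Since $Z$ is red, no new node is attached to $T_Z$ during this step; the only change is that some terminal nodes of $T_Z$ may be relabeled blue and consequently become nonterminals of $T_Z$. The underlying tree structure, and in particular its radius, therefore does not grow.

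Finally I would derive invariant (II'). Combining the two bullets above: in each of the $R = O(\log^2 n)$ steps of a phase, the Steiner tree of any cluster that is blue during this phase grows by at most $k$ in radius, while the Steiner tree of any cluster that is red during this phase does not grow. After the phase ends, each surviving cluster either retained its identity (red) or absorbed newly blue nodes via at most $R$ step-growths (blue), so the radius of any Steiner tree increases by at most $k \cdot R$ during one phase. Starting from radius $0$ at phase $0$ and iterating over $i$ phases, we obtain radius at most $i \cdot k \cdot R$ at the beginning of phase $i$, which is exactly invariant (II').

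The only subtle point, and the main thing I want to check carefully, is that the BFS forwarding rule in the dead-node case does not secretly cause a blue cluster's Steiner tree to grow by more than $k$ hops. This is why I rely on \Cref{obs:bfs}(ii): even when a dead node forwards an \emph{older} token $Y_{\text{last}}$ rather than a token it just received, it still forwards at its current BFS iteration index, so the standard BFS invariant relating iteration number to distance from the blue set is preserved. Once that is in hand, the bound of $k$ per step is immediate and the rest of the argument is a routine summation.
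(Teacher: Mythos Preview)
Your proof is correct and, in fact, considerably more detailed than the paper's own treatment: the paper states \Cref{obs:khop_inv_two} without proof, exactly as it did for the analogous \Cref{obs:inv_two} in the main text. Your analysis of the per-step growth via the BFS iteration count, together with the summation over $R$ steps and $i$ phases, is the right argument.

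One small inaccuracy worth tightening: in the blue case you assert that ``any new node joining $T_Y$ sits at distance at most $k$ from some \emph{blue} node that was already in $T_Y$.'' This need not be true. Following parent pointers in $T_Y$ from a newly added node $w$, the first old node you reach may be a \emph{dead} nonterminal that was already in $T_Y$ from a previous step and is now re-forwarding $Y_{\mathrm{last}} = Y$ under the dead-node rule; such a node is not blue, and the nearest blue node of $S'(Y)$ along the tree may be much farther than $k$ hops away. The correct (and sufficient) statement is that $w$ is within $k$ tree-hops of \emph{some node that was already in $T_Y$ at the start of the step} (blue or dead). Your own inductive step---BFS iteration strictly decreases along the parent chain until an old node is hit---proves exactly this, and the radius bound follows as you wrote. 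Similarly, in the red case the fact that forwarded labels are always blue-cluster labels comes directly from how tokens are initialized and from the dead-node rule (by induction on steps), rather than from \Cref{obs:bfs}(iii); the conclusion is unaffected.
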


\begin{observation}\label[observation]{obs:khop_inv_three} The total number of red vertices in $S'_i(Y)$ that die during this phase is at most $|S'_i(Y)|/(2b)$. This implies invariant (III).
\end{observation}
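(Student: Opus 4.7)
My plan is to mirror the proof of \Cref{obs:inv_three} essentially verbatim, with one small verification point that is specific to the $k$-hop variant. The two ingredients are (a) a per-cluster bound on how many red vertices a stopping blue cluster can kill, and (b) a charging argument ensuring no red vertex is counted twice.

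First, I would invoke \Cref{obs:khop_finite} together with the stopping rule (unchanged from \Cref{lem:main}): each blue cluster stops at most once during the phase, and in the step when a cluster $A$ stops, the number of red vertices that proposed to $A$ in that step is at most $|A|/(2b)$ (otherwise $A$ would have accepted and grown). These are exactly the red vertices that die on account of $A$ during this phase, since a red vertex dies only when its proposal is rejected by the unique blue cluster it proposed to.

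Second, I need to charge each dying red vertex to a unique blue cluster. For this I appeal to the $k$-hop BFS description of a step: a living red vertex $u$ that ends up proposing in a given step forwards the single token with the smallest label $Y_1$ among the tokens that reach it first, and correspondingly proposes to the single blue cluster $S'(Y_1)$ (via the Steiner tree it just joined). So every dying red vertex is charged to exactly one stopping blue cluster that step, and no red vertex is counted twice across the phase. Summing over the stopping blue clusters $A$ contained in $S'_i(Y)$, and using the disjointness of the vertex sets of distinct blue clusters together with $A \subseteq S'_i(Y)$, we obtain a bound of $\sum_A |A|/(2b) \leq |S'_i(Y)|/(2b)$ on the total number of red vertices in $S'_i(Y)$ that die during the phase.

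Applying this bound to every $i$-bit suffix $Y$ and summing over $Y$ yields $|S'_{i+1}| \geq (1 - 1/(2b))|S'_i|$, which is invariant (III). The only mild obstacle is the uniqueness-of-proposal check in the $k$-hop setting, where dead nodes also participate in forwarding tokens; I would verify this by reading off \Cref{obs:bfs}, which guarantees that the BFS orientation yields a well-defined root (blue cluster) for each proposer. No new ideas are required beyond the original argument.
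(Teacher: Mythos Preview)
Your proposal is correct and follows essentially the same approach as the paper, which in fact simply states that ``the proof stays the same as in \Cref{obs:inv_three}.'' Your added care about uniqueness of the proposed-to blue cluster in the $k$-hop BFS variant is a reasonable sanity check, but the underlying charging argument is identical to the original.
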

\begin{proof} The proof stays the same as in \Cref{obs:inv_three}. 
\end{proof}

Now we can bound the number of Steiner trees that use each particular edge. 

\begin{observation}\label[observation]{obs:khop_congestion} In construction of one color class of the decomposition, each edge is used in $O(\log n \cdot \min(k + \log^2 n))$ Steiner trees. Thus, overall, each edge is used in $O(\log^2 n \min(k + \log^2 n))$ Steiner trees. 
\end{observation}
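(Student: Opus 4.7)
The plan is to count, per phase, how many distinct Steiner trees can contain a given edge $(u,v)$, and then sum over the $b = O(\log n)$ phases of one color class and over the $O(\log n)$ color classes. The key will be to track, for each node $w \in \{u,v\}$, how many different trees $w$ joins during a single phase; since an edge is added to the tree of cluster $S'(Y)$ only when one of its endpoints adds itself (as terminal or nonterminal) to $T_Y$ together with an oriented edge, bounding the number of per-node tree changes bounds the edge-wise congestion.

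First I would observe that a node that is \emph{living and blue} throughout a phase never changes its label, so it appears in exactly one Steiner tree during that phase, and any edge incident to it is added to that tree at most once. A living red node may join exactly one blue cluster in the phase (after which it either becomes blue or dies), so it too contributes the edge to at most one tree in the phase. The only source of multiplicity is therefore \emph{dead} nodes, which participate in the BFS of every step and may be appended to different Steiner trees across steps. Here I would invoke item~(4) of \Cref{obs:bfs}: if a dead node $u$ joins tree $T_{Y}$ in step $j$ and later joins $T_{\tilde Y}$ in step $\tilde j > j$, then the distance from $u$ to the set of blue nodes strictly decreased between step $j$ and step $\tilde j$. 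Consequently the sequence of distances at which $u$ gets reattached is strictly decreasing.

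From this monotonicity I would extract the two bounds whose minimum is claimed. On the one hand, the distance from $u$ to the nearest blue node lies in $\{1,2,\dots,k\}$, so $u$ is appended to at most $k$ distinct Steiner trees per phase. On the other hand, each phase consists of only $R = O(\log^2 n)$ steps, so $u$ can be reattached at most $R = O(\log^2 n)$ times. Therefore each node appears in $O(\min(k,\log^2 n))$ Steiner trees per phase, and since the edge $(u,v)$ is attributed to a tree only through such an attachment of $u$ or $v$, the edge is in $O(\min(k,\log^2 n))$ trees per phase. Summing over $b=O(\log n)$ phases yields $O(\log n \cdot \min(k,\log^2 n))$ Steiner trees per color class, and over $O(\log n)$ color classes we get the claimed $O(\log^2 n \cdot \min(k,\log^2 n))$ bound.

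The main obstacle, and the step I would be most careful about, is justifying that dead nodes are \emph{only} reattached when the blue-distance strictly decreases; this is exactly the role of the optimization described before the statement (symmetry-breaking toward smaller IDs in the BFS, plus running the BFS from stopped clusters as well). I would therefore expand the BFS tie-breaking rule enough to verify item~(4) of \Cref{obs:bfs} rigorously: if $u$ chose token $Y_\mathrm{last}$ in step $j$ at BFS-iteration $i$, then in step $\tilde j>j$ with $u$ reattaching, $u$ must have received a token in an iteration $i'<i$ (otherwise, by the deterministic tie-break, it would simply re-forward $Y_\mathrm{last}$), which forces the strict decrease in distance. Everything else is a direct summation.
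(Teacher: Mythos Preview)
Your proposal is correct and follows essentially the same approach as the paper: both arguments obtain the $O(\log^2 n)$ per-phase bound from the step count (at most one new tree per step) and the $O(k)$ per-phase bound from item~(4) of \Cref{obs:bfs} (strictly decreasing blue-distance upon reattachment), then take the minimum and sum over phases and colors. Your write-up is a bit more detailed and node-centric where the paper's proof is terse and edge-centric, and your final paragraph re-deriving item~(4) is unnecessary since that observation is already proved separately, but the substance is the same.
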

\begin{proof}
We run through $O(\log n)$ phases and in each phase we run $O(\log^2 n)$ steps. 
First, note that during each step, each edge $uv$ will be used by at most one additional Steiner tree. 
This is because the edge is added only in the case when either $u$ sends a token to $v$, or the other way around.

We also claim that during all the steps of a given phase, one edge $uv$ is used for a Steiner tree at most $k$ times. This follows from the last bullet point of \cref{obs:bfs}. 
\end{proof}

Finally, we bound the running time. 
We have $O(\log n)$ color classes, each one is constructed in $O(\log n)$ phases, where each phase has $O(\log^2 n)$ steps. 
For each step, we need to run breadth first search for $O(k)$ steps and broadcast information to root via Steiner trees, which takes $O(k \log^3 n \cdot \log n \min(k + \log^2 n))$ where the first term is the diameter of the underlying Steiner tree that is bounded by \Cref{obs:khop_inv_two} and the second term is due to number of Steiner trees per edge that was bounded by \Cref{obs:khop_congestion}. 
This implies running time $O(k \log^8 n \min(k + \log^2 n))$. 
\end{proof}
}

\end{document}